\providecommand{\U}[1]{\protect\rule{.1in}{.1in}}
\newtheorem{theorem}{Theorem}
\newtheorem{corollary}[theorem]{Corollary}
\newtheorem{definition}[theorem]{Definition}
\newtheorem{proposition}[theorem]{Proposition}
\newtheorem{remark}[theorem]{Remark}
\newenvironment{proof}[1][Proof]{\noindent\textbf{#1.} }{\ \rule{0.5em}{0.5em}}
\begin{document}

\title{Asymptotics of solutions for a basic case of\\fluid structure interaction}
\author{Christoph Boeckle\thanks{Supported by the Swiss National Science Foundation
(Grant No. 200021-124403).}~\thanks{Corresponding author.}\\{\small Theoretical Physics Department}\\{\small University of Geneva, Switzerland}\\{\small christoph.boeckle@unige.ch}
\and Peter Wittwer\thanks{Supported by the Swiss National Science Foundation (Grant
No. 200021-124403).}\\{\small Theoretical Physics Department}\\{\small University of Geneva, Switzerland}\\{\small peter.wittwer@unige.ch}}
\date{\today}
\maketitle

\begin{abstract}
We consider the Navier--Stokes equations in a half-plane with a drift term
parallel to the boundary and a small source term of compact support. We
provide detailed information on the behavior of the velocity and the vorticity
at infinity in terms of an asymptotic expansion at large distances from the
boundary. The expansion is universal in the sense that it only depends on the
source term through some multiplicative constants. This expansion is identical
to the one for the problem of an exterior flow around a small body moving at
constant velocity parallel to the boundary, and can be used as an artificial
boundary condition on the edges of truncated domains for numerical simulations.

\bigskip

\noindent\textbf{Keywords:} Navier-Stokes equations; asymptotic expansions;
exterior domain; fluid-structure interaction

\end{abstract}
\tableofcontents

\section{Introduction}

In what follows, we study the steady Navier--Stokes equations in the
half-plane $\Omega_{+}=\left\{  (x,y)\in\mathbb{R}^{2}\mid y>1\right\}  $ with
a drift term parallel to the boundary, a force of compact support, and zero
Dirichlet boundary conditions at the boundary of the half-plane and at
infinity.
\begin{align}
\partial_{x}\boldsymbol{u}+\boldsymbol{u}\cdot\mathbf{\nabla}\boldsymbol{u}%
+\mathbf{\nabla}p\mathbf{-}\Delta\boldsymbol{u} &  =\boldsymbol{F}%
~,\label{eq:nssteadyforce}\\
\mathbf{\nabla}\cdot\boldsymbol{u} &  =0~,\label{eq:incompressibility}%
\end{align}
where $\boldsymbol{F}$ is smooth and of compact support in $\Omega_{+}%
$,\textit{\ i.e.}, $\boldsymbol{F}\in C_{c}^{\infty}(\Omega_{+})$, subject to
the boundary conditions%
\begin{align}
\boldsymbol{u}(x,1) &  =0~,\hspace{1cm}x\in\mathbb{R}~,\label{eq:b0}\\
\lim\limits_{\mathbf{x\rightarrow\infty}}\boldsymbol{u}\mathbf{(x)} &
=0~.\label{eq:b1}%
\end{align}
For small forces, existence of a solution for this system together with basic
bounds on the decay at infinity was proved in
\cite{Hillairet.Wittwer-Existenceofstationary2009}, and uniqueness of
solutions was proved in \cite{Hillairet.Wittwer-Asymptoticdescriptionof2011}
in a very general context. In
\cite{Boeckle.Wittwer-Decayestimatessolutions2011} additional information on
the decay at infinity was obtained. See
\cite{Guo.etal-Leadingorderasymptotics2012}, where the velocity field has been
analyzed to leading order in a similar three dimensional case. For a general
introduction to the method used in this series of papers, see
\cite{Heuveline.Wittwer-ExteriorFlowsat2010}.

Note that the asymptotic behavior is identical to the one for the problem of
an exterior flow without force around a small body moving parallel to the wall
at constant velocity described in a frame comoving with the body (see
\cite{Hillairet.Wittwer-Asymptoticdescriptionof2011}). The explicit asymptotes
of the unique solution to (\ref{eq:nssteadyforce})--(\ref{eq:b1}) may thus in
particular be used as an artificial boundary condition for numerical
simulations of the\ aforementioned flow with a body, see
\cite{Boeckle.Wittwer-Artificialboundaryconditions2012}. Artificial boundary
conditions obtained this way have already been applied with success in the
numerical resolution of two and three-dimensional flows in the full space (see
\cite{Boenisch.etal-Adaptiveboundaryconditions2005},
\cite{Boenisch.etal-Secondorderadaptive2008},
\cite{Heuveline.Wittwer-AdaptiveBoundaryConditions2010} and
\cite{Latt.etal-Simulatingexteriordomain2006}).

In the remainder of this paper, when we invoke "the solution", we refer to the
solution constructed in \cite{Hillairet.Wittwer-Existenceofstationary2009},
\cite{Boeckle.Wittwer-Decayestimatessolutions2011} and
\cite{Hillairet.Wittwer-Asymptoticdescriptionof2011}.

Our main result is summarized in the following theorem.

\begin{theorem}
\label{thm:mainresult}Let $\boldsymbol{u}=(u,v)$ and $p$ be the solution to
equations (\ref{eq:nssteadyforce})--(\ref{eq:b1}) for $\boldsymbol{F}$ small
and let $\omega$ be the vorticity. Then, there exist constants $c_{1},c_{2}$
such that for $\varepsilon>0$,%
\begin{align}
\lim\limits_{y\rightarrow\infty}\sup\limits_{x\in\mathbb{R}}%
|y^{5/2-\varepsilon}(u(x,y)-u_{\mathrm{as}}(x,y))| &  =0~,\label{eq:asu}\\
\lim\limits_{y\rightarrow\infty}\sup\limits_{x\in\mathbb{R}}%
|y^{5/2-\varepsilon}(v(x,y)-v_{\mathrm{as}}(x,y))| &  =0~,\label{eq:asv}\\
\lim\limits_{y\rightarrow\infty}\sup\limits_{x\in\mathbb{R}}%
|y^{9/2-\varepsilon}(\omega(x,y)-\omega_{\mathrm{as}}(x,y))| &
=0~,\label{eq:asw}%
\end{align}
with%
\begin{align}
u_{\mathrm{as}}(x,y) &  =\frac{c_{1}}{y^{3/2}}\varphi_{1}(x/y)+\frac{c_{1}%
}{y^{2}}\varphi_{2,1}(x/y)+\frac{c_{2}}{y^{2}}\varphi_{2,2}(x/y)-\frac{c_{1}%
}{y^{2}}\eta_{W}(x/y^{2})-\frac{c_{1}}{y^{3}}\eta_{B}(x/y^{2}%
)~,\label{eq:asstructu}\\
v_{\mathrm{as}}(x,y) &  =\frac{c_{1}}{y^{3/2}}\psi_{1}(x/y)+\frac{c_{1}}%
{y^{2}}\psi_{2}(x/y)+\frac{c_{2}}{y^{2}}\psi_{2,2}(x/y)+\frac{c_{1}}{y^{3}%
}\omega_{W}(x/y^{2})+\frac{c_{1}}{y^{4}}\omega_{B}(x/y^{2}%
)~,\label{eq:asstructv}\\
\omega_{\mathrm{as}}(x,y) &  =\frac{c_{1}}{y^{3}}\omega_{W}(x/y^{2}%
)+\frac{c_{1}}{y^{4}}\omega_{B}(x/y^{2})~,\label{eq:asstructw}%
\end{align}
and functions $\varphi_{1}$, $\varphi_{2,1}$, $\varphi_{2,2}$, $\psi_{1}$,
$\psi_{2,1}$, $\psi_{2,2}$, $\eta_{W}$, $\eta_{B}$, $\omega_{W}$ and
$\omega_{B}$ as given in Appendix~\ref{sec:explicitasfun}.
\end{theorem}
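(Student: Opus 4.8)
\noindent\emph{Outline of the argument.}
The plan is to work in Fourier space with respect to the coordinate $x$ parallel to the wall, following the method of \cite{Heuveline.Wittwer-ExteriorFlowsat2010}, \cite{Hillairet.Wittwer-Existenceofstationary2009} and \cite{Boeckle.Wittwer-Decayestimatessolutions2011}. Writing $k$ for the dual variable, equations (\ref{eq:nssteadyforce})--(\ref{eq:incompressibility}) become, for each fixed $k$, a linear system of ordinary differential equations in $y$ in which the nonlinearity $\boldsymbol{u}\cdot\mathbf{\nabla}\boldsymbol{u}=\mathbf{\nabla}\cdot(\boldsymbol{u}\otimes\boldsymbol{u})$ and the force $\boldsymbol{F}$ appear as an inhomogeneity. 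I would recast the solution as a fixed point $\boldsymbol{u}=\mathcal{L}[\boldsymbol{G}]$, where $\boldsymbol{G}=\boldsymbol{F}-\mathbf{\nabla}\cdot(\boldsymbol{u}\otimes\boldsymbol{u})$ is an effective force and $\mathcal{L}$ is the half-plane Oseen resolvent (with the Dirichlet conditions (\ref{eq:b0})--(\ref{eq:b1}) built in). The decay estimates of \cite{Boeckle.Wittwer-Decayestimatessolutions2011} guarantee that $\boldsymbol{G}$ decays fast enough that finitely many of its moments $\int_{\Omega_{+}}\boldsymbol{G}$, $\int_{\Omega_{+}}x\,\boldsymbol{G}$, etc., are well defined, and the constants $c_{1},c_{2}$ of Theorem~\ref{thm:mainresult} will be identified with (normalized) components of the net effective force and an associated first moment.

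The large-$y$ asymptotics are then extracted from the behavior of the symbol of $\mathcal{L}$ near $k=0$, since the far field is governed by low frequencies. Taylor-expanding the kernel about $k=0$ produces a hierarchy of terms; the Oseen symbol has a branch point at $k=0$, and inverse Fourier transform of the corresponding leading contributions (of the form $|k|^{1/2}$ times exponential decay in $(y-1)|k|$) yields the half-integer powers $y^{-3/2}$ with the profiles $\varphi_{1}(x/y),\psi_{1}(x/y)$, while the next order gives the $y^{-2}$ profiles $\varphi_{2,1},\varphi_{2,2},\psi_{2},\psi_{2,2}$. The genuinely parabolic pieces $\eta_{W}(x/y^{2})$, $\omega_{W}(x/y^{2})$, $\eta_{B}(x/y^{2})$, $\omega_{B}(x/y^{2})$ come from the vorticity, which solves an Oseen transport equation and is concentrated in a parabolic wake of width $\sim\sqrt{y}$; it is therefore naturally written in the self-similar variable $x/y^{2}$, consistent with (\ref{eq:asstructw}). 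The explicit profiles of Appendix~\ref{sec:explicitasfun} are obtained by inverting the leading symbol terms, and the two regimes — the outer variable $x/y$ (elliptic/potential part) and the wake variable $x/y^{2}$ (vortical part) — are reconciled by matched asymptotics in the overlap region.

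The remainder $\boldsymbol{u}-\boldsymbol{u}_{\mathrm{as}}$, and likewise $\omega-\omega_{\mathrm{as}}$, is then split into three contributions: the Taylor remainder of the symbol near $k=0$, which is one order of $y$ faster than the last term kept; the contribution of $|k|$ bounded away from $0$, which is exponentially small in $y$ because of the exponential decay of the kernel in $(y-1)|k|$; and the contribution of the tail of $\boldsymbol{G}$ beyond the moments used. Combining these with the weighted bounds of \cite{Boeckle.Wittwer-Decayestimatessolutions2011} produces the rates $y^{5/2-\varepsilon}$ for $u,v$ and $y^{9/2-\varepsilon}$ for $\omega$, the loss of $\varepsilon$ reflecting a logarithmically divergent moment of the effective force that must be handled with an $\varepsilon$ of room. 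The vorticity estimate (\ref{eq:asw}) is sharper simply because $\omega=\partial_{x}v-\partial_{y}u$ kills the slowly decaying potential part, leaving only the wake terms.

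I expect the main obstacle to be the control of the nonlinear term: although $\boldsymbol{u}\otimes\boldsymbol{u}$ is not of compact support, one must show that it only renormalizes the constants $c_{1},c_{2}$ and does not generate new profiles at the orders $y^{-3/2}$ and $y^{-2}$. This requires that the a priori decay of $\boldsymbol{u}$ from \cite{Boeckle.Wittwer-Decayestimatessolutions2011} be strong enough that the relevant moments of $\boldsymbol{u}\otimes\boldsymbol{u}$ converge, together with a bootstrap that feeds the improved asymptotics back into the fixed-point equation $\boldsymbol{u}=\mathcal{L}[\boldsymbol{F}-\mathbf{\nabla}\cdot(\boldsymbol{u}\otimes\boldsymbol{u})]$ without losing the sharp exponents; smallness of $\boldsymbol{F}$ is used to make this iteration converge.
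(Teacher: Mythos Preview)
Your high-level strategy matches the paper's: work in Fourier space in the wall-parallel variable, use the fixed-point representation from \cite{Hillairet.Wittwer-Existenceofstationary2009}, extract the asymptotes from the small-$k$ behavior under the two scalings $k\sim 1/y$ and $k\sim 1/y^{2}$, identify $c_{1},c_{2}$ as moments of the nonlinear source, and bootstrap. The paper in fact reduces Theorem~\ref{thm:mainresult} to a purely Fourier-side statement (Theorem~\ref{thm:mainfourier}) and then reads off (\ref{eq:asu})--(\ref{eq:asw}) via the elementary bound (\ref{eq:supboundbyBapqbound}).

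Where your sketch diverges from the paper is in the mechanism of the bootstrap, and this is the step that actually carries the argument. You phrase it as ``Taylor-expanding the symbol near $k=0$'' plus a matched-asymptotics reconciliation; the paper never Taylor-expands a symbol. Instead it writes each component of the solution explicitly as a semigroup integral (e.g.\ (\ref{eq:psi11}), (\ref{eq:eta11})), computes the pointwise limit of $t^{p}\hat f(k/t^{r},t)$ for the appropriate $(p,r)$, and then shows by direct estimation (Propositions~\ref{prop:sgL1}--\ref{prop:sgk3}) that the difference lies in a strictly smaller $\mathcal{B}_{\alpha,p',q'}$ space. The two scalings are not matched a posteriori; they are encoded from the start in the two weights $\bar\mu_{\alpha}$, $\tilde\mu_{\alpha}$ of the $\mathcal{B}_{\alpha,p,q}$ norm.

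The point your sketch does not anticipate, and which is the genuine technical content of the paper, is how the bootstrap closes. The naive convolution estimate (Corollary~\ref{corr:optConvBapq}) applied to $\hat\psi_{\mathrm{as,1}}\ast\hat\omega_{\mathrm{as,1}}$ is half a power of $t$ too weak to proceed to second order. The paper recovers this half-power by passing to \emph{direct space}: using that $\omega_{W}(z)$ vanishes for $z<0$ and decays exponentially as $z\to 0^{+}$, together with $|z|^{n+3/2}|\psi_{1}^{(n)}(z)|\le\mathrm{const.}$, one computes the product $\psi_{1}(x/y)\,\omega_{W}(x/y^{2})$ explicitly and shows $\hat\psi_{\mathrm{as,1}}\ast\hat\omega_{\mathrm{as,1}}\in\mathcal{B}_{\alpha,\infty,4}$ (Section~\ref{sec:finalBoundOpt}). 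The same trick is needed again for $\partial_{k}\hat Q_{1}$ and once more after extracting the second-order potential terms (Section~\ref{sec:finalImprovements}). Without this direct-space improvement the iteration stalls at the leading order and the exponents $5/2$ and $9/2$ in (\ref{eq:asu})--(\ref{eq:asw}) are unreachable; this is the step you would need to supply to turn your outline into a proof.
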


\begin{remark}
This theorem is an immediate consequence of Theorem~\ref{thm:mainfourier} in
Section~\ref{sec:funeq}.
\end{remark}

\begin{itemize}

\item \textit{The functions }$\varphi_{1}$\textit{, }$\varphi_{2,1}$\textit{,
}$\varphi_{2,2}$\textit{, }$\psi_{1}$\textit{, }$\psi_{2,1}$\textit{, }%
$\psi_{2,2}$\textit{, }$\eta_{W}$\textit{, }$\eta_{B}$\textit{, }$\omega_{W}%
$\textit{ and }$\omega_{B}$\textit{ are universal, }i.e.\textit{, independent
of }$\boldsymbol{F}$\textit{.}

\item \textit{The power }$5/2$\textit{ in the limits (\ref{eq:asu}) and
(\ref{eq:asv}) is sharp, whereas the power }$9/2$\textit{ in (\ref{eq:asw})
can probably be improved by }$1/2$\textit{ at the price of additional
computations.}

\item \textit{Some terms in (\ref{eq:asstructu}) and (\ref{eq:asstructv}) are
unimportant in view of the limits (\ref{eq:asu}) and (\ref{eq:asv}), but they
are included such as to form a divergence-free velocity field in pairs of
successive terms of }$u_{\mathrm{as}}$\textit{ and }$v_{\mathrm{as}}$\textit{
and such as to have two orders in both of the two scalings }$x/y$\textit{ and
}$x/y^{2}$\textit{.}

\item \textit{The explicit forms of }$u_{\mathrm{as}}$\textit{ and
}$v_{\mathrm{as}}$\textit{ imply that}%
\begin{align*}
\lim\limits_{y\rightarrow\infty}y^{3/2}u(xy,y) &  =c_{1}\varphi_{1}(x)~,\\
\lim\limits_{y\rightarrow\infty}y^{3/2}v(xy,y) &  =c_{1}\psi_{1}(x)~,
\end{align*}
\textit{which shows that the bounds given in
\cite{Hillairet.Wittwer-Existenceofstationary2009} are sharp. Moreover, the
components of the velocity field associated to the functions }$\varphi_{i}%
$\textit{ and }$\psi_{i}$\textit{ are harmonic. The asymptotic expansion is
thus given by the superposition of a potential flow and a flow carrying the
vorticity, which is concentrated, to leading order, in a parabolic region
called the "wake", in the sense that}%
\[
\lim\limits_{y\rightarrow\infty}y^{3}\omega_{\mathrm{as}}(xy^{2}%
,y)=c_{1}\omega_{W}(x)~.
\]
\textit{In contrast to the case of an exterior problem in }$\mathbb{R}^{2}%
$\textit{ (see for example \cite{Boenisch.etal-Secondorderadaptive2008}), the
vorticity is however not exponentially small outside the wake, since we have
in particular, for all }$x\in\mathbb{R}$\textit{,}%
\[
\lim\limits_{y\rightarrow\infty}y^{4}\omega_{\mathrm{as}}(x,y)=c_{1}\omega
_{B}\left(  0\right)  \neq0~,
\]
\textit{which shows that a background of vorticity is created by the
interaction of the fluid with the boundary.}

\item \textit{This asymptotic expansion exhibits two scalings, whereas the
three dimensional analogue (see \cite{Guo.etal-Leadingorderasymptotics2012})
exhibits only one (the analogue to the }$x/y$\textit{ scaling). In addition,
the current expansion is sharp for all components of the velocity field and
takes into account an additional order, necessary to reveal the background of
vorticity outside the wake.}

\item \textit{The constants }$c_{1}$\textit{ and }$c_{2}$\textit{
are\ expressed in terms of the solution, in (\ref{eq:c1}) and (\ref{eq:c2})
respectively.}

\item \textit{These results confirm the conjecture concerning the vorticity of
the problem described in \cite{Hillairet.Wittwer-vorticityofOseen2008}. In the
present paper the asymptotic behavior is known modulo the constants }$c_{1}%
$\textit{ and }$c_{2}$\textit{, whereas the conjecture had three undetermined
constants in its representation.}
\end{itemize}

\bigskip The rest of this paper is organized as follows. In
Section~\ref{sec:functionSpaces} we recall the functional framework defined in
\cite{Hillairet.Wittwer-Existenceofstationary2009} in which the solutions were
constructed. In Section~\ref{sec:funeq} we also recall the map defined in
\cite{Hillairet.Wittwer-Existenceofstationary2009} which yielded the solution
in terms of its fixed point. We then present a new result which allows to
improve the bounds on the solution. In Section~\ref{sec:asymptoticTerms} we
first extract the leading order terms of the velocity and vorticity. Using
these terms, we then improve the bounds from Section~\ref{sec:funeq} and
extract the next order of the asymptotic expansion. The appendix contains an
explicit representation of the asymptotic terms, as well as various technical
propositions and details of computations used in the main sections.

\section{Functional framework\label{sec:functionSpaces}}

We first recall the functional framework of
\cite{Hillairet.Wittwer-Existenceofstationary2009}.

\begin{definition}
\label{def:fourier}Let $\hat{f}$ be a complex valued function on $\Omega_{+}$.
Then, we define the inverse Fourier transform $f=\mathcal{F}^{-1}[\hat{f}]$ by
the equation,
\[
f(x,y)=\mathcal{F}^{-1}[\hat{f}](x,y)=\frac{1}{2\pi}\int_{\mathbb{R}}%
e^{-ikx}\hat{f}(k,y)dk~,
\]
and $\hat{h}=\hat{f}\ast\hat{g}$ by%
\[
\hat{h}(k,y)=(\hat{f}\ast\hat{g})(k,y)=\frac{1}{2\pi}\int_{\mathbb{R}}\hat
{f}(k-k^{\prime},y)\hat{g}(k^{\prime},y)dk^{\prime}~,
\]
whenever the integrals make sense. We note that for functions $f,g$ which are
smooth and of compact support in $\Omega_{+}$ we have $f=\mathcal{F}^{-1}%
[\hat{f}]$, and that $fg=\mathcal{F}^{-1}[\hat{f}\ast\hat{g}]$, where%
\[
\hat{f}(k,y)=\mathcal{F}[f](k,y)=\int_{\mathbb{R}}e^{ikx}f(x,y)dx~,
\]
and similarly $\hat{g}=\mathcal{F}[g]$.
\end{definition}

Whereas in direct space we use the variables $(x,y)$, in Fourier space we use
the variables $(k,t)$, where $k$ is the Fourier-conjugated variable of $x$ and
$y\equiv t$ (this choice of notation was made to remain consistent with
\cite{Hillairet.Wittwer-Existenceofstationary2009}).

\begin{definition}
\label{def:mu}Let $\alpha$, $r\geq0$, $k\in\mathbb{R}$ and $t\geq1$, and let
\[
\mu_{\alpha,r}(k,t)=\frac{1}{1+\left(  |k|t^{r}\right)  ^{\alpha}}~.
\]
We set $\bar{\mu}_{\alpha}(k,t)=\mu_{\alpha,1}(k,t)$, $\tilde{\mu}_{\alpha
}(k,t)=\mu_{\alpha,2}(k,t)$.
\end{definition}

\begin{definition}
\label{def:BapqSpaces}We define, for fixed $\alpha\geq0$, and $p$, $q$
$\in\mathbb{R}$, $\mathcal{B}_{\alpha,p,q}$ to be the Banach space of
functions $\hat{f}\in C(\mathbb{R}\setminus\{0\}\times\lbrack1,\infty
),\mathbb{C})$, for which the norm%
\[
\Vert\hat{f};\mathcal{B}_{\alpha,p,q}\Vert=\sup_{t\geq1}\sup_{k\in
\mathbb{R}\setminus\{0\}}\frac{|\hat{f}(k,t)|}{\frac{1}{t^{p}}\bar{\mu
}_{\alpha}(k,t)+\frac{1}{t^{q}}\tilde{\mu}_{\alpha}(k,t)}%
\]
is finite. The notations $\mathcal{B}_{\alpha,p,\infty}$ and $\mathcal{B}%
_{\alpha,\infty,q}$ are used for spaces of functions for which the norms%
\[
\Vert\hat{f};\mathcal{B}_{\alpha,p,\infty}\Vert=\sup_{t\geq1}\sup
_{k\in\mathbb{R}\setminus\{0\}}\frac{|\hat{f}(k,t)|}{\frac{1}{t^{p}}\bar{\mu
}_{\alpha}(k,t)}%
\]
and%
\[
\Vert\hat{f};\mathcal{B}_{\alpha,\infty,q}\Vert=\sup_{t\geq1}\sup
_{k\in\mathbb{R}\setminus\{0\}}\frac{|\hat{f}(k,t)|}{\frac{1}{t^{q}}\tilde
{\mu}_{\alpha}(k,t)}%
\]
are finite, respectively.
\end{definition}

\begin{remark}
\textit{The following elementary properties of the spaces }$\mathcal{B}%
_{\alpha,p,q}$\textit{ will be routinely used without mention:}
\end{remark}

\begin{itemize}
\item for $\alpha\geq0$ and $p$, $q\in\mathbb{R}$, we have%
\[
\mathcal{B}_{\alpha,p,q}\subset\mathcal{B}_{\alpha,\min\{p,q\},\infty}~.
\]

\item \textit{if }$\alpha$\textit{, }$\alpha^{\prime}\geq0$\textit{, and }%
$p$\textit{, }$p^{\prime}$\textit{, }$q$\textit{, }$q^{\prime}$\textit{ }%
$\in\mathbb{R}$\textit{, then }%
\[
\mathcal{B}_{\alpha,p,q}\cap\mathcal{B}_{\alpha^{\prime},p^{\prime},q^{\prime
}}\subset\mathcal{B}_{\min\{\alpha^{\prime},\alpha,\},\min\{p^{\prime
},p\},\min\{q^{\prime},q\}}~.
\]

\end{itemize}

In the remainder of this paper, "$\mathrm{const.}$" stands for some constant
independent of $k$ and $t$ that may change from one occurrence to the next
without notice. If $\hat{f}\in\mathcal{B}_{\alpha,p,q}$ with $\alpha>1$, then
we have the bound%
\begin{align*}
\int_{\mathbb{R}}|\hat{f}(k,t)|dk  &  \leq\Vert\hat{f};\mathcal{B}%
_{\alpha,p,q}\Vert\int_{\mathbb{R}}\left(  \frac{1}{t^{p}}\bar{\mu}_{\alpha
}(k,t)+\frac{1}{t^{q}}\tilde{\mu}_{\alpha}(k,t)\right)  dk\\
&  \leq\mathrm{const.}~\Vert\hat{f};\mathcal{B}_{\alpha,p,q}\Vert\left(
\frac{1}{t^{p+1}}+\frac{1}{t^{q+2}}\right) \\
&  \leq\frac{\mathrm{const.}}{t^{\min\{p+1,q+2)}}\Vert\hat{f};\mathcal{B}%
_{\alpha,p,q}\Vert~,
\end{align*}
which by Definition~\ref{def:fourier}\ immediately gives%
\begin{equation}
\sup\limits_{x\in\mathbb{R}}|f(x,y)|\leq\frac{\mathrm{const.}}{y^{\min
\{p+1,q+2)}}\Vert\hat{f};\mathcal{B}_{\alpha,p,q}\Vert~.
\label{eq:supboundbyBapqbound}%
\end{equation}
The $\mathcal{B}_{\alpha,p,q}$ spaces thus encode the decay behavior in direct
space in the direction perpendicular to the wall, uniformly along lines
parallel to the wall. For convenience later on we also define%
\begin{align*}
\kappa &  =\sqrt{k^{2}-ik}~,\\
\tau &  =t-1~,\\
\sigma &  =s-1~,
\end{align*}
and%
\[
\Lambda_{-}=-\operatorname{Re}(\kappa)=-\frac{1}{2}\sqrt{2\sqrt{k^{2}+k^{4}%
}+2k^{2}}~.
\]
To further unburden the notations, we set%
\begin{align}
\mu_{0}  &  =\frac{1}{s^{7/2}}\bar{\mu}_{\alpha}(k,s)+\frac{1}{s^{3}}%
\tilde{\mu}_{\alpha}(k,s)~,\label{eq:defmu0}\\
\mu_{1}  &  =\frac{1}{s^{7/2}}\bar{\mu}_{\alpha}(k,s)+\frac{1}{s^{4}}%
\tilde{\mu}_{\alpha}(k,s)~. \label{eq:defmu1}%
\end{align}

\section{Functional equations\label{sec:funeq}}

We recall the definition of the maps given in
\cite{Hillairet.Wittwer-Existenceofstationary2009} which allowed to prove the
existence of a solution by the contraction mapping principle.\ We begin by
introducing the basic elements. The velocity field $(\hat{u},\hat{v})$ is
decomposed into%
\begin{align*}
\hat{u}  &  =-\hat{\eta}+\hat{\varphi}~,\\
\hat{v}  &  =\hat{\omega}+\hat{\psi}~,
\end{align*}
with $\hat{\omega}$ the vorticity. The nonlinear terms are represented by%
\begin{align}
\hat{Q}_{0}\left(  k,t\right)   &  =\hat{u}\ast\hat{\omega}+\hat{F}%
_{2}~,\label{eq:defQ0}\\
\hat{Q}_{1}\left(  k,t\right)   &  =\hat{v}\ast\hat{\omega}-\hat{F}_{1}~,
\label{eq:defQ1}%
\end{align}
where $\boldsymbol{\hat{F}}=(\hat{F}_{1},\hat{F}_{2})=\mathcal{F}%
[\boldsymbol{F}]$. The functions composing the velocity field are themselves
further decomposed as follows%
\begin{align}
\hat{\psi}  &  =\sum_{m=0,1}\sum_{n=1,2,3}\hat{\psi}_{n,m}~,\hspace{1cm}%
\hat{\varphi}=\sum_{m=0,1}\sum_{n=1,2,3}\hat{\varphi}_{n,m}~, \label{eq:comp1}%
\\
\hat{\omega}  &  =\sum_{m=0,1}\sum_{n=1,2,3}\hat{\omega}_{n,m}~,\hspace
{1cm}\hat{\eta}=\sum_{m=0,1}\sum_{n=1,2,3}\hat{\eta}_{n,m}~. \label{eq:comp2}%
\end{align}
For $\alpha>1$, we have the map%
\[%
\begin{array}
[c]{cccc}%
\mathcal{N}~\colon & \mathcal{V}_{\alpha} & \rightarrow & \mathcal{V}_{\alpha
}=\mathcal{B}_{\alpha,\frac{5}{2},1}\times\mathcal{B}_{\alpha,\frac{1}{2}%
,0}\times\mathcal{B}_{\alpha,\frac{1}{2},1}\\
& (\hat{\omega},\hat{u},\hat{v}) & \longmapsto & \mathcal{L}[\mathcal{C}%
[(\hat{\omega},\hat{u},\hat{v}),(\hat{\omega},\hat{u},\hat{v})]+(\hat{F}%
_{2},-\hat{F}_{1})]~,
\end{array}
\]
with%
\begin{equation}%
\begin{array}
[c]{cccc}%
\mathcal{C}~\colon & \mathcal{V}_{\alpha}\times\mathcal{V}_{\alpha} &
\rightarrow & \mathcal{W}_{\alpha}=\mathcal{B}_{\alpha,\frac{7}{2},\frac{5}%
{2}}\times\mathcal{B}_{\alpha,\frac{7}{2},\frac{5}{2}}\\
& ((\hat{\omega}_{1},\hat{u}_{1},\hat{v}_{1}),(\hat{\omega}_{2},\hat{u}%
_{2},\hat{v}_{2})) & \longmapsto & \left(  \hat{u}_{1}\ast\hat{\omega}%
_{2},\hat{v}_{1}\ast\hat{\omega}_{2}\right)  ~,
\end{array}
\label{eq:mapC}%
\end{equation}
a continuous bilinear map, and%
\begin{equation}%
\begin{array}
[c]{cccc}%
\mathcal{L}~\colon & \mathcal{W}_{\alpha} & \rightarrow & \mathcal{V}_{\alpha
}\\
& (\hat{Q}_{0},\hat{Q}_{1}) & \longmapsto & (\hat{\omega},\hat{u},\hat{v})~,
\end{array}
\label{eq:mapL}%
\end{equation}
a continuous linear map. The solution $(\hat{\omega},\hat{u},\hat{v})$ is
obtained, for $||(\hat{F}_{2},\hat{F}_{1});\mathcal{W}_{\alpha}||$
sufficiently small, as a fixed point of the map $\mathcal{N}$. Due to an
improved bound given in Appendix~\ref{sec:newConv}, tighter bounds on the
nonlinear terms $\hat{Q}_{0}$ and $\hat{Q}_{1}$ can be obtained.

\begin{proposition}
\label{prop:Zspace}Let $\alpha>1$. The bilinear map%
\[%
\begin{array}
[c]{cccc}%
\mathcal{C}~\colon & \mathcal{V}_{\alpha}\times\mathcal{V}_{\alpha} &
\rightarrow & \mathcal{Z}_{\alpha}=\mathcal{B}_{\alpha,\frac{7}{2},3}%
\times\mathcal{B}_{\alpha,\frac{7}{2},4}\\
& ((\hat{\omega}_{1},\hat{u}_{1},\hat{v}_{1}),(\hat{\omega}_{2},\hat{u}%
_{2},\hat{v}_{2})) & \longmapsto & \left(  \hat{u}_{1}\ast\hat{\omega}%
_{2},\hat{v}_{1}\ast\hat{\omega}_{2}\right)  ~,
\end{array}
\]
is continuous.
\end{proposition}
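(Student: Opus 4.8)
The plan is to prove continuity of $\mathcal{C}$ into $\mathcal{Z}_\alpha$ by establishing the two convolution estimates
\[
\|\hat{u}_1\ast\hat{\omega}_2;\mathcal{B}_{\alpha,\frac72,3}\| \leq \mathrm{const.}\,\|(\hat{\omega}_1,\hat{u}_1,\hat{v}_1);\mathcal{V}_\alpha\|\,\|(\hat{\omega}_2,\hat{u}_2,\hat{v}_2);\mathcal{V}_\alpha\|
\]
and the analogous bound for $\hat{v}_1\ast\hat{\omega}_2$ in $\mathcal{B}_{\alpha,\frac72,4}$. Since $\mathcal{C}$ is bilinear and already known (from the cited existence paper, recalled in \eqref{eq:mapC}) to map continuously into $\mathcal{W}_\alpha = \mathcal{B}_{\alpha,\frac72,\frac52}\times\mathcal{B}_{\alpha,\frac72,\frac52}$, the new content is purely the improvement of the second index from $\frac52$ to $3$ (resp.\ $4$); the first index $\frac72$ and the exponent $\alpha$ governing $k$-decay are unchanged, so I only need to track the $\tilde\mu_\alpha$ part of the target norm, i.e.\ the behavior in the parabolic scaling $|k|t^2$.

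First I would unwind the definitions: $\hat{u}_1 \in \mathcal{B}_{\alpha,\frac12,0}$ and $\hat{\omega}_2 \in \mathcal{B}_{\alpha,\frac52,1}$, so pointwise
\[
|\hat{u}_1(k-k',t)| \leq \|\hat{u}_1\|\Bigl(\tfrac{1}{t^{1/2}}\bar\mu_\alpha(k-k',t) + \tilde\mu_\alpha(k-k',t)\Bigr),\qquad
|\hat{\omega}_2(k',t)| \leq \|\hat{\omega}_2\|\Bigl(\tfrac{1}{t^{5/2}}\bar\mu_\alpha(k',t) + \tfrac{1}{t}\tilde\mu_\alpha(k',t)\Bigr),
\]
and I would plug these into the convolution integral $(\hat{u}_1\ast\hat{\omega}_2)(k,t) = \frac{1}{2\pi}\int \hat{u}_1(k-k',t)\hat{\omega}_2(k',t)\,dk'$, expanding into four cross terms. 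For each cross term I need a bound of the form $\mathrm{const.}\bigl(\frac{1}{t^{7/2}}\bar\mu_\alpha(k,t) + \frac{1}{t^3}\tilde\mu_\alpha(k,t)\bigr)$. The key analytic tool is the convolution lemma for the $\mu$-weights — the "improved bound given in Appendix~\ref{sec:newConv}" that the proposition's preamble refers to — which controls integrals of products $\mu_{\alpha,r}(k-k',t)\mu_{\alpha,r'}(k',t)$ in terms of the $\mu$-weights of $k$ together with an appropriate power of $t$. The terms pairing two $\bar\mu$'s reproduce the $\bar\mu_\alpha(k,t)$ channel at weight $t^{-1/2-5/2-\text{(gain)}}$; the terms involving at least one $\tilde\mu$ feed the $\tilde\mu_\alpha(k,t)$ channel, and one checks that the worst such term gives exactly the $t^{-3}$ rate (and $t^{-4}$ for the $\hat{v}_1\ast\hat{\omega}_2$ component, where $\hat{v}_1\in\mathcal{B}_{\alpha,\frac12,1}$ contributes one extra power of $1/t$ in its $\tilde\mu$ part).

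The second component is handled identically with $\hat{v}_1\in\mathcal{B}_{\alpha,\frac12,1}$ in place of $\hat{u}_1$: its $\tilde\mu$-weight carries $t^{-1}$ rather than $t^{0}$, and convolving against $\hat\omega_2$'s $\tilde\mu$-weight (rate $t^{-1}$) plus the convolution gain yields the target rate $t^{-4}$, while the mixed and $\bar\mu$-$\bar\mu$ terms are no worse than the first component's. I expect the main obstacle to be the bookkeeping in the $\tilde\mu$-$\bar\mu$ cross terms — i.e.\ convolving a function decaying in the $|k|t$ scale against one decaying in the $|k|t^2$ scale — where one must be careful to split the $k'$-integration region (near $0$ versus away from $0$, and $|k'|$ small versus comparable to $|k|$) so that in each piece one of the two scalings dominates and the convolution lemma applies with the correct power of $t$; getting the exact exponents $3$ and $4$ (rather than something weaker) is precisely where the "improved bound" of Appendix~\ref{sec:newConv} is essential, and is the only place the argument is not a routine transcription of the estimates already in \cite{Hillairet.Wittwer-Existenceofstationary2009}.
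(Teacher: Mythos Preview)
Your proposal is correct and follows essentially the same route as the paper: the paper's proof is simply to invoke Corollary~\ref{corr:optConvBapq} (the packaged form of Proposition~\ref{prop:optConvBapq} in Appendix~\ref{sec:newConv}) in place of the older convolution estimate, and your four-term expansion is precisely the computation underlying that corollary applied to the index pairs $(p_1,q_1)=(\tfrac12,0)$, $(p_2,q_2)=(\tfrac52,1)$ for the first component and $(p_1,q_1)=(\tfrac12,1)$, $(p_2,q_2)=(\tfrac52,1)$ for the second. One small correction to your bookkeeping: by Proposition~\ref{prop:optConvBapq} the mixed terms $\bar\mu_\alpha\ast\tilde\mu_\alpha$ land in the $\bar\mu_\alpha$ channel (with gain $t^{-2}$), not the $\tilde\mu_\alpha$ channel---only the $\tilde\mu_\alpha\ast\tilde\mu_\alpha$ term feeds $\tilde\mu_\alpha$, and it alone gives the $q=3$ (resp.\ $q=4$) index.
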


\begin{proof}
This is an immediate consequence of using Proposition~\ref{corr:optConvBapq}
of the present paper instead of Proposition~9 in
\cite{Hillairet.Wittwer-Existenceofstationary2009} in the proof of Lemma~4 in
\cite{Hillairet.Wittwer-Existenceofstationary2009}.
\end{proof}

Using Proposition~\ref{prop:Zspace}, most of the bounds on the functions in
(\ref{eq:comp1}) and (\ref{eq:comp2}) proved in
\cite{Hillairet.Wittwer-Existenceofstationary2009} are easily improved. In the
following proposition, we indicate in bold face all the indices which have
changed with respect to Propositions~12, 14, 16 and 18 of
\cite{Hillairet.Wittwer-Existenceofstationary2009}.

\begin{proposition}
\label{prop:BapqSpaces}Let $\alpha>1$, $\delta>0$. We then have
\begin{align*}%
\begin{tabular}
[c]{ll}%
$\hat{\psi}_{1,0}\in\mathcal{B}_{\alpha,\frac{\mathbf{3}}{\mathbf{2}%
}-\boldsymbol{\delta},\mathbf{2}}$ & $\hat{\psi}_{1,1}\in\mathcal{B}%
_{\alpha,\frac{1}{2},\mathbf{3}}$\\
$\hat{\psi}_{2,0}\in\mathcal{B}_{\alpha,\frac{5}{2},\mathbf{2}}$ & $\hat{\psi
}_{2,1}\in\mathcal{B}_{\alpha,\frac{5}{2},\mathbf{3}}$\\
$\hat{\psi}_{3,0}\in\mathcal{B}_{\alpha,\frac{5}{2},\mathbf{2}}$ & $\hat{\psi
}_{3,1}\in\mathcal{B}_{\alpha,\frac{5}{2},\mathbf{3}}$%
\end{tabular}
&
\begin{tabular}
[c]{ll}%
$\hat{\varphi}_{1,0}\in\mathcal{B}_{\alpha,\frac{\mathbf{3}}{\mathbf{2}%
}\mathbf{-}\boldsymbol{\delta},\mathbf{2}}$ & $\hat{\varphi}_{1,1}%
\in\mathcal{B}_{\alpha,\frac{1}{2},\mathbf{3}}$\\
$\hat{\varphi}_{2,0}\in\mathcal{B}_{\alpha,\frac{5}{2},\mathbf{2}}$ &
$\hat{\varphi}_{2,1}\in\mathcal{B}_{\alpha,\frac{5}{2},\mathbf{3}}$\\
$\hat{\varphi}_{3,0}\in\mathcal{B}_{\alpha,\frac{5}{2},\mathbf{2}}$ &
$\hat{\varphi}_{3,1}\in\mathcal{B}_{\alpha,\frac{5}{2},\mathbf{3}}$%
\end{tabular}
\\%
\begin{tabular}
[c]{ll}%
$\hat{\omega}_{1,0}\in\mathcal{B}_{\alpha,\frac{7}{2},\mathbf{3-}%
\boldsymbol{\delta}}$ & $\hat{\omega}_{1,1}\in\mathcal{B}_{\alpha,\frac{5}%
{2},1}$\\
$\hat{\omega}_{2,0}\in\mathcal{B}_{\alpha,\infty,3}$ & $\hat{\omega}_{2,1}%
\in\mathcal{B}_{\alpha,\frac{5}{2},\mathbf{3}}$\\
$\hat{\omega}_{3,0}\in\mathcal{B}_{\alpha,\frac{\mathbf{7}}{\mathbf{2}%
},\mathbf{3}}$ & $\hat{\omega}_{3,1}\in\mathcal{B}_{\alpha,\frac{5}%
{2},\mathbf{3}}$%
\end{tabular}
&
\begin{tabular}
[c]{ll}%
$\hat{\eta}_{1,0}\in\mathcal{B}_{\alpha,\frac{5}{2},\mathbf{2-}%
\boldsymbol{\delta}}$ & $\hat{\eta}_{1,1}\in\mathcal{B}_{\alpha,\frac{3}{2}%
,0}$\\
$\hat{\eta}_{2,0}\in\mathcal{B}_{\alpha,\infty,2}$ & $\hat{\eta}_{2,1}%
\in\mathcal{B}_{\alpha,\frac{5}{2},\mathbf{3}}$\\
$\hat{\eta}_{3,0}\in\mathcal{B}_{\alpha,\frac{5}{2},\mathbf{2}}$ & $\hat{\eta
}_{3,1}\in\mathcal{B}_{\alpha,\frac{3}{2},\mathbf{2}}$%
\end{tabular}
\end{align*}

\end{proposition}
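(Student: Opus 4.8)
The plan is to rerun, essentially verbatim, the proofs of Propositions~12, 14, 16 and 18 of \cite{Hillairet.Wittwer-Existenceofstationary2009}, the only change being the quality of the input that is fed into them. Recall that each of the functions $\hat{\psi}_{n,m}$, $\hat{\varphi}_{n,m}$, $\hat{\omega}_{n,m}$ and $\hat{\eta}_{n,m}$ in (\ref{eq:comp1})--(\ref{eq:comp2}) is, by the construction in \cite{Hillairet.Wittwer-Existenceofstationary2009}, the image of the pair of nonlinear terms $(\hat{Q}_{0},\hat{Q}_{1})$ of (\ref{eq:defQ0})--(\ref{eq:defQ1}) under one of a fixed, finite family of linear integral operators in the auxiliary variable $s$, with kernels depending only on $k$, $t$ and $s$; these operators are the building blocks of the map $\mathcal{L}$ in (\ref{eq:mapL}). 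The force enters only through $\boldsymbol{\hat{F}}=\mathcal{F}[\boldsymbol{F}]$, which is harmless since $\boldsymbol{F}\in C_{c}^{\infty}(\Omega_{+})$ lies in every space $\mathcal{B}_{\alpha,p,q}$ and has compact support in $s$. In \cite{Hillairet.Wittwer-Existenceofstationary2009} these kernels were estimated against the weight of $\mathcal{W}_{\alpha}$, i.e.\ against $\frac{1}{s^{7/2}}\bar{\mu}_{\alpha}(k,s)+\frac{1}{s^{5/2}}\tilde{\mu}_{\alpha}(k,s)$ for both $\hat{Q}_{0}$ and $\hat{Q}_{1}$. First I would replace this by the sharper bound furnished by Proposition~\ref{prop:Zspace}: $\hat{Q}_{0}$ is now controlled by $\mu_{0}$ of (\ref{eq:defmu0}) and $\hat{Q}_{1}$ by $\mu_{1}$ of (\ref{eq:defmu1}), and then propagate this improvement through the unchanged kernel estimates.

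The mechanism that makes this work is that each kernel estimate in \cite{Hillairet.Wittwer-Existenceofstationary2009} has the schematic form $\int (\text{kernel})\cdot\frac{1}{s^{a}}\bar{\mu}_{\alpha}(k,s)\,ds\leq\frac{1}{t^{p}}\bar{\mu}_{\alpha}(k,t)+\frac{1}{t^{q}}\tilde{\mu}_{\alpha}(k,t)$, and similarly with $\tilde{\mu}_{\alpha}$ in place of $\bar{\mu}_{\alpha}$; such estimates are monotone in the source exponent $a$, in that raising $a$ raises the attainable output exponents $p$ and $q$ by the same amount, up to a ceiling fixed by the kernel alone. Since only the $\tilde{\mu}_{\alpha}$-coefficient of the source has been improved (from $s^{-5/2}$ to $s^{-3}$ for $\hat{Q}_{0}$ and to $s^{-4}$ for $\hat{Q}_{1}$), while its $\bar{\mu}_{\alpha}$-coefficient is unchanged, it is predominantly the second index of the outputs that improves: for most kernels the $\bar{\mu}_{\alpha}$-part of the output is already dictated by the (unchanged) $\bar{\mu}_{\alpha}$-part of the source, while only a few kernels cross-feed a $\tilde{\mu}_{\alpha}$-type source into a $\bar{\mu}_{\alpha}$-type output term, and for those the first index improves as well --- this accounts for the bold-face changes of the first index in $\hat{\psi}_{1,0}$, $\hat{\varphi}_{1,0}$ and $\hat{\omega}_{3,0}$. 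For the $n=1$, $m=0$ functions $\hat{\psi}_{1,0}$, $\hat{\varphi}_{1,0}$, $\hat{\omega}_{1,0}$, $\hat{\eta}_{1,0}$, whose changed indices carry a $-\delta$, the governing $s$-integral is borderline and produces a factor $\log t$; exactly as in \cite{Hillairet.Wittwer-Existenceofstationary2009} this is absorbed into $t^{-\delta}$ at the price of an arbitrarily small $\delta>0$.

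The hard part is not conceptual but a careful bookkeeping audit, kernel by kernel: for each of the finitely many operators one must verify that the improved source genuinely yields the full gain (rather than being throttled at some intermediate estimate) and, when it does, whether the resulting output exponent is then limited by the source or by the kernel. This distinction is exactly what separates, for instance, $\hat{\omega}_{1,1}$, which does not change at all because its kernel is already saturated, from $\hat{\omega}_{3,0}$, whose kernel is not saturated and whose bound therefore improves in both indices, and from $\hat{\psi}_{2,0}$, which improves only in its second index, to $\mathcal{B}_{\alpha,5/2,2}$. Carrying out this audit --- using Proposition~\ref{prop:Zspace} in place of Lemma~4 of \cite{Hillairet.Wittwer-Existenceofstationary2009} and reusing every other kernel estimate of that paper unchanged --- yields all the stated memberships.
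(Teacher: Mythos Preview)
Your plan is the same as the paper's: feed the sharper bound $(\hat Q_0,\hat Q_1)\in\mathcal Z_\alpha$ from Proposition~\ref{prop:Zspace} into the kernel estimates of Lemma~5 in \cite{Hillairet.Wittwer-Existenceofstationary2009} and read off the improved indices. The paper confirms that this is ``straightforward for all functions except $\hat\omega_{2,0}$, $\hat\eta_{2,0}$ and $\hat\omega_{3,0}$''. Your sketch, however, treats the whole list as a uniform bookkeeping audit governed by monotonicity in the source exponent, and that is not quite enough for those three.

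For $\hat\omega_{2,0}$ and $\hat\eta_{2,0}$ the mechanism is \emph{not} monotonicity of the output in the source exponent; it is the extra factor $e^{\Lambda_-(t-1)}$ in front of the integral, which via (\ref{eq:mutomutilde}) converts any $\bar\mu_\alpha$--type contribution into a $\tilde\mu_\alpha$--type one and so kills the first index entirely (hence $p=\infty$). For $\hat\omega_{2,0}$ one also needs a case split $1\le t\le 2$ versus $t>2$ and, in the second case, the trade (\ref{eq:kfortinmu}) of a stray $|k|^{1/2}$ against $s^{-1}$ before Proposition~\ref{prop:sgk3} can be applied; plugging the $\mathcal Z_\alpha$ bound into the old estimate without this step leaves a $|k|^{1/2}$ factor that your monotonicity heuristic does not account for. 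For $\hat\omega_{3,0}$ (which you do flag) the paper uses a refined bound on $f_{3,0}$, choosing $|f_{3,0}|\le \mathrm{const.}\,e^{\Lambda_-\sigma}|\Lambda_-|^2$ for $|k|\le 1$ and $|f_{3,0}|\le \mathrm{const.}\,e^{\Lambda_-\sigma}|\Lambda_-|$ for $|k|>1$, rather than the single estimate used in \cite{Hillairet.Wittwer-Existenceofstationary2009}; this is what yields the improved first index $7/2$. So your outline is right in spirit, but you should name these three cases and supply the additional devices just described rather than folding them into ``bookkeeping''.
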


\begin{remark}
\label{rem:propBapqSpaces}Given the decay behavior in direct space provided by
(\ref{eq:supboundbyBapqbound}), it is clear that the components with indices
$(1,1)$ play a dominant role in Theorem~\ref{thm:mainresult}. In fact,
functions in $\mathcal{B}_{\alpha,p,q}$ with $p\geq3/2$ and $q\geq1/2$ are
negligible in the sense of the limits given in (\ref{eq:asu}) and
(\ref{eq:asv}), although Theorem~\ref{thm:mainresult} includes some additional
terms to satisfy the divergence-free criterion and to have two orders of the
asymptotics in both scalings. In the same way, functions with indices $p\geq4$
and $q\geq3$ are negligible in the sense of the limit given in (\ref{eq:asw}).
One would then expect $\hat{\omega}_{2,1}$ and $\hat{\omega}_{3,1}$ to be
relevant, but new and better bounds are proved in
Section~\ref{sec:finalImprovements}, so that they will turn out to be
negligible, too.
\end{remark}

\begin{proof}
Using that $(\hat{Q}_{0},\hat{Q}_{1})\in\mathcal{Z}_{\alpha}$ and following
otherwise the proof of Lemma~5 in
\cite{Hillairet.Wittwer-Existenceofstationary2009}, this is straightforward
for all functions except $\hat{\omega}_{2,0}$, $\hat{\eta}_{2,0}$ and
$\hat{\omega}_{3,0}$. Note that $\delta\in(0,1)$ using (\ref{eq:boundlog}).

For $\hat{\omega}_{2,0}$, we recall that%
\[
\hat{\omega}_{2,0}(k,t)=\frac{1}{2}e^{-\kappa(t-1)}\int_{t}^{\infty}%
f_{2,0}(k,s-1)\hat{Q}_{0}(k,s)ds~,
\]
with%
\[
f_{2,0}(k,\sigma)=\left(  \frac{ik}{\kappa}-\frac{(|k|+\kappa)^{2}}{\kappa
}\right)  e^{-\kappa\sigma}+2(|k|+\kappa)e^{-|k|\sigma}~.
\]
We have the bound%
\[
\left\vert f_{2,0}(k,\sigma)\right\vert \leq\mathrm{const.}~(|k|^{1/2}%
+|k|)e^{-|k|\sigma}~,
\]
so that we therefore have for $\hat{\omega}_{2,0}$%
\begin{align}
\left\vert \hat{\omega}_{2,0}(k,t)\right\vert  &  \leq\mathrm{const.}%
~e^{\Lambda_{-}(t-1)}\int_{t}^{\infty}\left\vert f_{2,0}(k,\sigma)\right\vert
\mu_{0}(k,s)ds\nonumber\\
&  \leq\mathrm{const.}~e^{\Lambda_{-}(t-1)}e^{|k|(t-1)}\int_{t}^{\infty
}(|k|^{1/2}+|k|)e^{-|k|\sigma}\frac{1}{s^{7/2}}\bar{\mu}_{\alpha
}(k,s)ds\label{eq:omega20barbound}\\
&  +\mathrm{const.}~e^{\Lambda_{-}(t-1)}e^{|k|(t-1)}\int_{t}^{\infty
}(|k|^{1/2}+|k|)e^{-|k|\sigma}\frac{1}{s^{3}}\tilde{\mu}_{\alpha}(k,s)ds~.
\label{eq:omega20tildebound}%
\end{align}
The term in (\ref{eq:omega20barbound}) is estimated with
Proposition~\ref{prop:sgk3}%
\begin{equation}
e^{\Lambda_{-}(t-1)}e^{|k|(t-1)}\int_{t}^{\infty}(|k|^{1/2}+|k|)e^{-|k|\sigma
}\frac{1}{s^{7/2}}\bar{\mu}_{\alpha}(k,s)ds\leq\mathrm{const.}~e^{\Lambda
_{-}(t-1)}\frac{1}{t^{3}}\bar{\mu}_{\alpha}(k,t)~. \label{eq:omega20b1}%
\end{equation}
The term (\ref{eq:omega20tildebound}) requires us to distinguish the cases
$1\leq t\leq2$ and $t>2$. In the first case, we have, using
Proposition~\ref{prop:sgk3},
\begin{equation}
e^{\Lambda_{-}(t-1)}e^{|k|(t-1)}\int_{t}^{\infty}(|k|^{1/2}+|k|)e^{-|k|\sigma
}\frac{1}{s^{3}}\tilde{\mu}_{\alpha}(k,s)ds\leq\mathrm{const.}\frac{1}%
{t^{5/2}}\tilde{\mu}_{\alpha}(k,t)\leq\mathrm{const.}\frac{1}{t^{3}}\tilde
{\mu}_{\alpha}(k,t)~, \label{eq:omega20b2}%
\end{equation}
and in the second case we have, using (\ref{eq:kfortinmu}) to trade the factor
$|k|^{1/2}$ for a factor $s^{-1}$ and then applying
Proposition~\ref{prop:sgk3},%
\begin{align}
&  e^{\Lambda_{-}(t-1)}e^{|k|(t-1)}\int_{t}^{\infty}(|k|^{1/2}%
+|k|)e^{-|k|\sigma}\frac{1}{s^{3}}\tilde{\mu}_{\alpha}(k,s)ds\nonumber\\
&  \leq\mathrm{const.}~e^{\Lambda_{-}(t-1)}e^{|k|(t-1)}\int_{t}^{\infty
}e^{-|k|\sigma}\left(  \frac{1}{s^{4}}\tilde{\mu}_{\alpha-1/2}(k,s)+|k|\frac
{1}{s^{3}}\tilde{\mu}_{\alpha}(k,s)\right)  ds\nonumber\\
&  \leq\mathrm{const.}~e^{\Lambda_{-}(t-1)}\frac{1}{t^{3}}\tilde{\mu}%
_{\alpha-1/2}(k,t)~. \label{eq:omega20b3}%
\end{align}

Collecting (\ref{eq:omega20b1})--(\ref{eq:omega20b3}) and applying
(\ref{eq:mutomutilde}), we finally have
\[
\left\vert \hat{\omega}_{2,0}(k,t)\right\vert \leq\mathrm{const.}~\frac
{1}{t^{3}}\tilde{\mu}_{\alpha}(k,t)~.
\]
Indeed, for $t>2$ the index $\alpha$ is arbitrarily large due to the
exponential factor.

For the function $\hat{\eta}_{2,0}$, we have, from
\cite{Hillairet.Wittwer-Existenceofstationary2009} and using
Proposition~\ref{prop:Zspace}, that%
\[
|\hat{\eta}_{2,0}(k,t)|\leq\mathrm{const.}~e^{\Lambda_{-}(t-1)}e^{-|k|(t-1)}%
\left(  \frac{1}{t^{5/2}}\bar{\mu}_{\alpha}(k,t)+\frac{1}{t^{2}}\tilde{\mu
}_{\alpha}(k,t)\right)  ~.
\]
Using inequality (\ref{eq:mutomutilde}) shows that $\hat{\eta}_{2,0}%
\in\mathcal{B}_{\alpha,\infty,2}$.

For $\hat{\omega}_{3,0}$ we recall from
\cite{Hillairet.Wittwer-Existenceofstationary2009} that%
\[
|\hat{\omega}_{3,0}(k,t)|\leq\mathrm{const.}~\left\vert \frac{\kappa}%
{ik}(e^{\kappa(t-1)}-e^{-\kappa(t-1)})\right\vert \int_{t}^{\infty}%
|f_{3,0}(k,\sigma)|\mu_{0}(k,s)ds~,
\]
with%
\[
|f_{3,0}(k,\sigma)|\leq\mathrm{const.}~e^{\Lambda_{-}\sigma}\min
\{1,|\Lambda_{-}|^{2}\}\leq\mathrm{const.}~e^{\Lambda_{-}\sigma}|\Lambda
_{-}|~.
\]
Since $|\Lambda\_|\sim|k|^{1/2}$ for $|k|\leq1$ and $|\Lambda_{-}|\sim|k|$ for
$|k|>1$, we use for the first case $|f_{3,0}(k,\sigma)|\leq\mathrm{const.}%
~e^{\Lambda_{-}\sigma}|\Lambda_{-}|^{2}$ and for the second case
$|f_{3,0}(k,\sigma)|\leq\mathrm{const.}~e^{\Lambda_{-}\sigma}|\Lambda_{-}|$
and we have, for all $|k|$, using Proposition~\ref{prop:sgL3},%
\begin{align*}
|\hat{\omega}_{3,0}(k,t)|  &  \leq\mathrm{const.}~e^{|\Lambda_{-}|(t-1)}%
\int_{t}^{\infty}|\Lambda_{-}|\mu_{0}(k,s)ds\\
&  \leq\mathrm{const.}~\left(  \frac{1}{t^{7/2}}\bar{\mu}_{\alpha}%
(k,t)+\frac{1}{t^{3}}\tilde{\mu}_{\alpha}(k,t)\right)  ~.
\end{align*}

\end{proof}

\section{Asymptotic terms\label{sec:asymptoticTerms}}

\subsection{Strategy}

In this section we extract the leading asymptotic terms of the functions
$\hat{\psi}$, $\hat{\varphi}$, $\hat{\eta}$, $\hat{\omega}$ and $\partial
_{k}\hat{\omega}$. We then calculate an explicit representation of these
asymptotic terms in direct space which allows us to prove even tighter bounds
on the nonlinear terms $\hat{Q}_{0}$, $\hat{Q}_{1}$ as well as $\partial
_{k}\hat{Q}_{1}$, than the ones given in Proposition~\ref{prop:Zspace} and
\cite{Boeckle.Wittwer-Decayestimatessolutions2011}. The new bounds on $\hat
{Q}_{0}$ and $\hat{Q}_{1}$ are then used to further improve the bounds on
$\hat{\psi}$, $\hat{\varphi}$, $\hat{\eta}$, and $\hat{\omega}$, which,
together with the tighter bound on $\partial_{k}\hat{Q}_{1}$, allow us to
extract second-order terms in two steps. First, we extract the second order
terms of $\hat{\psi}$ and $\hat{\varphi}$, which allows us to improve the
bounds on the non-linear terms once again using their direct-space
representation. Then, we proceed to extract the second order terms of
$\hat{\eta}$ and $\hat{\omega}$.

The extraction procedure is as follows: we first identify the leading
components in view of Proposition~\ref{prop:BapqSpaces} and
Remark~\ref{rem:propBapqSpaces}. We then calculate for each of these
components the pointwise limit as $t\rightarrow\infty$ for one of two
scalings: $k\mapsto k/t$ if the slowest direct space decay in the sense of
(\ref{eq:supboundbyBapqbound}) is due to the index $p$, $k\mapsto k/t^{2}$ if
it is due to the index $q$. We finally prove that the difference between the
leading component and this pointwise limit is in a $\mathcal{B}_{\alpha
,p^{\prime},q}$ or $\mathcal{B}_{\alpha,p,q^{\prime}}$ which is smaller due to
an improvement in the index that determined the scaling choice, thus
identifying the pointwise limit as the leading asymptotic term. For the second
order asymptotic term, we proceed in the same way using any new bound obtained
in between to identify the components from which we have to extract it. As we
will see, this is actually the leading component minus the leading order
asymptotic term, for which we then calculate a new pointwise limit to obtain
the second order term

In this section, some bounds lead to a decrease of $\alpha$ by $-3$. Since the
solution exists for arbitrary $\alpha>3$, this does not pose a problem. We now
present our main technical result. To unburden the notation in the proofs and
results we set%
\begin{align*}
\alpha^{\prime}  &  =\alpha-1~,\\
\alpha^{\prime\prime}  &  =\alpha-2~.
\end{align*}

\begin{theorem}
[asymptotes in $\mathcal{B}_{\alpha,p,q}$ spaces]\label{thm:mainfourier}Let
$\hat{u}\in\mathcal{B}_{\alpha,\frac{1}{2},0}$, $\hat{v}\in\mathcal{B}%
_{\alpha,\frac{1}{2},1}$, $\hat{\omega}\in\mathcal{B}_{\alpha,\frac{5}{2},1}$
as constructed in \cite{Hillairet.Wittwer-Existenceofstationary2009}, with
$\hat{u}=-\hat{\eta}+\hat{\varphi}$, $\hat{v}=\hat{\omega}+\hat{\psi}$. We
then have, for $\alpha>4$, $\infty$ arbitrarily large and $\delta>0$,
\[%
\begin{tabular}
[c]{ll}%
$\hat{\psi}-\hat{\psi}_{\mathrm{as,1}}\in\mathcal{B}_{\alpha,1,\infty}$ &
$\hat{\psi}-\hat{\psi}_{\mathrm{as,1}}-\hat{\psi}_{\mathrm{as,2}}%
\in\mathcal{B}_{\alpha,\frac{3}{2}-\delta,\infty}$\\
$\hat{\varphi}-\hat{\varphi}_{\mathrm{as,1}}\in\mathcal{B}_{\alpha,1,\infty}$
& $\hat{\varphi}-\hat{\varphi}_{\mathrm{as,1}}-\hat{\varphi}_{\mathrm{as,2}%
}\in\mathcal{B}_{\alpha,\frac{3}{2}-\delta,\infty}$\\
$\hat{\omega}-\hat{\omega}_{\mathrm{as,1}}\in\mathcal{B}_{\alpha,\frac{7}%
{2}-\delta,2}$ & $\hat{\omega}-\hat{\omega}_{\mathrm{as,1}}-\hat{\omega
}_{\mathrm{as,2}}\in\mathcal{B}_{\alpha,\frac{7}{2}-\delta,3-\delta}$\\
$\hat{\eta}-\hat{\eta}_{\mathrm{as,1}}\in\mathcal{B}_{\alpha,\frac{5}%
{2}-\delta,1}$ & $\hat{\eta}-\hat{\eta}_{\mathrm{as,1}}-\hat{\eta
}_{\mathrm{as,2}}\in\mathcal{B}_{\alpha,\frac{5}{2}-\delta,2-\delta}$%
\end{tabular}
\ \ \ \ \ \ \ \
\]
where the functions with the subscripts \emph{"as"} and \emph{"as,2"} are
given as follows: for $\hat{\psi}$ by (\ref{eq:psias}) and (\ref{eq:psias2}),
for $\hat{\varphi}$ by (\ref{eq:phias}) and (\ref{eq:phias2}), for
$\hat{\omega}$ by (\ref{eq:omegaas}) and (\ref{eq:omegaas2}), and finally for
$\hat{\eta}$ by (\ref{eq:etaas}) and (\ref{eq:etaas2}).
\end{theorem}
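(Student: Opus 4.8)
The plan is to carry out, component by component, the extraction procedure described in the Strategy subsection, bootstrapping the bounds on the nonlinear terms $\hat{Q}_0,\hat{Q}_1,\partial_k\hat{Q}_1$ as we go. First I would settle the leading-order claims. By Proposition~\ref{prop:BapqSpaces} and Remark~\ref{rem:propBapqSpaces}, the only components of $\hat\psi$ and $\hat\varphi$ that are not already in $\mathcal{B}_{\alpha,3/2,\infty}$ (hence negligible at leading order) are $\hat\psi_{1,0}$ and $\hat\varphi_{1,0}$, lying in $\mathcal{B}_{\alpha,3/2-\delta,2}$; similarly for $\hat\omega$ and $\hat\eta$ the relevant pieces are $\hat\omega_{1,0}\in\mathcal{B}_{\alpha,7/2,3-\delta}$, $\hat\eta_{1,0}\in\mathcal{B}_{\alpha,5/2,2-\delta}$ (plus the $(1,1)$- and, a priori, $(2,1),(3,1)$-pieces of $\hat\omega$, which the improved bounds of Section~\ref{sec:finalImprovements} will push below threshold). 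For each such component, which is given by an explicit integral formula in $k,t$ involving $\kappa$, $e^{-\kappa\sigma}$, $e^{-|k|\sigma}$ and $\hat{Q}_i$, I would compute the pointwise limit of $t^{p}\,\hat f(k/t,t)$ (for the $x/y$ scaling, i.e. when $p$ governs the decay) or $t^{q}\,\hat f(k/t^2,t)$ (for the $x/y^2$, i.e. wake, scaling): this amounts to replacing $\kappa$ by its small-$k$ asymptotics, replacing $\hat Q_i(k,s)$ by its value $\hat Q_i(0,s)$ or the appropriate limit that produces the constants $c_1,c_2$, and evaluating the resulting elementary integral in $s$. That limit, declared to be $\hat\psi_{\mathrm{as},1}$ etc., is then subtracted, and one checks by a routine but careful estimate — splitting $|k|\le 1/t$ versus $|k|>1/t$, using the decay of $\hat Q_i$ from Proposition~\ref{prop:Zspace} and the kernel bounds via Propositions~\ref{prop:sgk3} and \ref{prop:sgL3} — that the difference gains the claimed $\delta$ (or full step) in the governing index, landing in $\mathcal{B}_{\alpha,1,\infty}$, $\mathcal{B}_{\alpha,7/2-\delta,2}$, $\mathcal{B}_{\alpha,5/2-\delta,1}$ respectively.

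The second-order claims require two bootstrapping passes. Having the explicit direct-space form of $\hat\psi_{\mathrm{as},1},\hat\varphi_{\mathrm{as},1}$ (harmonic, decaying like $y^{-3/2}$) and of $\hat\omega_{\mathrm{as},1}$ (the wake term, $\sim y^{-3}$), I would re-estimate the convolutions $\hat u\ast\hat\omega$ and $\hat v\ast\hat\omega$ appearing in $\hat Q_0,\hat Q_1$ by writing $\hat u=\hat u_{\mathrm{as},1}+(\hat u-\hat u_{\mathrm{as},1})$ etc.: the leading$\times$leading piece is computed exactly, and the cross and remainder pieces are smaller by the improvement just obtained. This yields sharper membership of $(\hat Q_0,\hat Q_1)$ and, crucially, of $\partial_k\hat Q_1$ in suitable $\mathcal{B}$-spaces (the $\partial_k$ bound is what is needed to see the $\eta_B,\omega_B$ background terms, since differentiating the convolution in $k$ is what exposes the $s$-integral tail responsible for the non-wake vorticity). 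Feeding these back through the same integral formulas for the $(1,0)$-components — now expanding the integrand to one further order in $k$ and $1/s$ — produces $\hat\psi_{\mathrm{as},2},\hat\varphi_{\mathrm{as},2}$, and a second application (after one more nonlinear re-estimate) produces $\hat\omega_{\mathrm{as},2},\hat\eta_{\mathrm{as},2}$. In each case the leftover is shown, by the same $|k|\lessgtr 1/t$ splitting, to lie in the stated space with the extra $\delta$.

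The main obstacle is the careful tracking of the two competing scalings $|k|t\sim 1$ and $|k|t^2\sim 1$ in the remainder estimates: a given component carries both a $\bar\mu_\alpha$-part (decay controlled by $p$, scaling $x/y$) and a $\tilde\mu_\alpha$-part (decay controlled by $q$, scaling $x/y^2$), and subtracting the asymptotic term must be verified to improve the correct one without spoiling the other — in particular the passages that "trade a factor $|k|^{1/2}$ for $s^{-1}$" (as in the proof of Proposition~\ref{prop:BapqSpaces}) are exactly where the index $\alpha$ drops and where the wake vs.\ background distinction is won or lost. A secondary technical point is the appearance of logarithmic factors in the borderline integrals, which forces the $\delta$-losses in the exponents and restricts $\delta\in(0,1)$ via (\ref{eq:boundlog}); these must be absorbed consistently. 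Everything else — the identification of the constants $c_1,c_2$ as the relevant limits of $\hat Q$-moments, the elementary $s$-integrals, and the harmonicity of the potential part — is bookkeeping that I would relegate to the appendix.
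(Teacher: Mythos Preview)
Your overall bootstrapping strategy matches the paper's, but you have misidentified the dominant components, and this is a genuine gap that would make the extraction fail. From Proposition~\ref{prop:BapqSpaces}, $\hat\psi_{1,1}\in\mathcal{B}_{\alpha,1/2,3}$ and $\hat\varphi_{1,1}\in\mathcal{B}_{\alpha,1/2,3}$ have $p=1/2$, while $\hat\psi_{1,0},\hat\varphi_{1,0}\in\mathcal{B}_{\alpha,3/2-\delta,2}$ are already subdominant; likewise $\hat\eta_{1,1}\in\mathcal{B}_{\alpha,3/2,0}$ and $\hat\omega_{1,1}\in\mathcal{B}_{\alpha,5/2,1}$ dominate (via the $q$-index) over the $(1,0)$-pieces. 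Remark~\ref{rem:propBapqSpaces} says this explicitly: ``the components with indices $(1,1)$ play a dominant role.'' The asymptotic terms $\hat\psi_{\mathrm{as},1}$, $\hat\eta_{\mathrm{as},1}$, $\hat\omega_{\mathrm{as},1}$ are obtained as scaling limits of the $(1,1)$-components (via the explicit integral formulas (\ref{eq:psi11}), (\ref{eq:eta11}), (\ref{eq:omega11}) with kernels $h_{1,1},g_{1,1},f_{1,1}$), and the second-order terms are extracted from $\hat\psi_{1,1}-\hat\psi_{\mathrm{as},1}$ etc., not from the $(1,0)$-components. Your proposal to extract from the $(1,0)$-components would produce the wrong asymptotes and could not account for the remainder claims.

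Two further points where your sketch is too vague to succeed. First, the mechanism by which $\partial_k\hat Q_1$ enters is not ``differentiating the convolution in $k$'' in the sense you suggest: in the paper, after subtracting the candidate asymptote one is left with integrands containing $\hat Q_1(k,s)-\hat Q_1(0,s)$, and it is the mean value theorem (\ref{eq:meanValueTheoremQ1}) that converts this into $k\,\partial_k\hat Q_1(\zeta,s)$, which is why the $\partial_k\hat Q_1$ bound is needed. Second, improving that bound requires a separate extraction step you have omitted: one must find the asymptote of $\partial_k\hat\omega$ itself (from $\partial_k\hat\omega_{1,1,1}+\partial_k\hat\omega_{2,1,1}$, see (\ref{eq:dkomega111})--(\ref{eq:dkomegaas})), show (\ref{eq:dkomegaRemainderSpace}), and only then re-estimate $\partial_k\hat Q_1=\hat v\ast\partial_k\hat\omega+\partial_k\hat F_2$. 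Finally, the remainder estimates in the paper are not done by a $|k|\lessgtr 1/t$ split; rather, each remainder is decomposed into a finite sum of explicit pieces (e.g.\ the $\hat\psi_i^{r,1}$, $\hat\eta_i^{r,1}$), and the $s$-integrals are handled by Propositions~\ref{prop:sgk1}--\ref{prop:sgk2} and \ref{prop:sgL1}--\ref{prop:sgL2} (on $[1,\tfrac{t+1}{2}]$ and $[\tfrac{t+1}{2},t]$), with a case split $|k|\le 1$ versus $|k|>1$ where $|\Lambda_-|$ behaves differently.
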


In the remainder of this section we give a proof of this theorem.

\subsection{\label{sec:leadingorderpsiphi}Leading order in $\hat{\psi}$ and
$\hat{\varphi}$}

In view of Proposition~\ref{prop:BapqSpaces} and
Remark~\ref{rem:propBapqSpaces}, the leading order term of $\hat{\psi}$ \ and
$\hat{\varphi}$ are to be extracted from $\hat{\psi}_{1,1}$ and $\hat{\varphi
}_{1,1}$, respectively. We use that $\hat{\psi},\hat{\varphi}\in
\mathcal{B}_{\alpha,\frac{1}{2},\infty}\supset\mathcal{B}_{a,\frac{1}{2},2}$,
since for these functions we are not interested in the wake behavior. We have
(see \cite{Hillairet.Wittwer-Existenceofstationary2009}),%
\begin{align}
\hat{\psi}_{1,1}(k,t)  &  =\frac{1}{2}e^{-|k|(t-1)}\int_{1}^{t}h_{1,1}%
(k,s-1)\hat{Q}_{1}(k,s)ds~,\label{eq:psi11}\\
\hat{\varphi}_{1,1}(k,t)  &  =\frac{1}{2}e^{-|k|(t-1)}\int_{1}^{t}%
k_{1,1}(k,s-1)\hat{Q}_{1}(k,s)ds~, \label{eq:phi11}%
\end{align}
with%
\begin{align}
h_{1,1}(k,\sigma)  &  =-e^{|k|\sigma}+\frac{(|k|+\kappa)^{2}}{ik}%
e^{-|k|\sigma}-2\frac{\kappa(|k|+\kappa)}{ik}e^{-\kappa\sigma}~,
\label{eq:h11}\\
k_{1,1}(k,\sigma)  &  =-\frac{|k|}{ik}h_{1,1}(k,\sigma)~. \label{eq:k11}%
\end{align}
Formally, we get from (\ref{eq:psi11}) and (\ref{eq:phi11})
\begin{align*}
\lim_{t\rightarrow\infty}\sqrt{t}\hat{\psi}_{1,1}(k/t,t)  &  =-c_{1}\sqrt
{-ik}e^{-|k|}=:\hat{\psi}_{1,1}^{\mathrm{l}}(k)~,\\
\lim_{t\rightarrow\infty}\sqrt{t}\hat{\varphi}_{1,1}(k/t,t)  &  =c_{1}%
\frac{|k|}{ik}\sqrt{-ik}e^{-|k|}=:\hat{\varphi}_{1,1}^{\mathrm{l}}(k)~,
\end{align*}
with
\begin{equation}
c_{1}=\int_{1}^{\infty}\left(  s-1\right)  \hat{Q}_{1}\left(  0,s\right)  ds~.
\label{eq:c1}%
\end{equation}
This motivates the definition of the functions
\begin{align}
\hat{\psi}_{\mathrm{as,1}}(k,t)  &  =\frac{1}{\sqrt{t}}\hat{\psi}%
_{1,1}^{\mathrm{l}}(kt)=-c_{1}\sqrt{-ik}e^{-|k|t}~,\label{eq:psias}\\
\hat{\varphi}_{\mathrm{as,1}}(k,t)  &  =\frac{1}{\sqrt{t}}\hat{\varphi}%
_{1,1}^{\mathrm{l}}(kt)=c_{1}\frac{|k|}{ik}\sqrt{-ik}e^{-|k|t}~.
\label{eq:phias}%
\end{align}
Note that $\hat{\psi}_{\mathrm{as,1}},\hat{\varphi}_{\mathrm{as,1}}%
\in\mathcal{B}_{\alpha,\frac{1}{2},\infty}$. We now show that%
\begin{align}
\hat{\psi}_{1,1}-\hat{\psi}_{\mathrm{as,1}}  &  \in\mathcal{B}_{\alpha
^{\prime},1,\infty}~,\label{eq:psiRemainderSpace}\\
\hat{\varphi}_{1,1}-\hat{\varphi}_{\mathrm{as,1}}  &  \in\mathcal{B}%
_{\alpha^{\prime},1,\infty}~. \label{eq:phiRemainderSpace}%
\end{align}

\begin{proof}
We have
\[
\hat{\psi}_{1,1}=-\frac{|k|}{ik}\hat{\varphi}_{1,1}~,
\]
and thus all the bounds on $\hat{\psi}_{1,1}$ are directly transposable to
$\hat{\varphi}_{1,1}$, and we only present the proof for $\hat{\psi}_{1,1}$.
In order to prove (\ref{eq:psiRemainderSpace}) we analyze%
\begin{align*}
\hat{\psi}_{1,1}(k,t)-\hat{\psi}_{\mathrm{as,1}}(k,t)  &  =\frac{1}%
{2}e^{-|k|(t-1)}\int_{1}^{t}h_{1,1}(k,s-1)\hat{Q}_{1}(k,s)ds\\
&  +\frac{1}{2}e^{-|k|t}\int_{1}^{\infty}2\sqrt{-ik}(s-1)\hat{Q}_{1}(0,s)ds~.
\end{align*}
We rewrite this expression as a sum of terms which can easily be
bounded.\ Namely,%
\[
\hat{\psi}_{1,1}(k,t)-\hat{\psi}_{\mathrm{as,1}}(\hat{\psi},t)=\sum_{i=1}%
^{3}\hat{\psi}_{i}^{r,1}~,
\]
with%
\begin{align*}
\hat{\psi}_{1}^{r,1}  &  =\frac{1}{2}\left(  e^{-|k|(t-1)}-e^{-|k|t}\right)
\int_{1}^{t}h_{1,1}(k,s-1)\hat{Q}_{1}(k,s)ds~,\\
\hat{\psi}_{2}^{r,1}  &  =\frac{1}{2}e^{-|k|t}\int_{1}^{t}\left(
h_{1,1}(k,s-1)\hat{Q}_{1}(k,s)+2\sqrt{-ik}(s-1)\hat{Q}_{1}(0,s)\right)  ds~,\\
\hat{\psi}_{3}^{r,1}  &  =\frac{1}{2}e^{-|k|t}\int_{t}^{\infty}2\sqrt
{-ik}(s-1)\hat{Q}_{1}(0,s)ds~.
\end{align*}

To bound $\hat{\psi}_{1}^{r,1}$ we use that
\[
|h_{1,1}(k,\sigma)|\leq\mathrm{const.}~(1+|k|)e^{|k|\sigma}\min
\{1,(1+|k|^{1/2})|k|^{1/2}\sigma\}~,
\]
inequality (\ref{eq:kfortinmu}), Propositions~\ref{prop:sgk1} and
\ref{prop:sgk2}, so that we get%
\begin{align*}
|\hat{\psi}_{1}^{r,1}|  &  =\left\vert \frac{1}{2}\left(  e^{-|k|(t-1)}%
-e^{-|k|t}\right)  \int_{1}^{t}h_{1,1}(k,s-1)\hat{Q}_{1}(k,s)ds\right\vert \\
&  \leq\mathrm{const.}~e^{-|k|t}|k|e^{|k|}\int_{1}^{t}(1+|k|)e^{|k|\sigma}%
\min\{1,(1+|k|^{1/2})|k|^{1/2}\sigma\}\mu_{1}(k,s)ds\\
&  \leq\mathrm{const.}\left(  \frac{1}{t^{3/2}}\bar{\mu}_{\alpha-1}%
(k,t)+\frac{1}{t^{4}}\tilde{\mu}_{\alpha-1}(k,t)\right)  ~,
\end{align*}
which shows that $\hat{\psi}_{1}^{r,1}\in\mathcal{B}_{\alpha^{\prime},\frac
{3}{2},\infty}$.

To bound $\hat{\psi}_{2}^{r,1}$ we first note that by
(\ref{eq:meanValueTheoremQ1})%
\begin{align*}
&  h_{1,1}(k,\sigma)\hat{Q}_{1}(k,s)+2\sqrt{-ik}\sigma\hat{Q}_{1}(0,s)\\
&  =(h_{1,1}(k,\sigma)+2\sqrt{-ik}\sigma)\hat{Q}_{1}(k,s)-2\sqrt{-ik}%
k\sigma\partial_{k}\hat{Q}_{1}(\zeta,s)~,
\end{align*}
for some $\zeta\in\lbrack0,k]$. We analyze the expression%
\[
h_{1,1}(k,\sigma)+2\sqrt{-ik}\sigma=-e^{|k|\sigma}+\frac{(|k|+\kappa)^{2}}%
{ik}e^{-|k|\sigma}-2\frac{\kappa(|k|+\kappa)}{ik}e^{-\kappa\sigma}+2\sqrt
{-ik}\sigma
\]
in further detail, with $h_{1,1}$ given by (\ref{eq:h11}). A straightforward
bound is%
\begin{equation}
\left\vert h_{1,1}(k,\sigma)+2\sqrt{-ik}\sigma\right\vert \leq\mathrm{const.}%
~(1+|k|(\sigma+1))e^{|k|\sigma}~, \label{eq:psiBound1}%
\end{equation}
but since the leading terms cancel, we also have%
\begin{align*}
h_{1,1}(k,\sigma)+2\sqrt{-ik}\sigma &  =-\left(  e^{|k|\sigma}-1-|k|\sigma
\right)  -\left(  e^{-|k|\sigma}-1+|k|\sigma\right)  +2\left(  e^{-\kappa
\sigma}-1+\kappa\sigma\right) \\
&  +\frac{2|k|^{2}+2|k|\kappa}{ik}\left(  \left(  e^{-|k|\sigma}-1\right)
-\left(  e^{-\kappa\sigma}-1\right)  \right)  -2\kappa\sigma+2\sqrt{-ik}%
\sigma~,
\end{align*}
which we can bound, using (\ref{eq:kappa-sqrt(-ik)Bound}), by%
\begin{align}
\left\vert h_{1,1}(k,\sigma)+2\sqrt{-ik}\sigma\right\vert  &  \leq
\mathrm{const.}~|k|^{2}\sigma^{2}e^{|k|\sigma}+\mathrm{const.}~|k|^{2}%
\sigma^{2}+\mathrm{const.}~|\kappa|^{2}\sigma^{2}\nonumber\\
&  +\mathrm{const.}~(|k|^{1/2}+|k|)(|k|\sigma+|\kappa|\sigma)+\mathrm{const.}%
~|k|^{3/2}\sigma\nonumber\\
&  \leq\mathrm{const.}~\sigma(\sigma+1)(|k|+|k|^{2})e^{|k|\sigma}~.
\label{eq:psiBound2}%
\end{align}
We have used here, and shall routinely use again throughout this paper without
further explicit mention, that for all $z\in\mathbb{C}$ with
$\operatorname{Re}(z)\leq0$ and $N\in\mathbb{N}_{0}$,
\[
\left\vert \frac{e^{z}-\sum_{n=0}^{N}\frac{1}{n!}z^{n}}{z^{N+1}}\right\vert
\leq\mathrm{const.}~,
\]
and for all $z\in\mathbb{C}$ with $\operatorname{Re}(z)>0$
\[
\left\vert \frac{e^{z}-\sum_{n=0}^{N}\frac{1}{n!}z^{n}}{z^{N+1}}\right\vert
\leq\mathrm{const.}~e^{\operatorname{Re}(z)}~.
\]
Therefore, using (\ref{eq:psiBound1}) and (\ref{eq:psiBound2}), we get%
\begin{equation}
\left\vert h_{1,1}(k,\sigma)+2\sqrt{-ik}\sigma\right\vert \leq\mathrm{const.}%
~\min\{(1+|k|(\sigma+1)),(|k|+|k|^{2})\sigma(\sigma+1)\}e^{|k|\sigma}~.
\label{eq:psiBoundTot}%
\end{equation}
Collecting these bounds yields%
\begin{align*}
|\hat{\psi}_{2}^{r,1}|  &  =\left\vert \frac{1}{2}e^{-|k|t}\int_{1}^{t}\left(
h_{1,1}(k,s-1)\hat{Q}_{1}(k,s)+2\sqrt{-ik}(s-1)\hat{Q}_{1}(0,s)\right)
ds\right\vert \\
&  \leq\mathrm{const.}~e^{-|k|t}\int_{1}^{t}\left\vert h_{1,1}(k,\sigma
)+2\sqrt{-ik}\sigma\right\vert \mu_{1}(k,s)ds\\
&  +\mathrm{const.}~e^{-|k|t}|k|^{3/2}\int_{1}^{t}(s-1)\left\vert \partial
_{k}\hat{Q}_{1}(\zeta,s)\right\vert ds~.
\end{align*}
By (\ref{eq:dkomegaBound}) and (\ref{eq:kpexpabskt}) the second term is in
$\mathcal{B}_{\alpha,\frac{3}{2}-\delta,\infty}$. Using (\ref{eq:psiBoundTot})
and Propositions~\ref{prop:sgk1} and \ref{prop:sgk2} we also show that%
\begin{align*}
&  \left\vert e^{-|k|t}\int_{1}^{t}\left(  h_{1,1}(k,s-1)+2\sqrt
{-ik}(s-1)\right)  \hat{Q}_{1}(k,s)ds\right\vert \\
&  \leq\mathrm{const.}\left(  \frac{1}{t}\bar{\mu}_{\alpha}(k,t)+\frac
{1}{t^{5/2}}\bar{\mu}_{\alpha}(k,t)+\frac{1}{t^{3}}\tilde{\mu}_{\alpha
}(k,t)\right)  ~,
\end{align*}
such that, all in all, $\hat{\psi}_{2}^{r,1}\in\mathcal{B}_{\alpha,1,\infty}$.

Finally, using (\ref{eq:kpexpabskt}), we have%
\[
|\hat{\psi}_{3}^{r,1}|=\left\vert e^{-|k|t}\int_{t}^{\infty}\sqrt
{-ik}(s-1)\hat{Q}_{1}(0,s)ds\right\vert \leq\mathrm{const.}~e^{-|k|t}%
|k|^{1/2}\frac{1}{t^{3/2}}\in\mathcal{B}_{\alpha,2,\infty}~.
\]

Gathering the bounds on the $\hat{\psi}_{i}^{r,1}$ yields
(\ref{eq:psiRemainderSpace}), and by the opening remark of the proof also
(\ref{eq:phiRemainderSpace}).
\end{proof}

\subsection{Leading order in $\hat{\eta}$ and $\hat{\omega}$}

In view of Proposition~\ref{prop:BapqSpaces} and
Remark~\ref{rem:propBapqSpaces}, the leading order term of $\hat{\eta}$ and
$\hat{\omega}$ are to be extracted from $\hat{\eta}_{1,1}$ and $\hat{\omega
}_{1,1}$, respectively. We have (see
\cite{Hillairet.Wittwer-Existenceofstationary2009}),%
\begin{align}
\hat{\eta}_{1,1}(k,t) &  =\frac{1}{2}e^{-\kappa(t-1)}\int_{1}^{t}%
g_{1,1}(k,s-1)\hat{Q}_{1}(k,s)ds~,\label{eq:eta11}\\
\hat{\omega}_{1,1}(k,t) &  =\frac{1}{2}e^{-\kappa(t-1)}\int_{1}^{t}%
f_{1,1}(k,s-1)\hat{Q}_{1}(k,s)ds~,\label{eq:omega11}%
\end{align}
with%
\begin{align}
g_{1,1}\left(  k,\sigma\right)   &  =\frac{\kappa}{ik}\left(  e^{\kappa\sigma
}+\frac{(|k|+\kappa)^{2}}{ik}e^{-\kappa\sigma}-2\frac{|k|(|k|+\kappa)}%
{ik}e^{-|k|\sigma}\right)  ~,\label{eq:g11}\\
f_{1,1}\left(  k,\sigma\right)   &  =\frac{ik}{\kappa}g_{1,1}(k,\sigma
)~.\label{eq:f11}%
\end{align}
Formally, we get from (\ref{eq:eta11}) and (\ref{eq:omega11})%
\begin{align*}
\lim_{t\rightarrow\infty}\hat{\eta}_{1,1}(k/t^{2},t) &  =-c_{1}e^{-\sqrt{-ik}%
}=:\hat{\eta}_{1,1}^{\mathrm{l}}(k)~,\\
\lim_{t\rightarrow\infty}t\hat{\omega}_{1,1}(k/t^{2},t) &  =c_{1}\sqrt
{-ik}e^{-\sqrt{-ik}}=:\hat{\omega}_{1,1}^{\mathrm{l}}(k)~,
\end{align*}
with $c_{1}$ as defined in (\ref{eq:c1}). This motivates the definition of the
functions%
\begin{align}
\hat{\eta}_{\mathrm{as,1}}(k,t) &  =\hat{\eta}_{1,1}^{\mathrm{l}}%
(kt^{2})=-c_{1}e^{-\sqrt{-ik}t}~,\label{eq:etaas}\\
\hat{\omega}_{\mathrm{as,1}}(k,t) &  =\frac{1}{t}\hat{\omega}_{1,1}%
^{\mathrm{l}}(kt^{2})=c_{1}\sqrt{-ik}e^{-\sqrt{-ik}t}~.\label{eq:omegaas}%
\end{align}
Note that $\hat{\eta}_{\mathrm{as,1}}\in\mathcal{B}_{\alpha,\infty,0}$ and
$\hat{\omega}_{\mathrm{as,1}}\in\mathcal{B}_{\alpha,\infty,1}$. We now show
that%
\begin{align}
\hat{\eta}_{1,1}-\hat{\eta}_{\mathrm{as,1}} &  \in\mathcal{B}_{\alpha^{\prime
},\frac{3}{2},1}~,\label{eq:etaRemainderSpace}\\
\hat{\omega}_{1,1}-\hat{\omega}_{\mathrm{as,1}} &  \in\mathcal{B}%
_{\alpha^{\prime},\frac{5}{2},2}~.\label{eq:omegaRemainderSpace}%
\end{align}

\begin{proof}
We have%
\[
\hat{\omega}_{1,1}=\frac{ik}{\kappa}\hat{\eta}_{1,1}~,
\]
with, see Appendix~\ref{sec:technical},
\[
\mathrm{const.}\leq\left\vert \frac{ik}{\kappa}\right\vert \leq\mathrm{const.}%
~\min\{1,|\Lambda_{-}|\}~,
\]
which means that the bounds on $\hat{\omega}_{1,1}$ are the same as those for
$\hat{\eta}_{1,1}$ for $|k|>1$, but have an additional factor of $|\Lambda
_{-}|$ for $|k|\leq1$. This results in an increase of $1$ in both the indices
$p$ and $q$ for the components $\omega$ when compared to the ones for
$\hat{\eta}$. This means that $\hat{\omega}$ decays $1/t$ faster than
$\hat{\eta}$, and since%
\[
\lim_{t\rightarrow\infty}t\cdot\frac{ik}{t^{2}\kappa(k/t^{2})}=-\sqrt{-ik}~,
\]
the asymptote of $\hat{\omega}$ is naturally derived from the one of
$\hat{\eta}$. We therefore only present the details of the proof for
$\hat{\eta}$, since the proof for $\omega$ can easily be recovered by
inserting the appropriate factors in the proof for $\hat{\eta}$.

In order to prove (\ref{eq:etaRemainderSpace}) we set%
\[
\hat{\eta}_{1,1}(k,t)-\hat{\eta}_{\mathrm{as,1}}(k,t)=\sum_{i=1}^{4}\hat{\eta
}_{i}^{r,1}~,
\]
where%
\begin{align*}
\hat{\eta}_{1}^{r,1}  &  =\frac{1}{2}\left(  e^{-\kappa(t-1)}-e^{-\kappa
t}\right)  \int_{1}^{t}g_{1,1}(k,s-1)\hat{Q}_{1}(k,s)ds~,\\
\hat{\eta}_{2}^{r,1}  &  =\frac{1}{2}e^{-\kappa t}\int_{1}^{t}\left(
g_{1,1}(k,s-1)\hat{Q}_{1}(k,s)+2(s-1)\hat{Q}_{1}(0,s)\right)  ds~,\\
\hat{\eta}_{3}^{r,1}  &  =-\left(  e^{-\kappa t}-e^{-\sqrt{-ik}t}\right)
\int_{1}^{t}(s-1)\hat{Q}_{1}(0,s)ds~,\\
\hat{\eta}_{4}^{r,1}  &  =e^{-\sqrt{-ik}t}\int_{t}^{\infty}(s-1)\hat{Q}%
_{1}(0,s)ds~.
\end{align*}
We have%
\[
|g_{1,1}(k,\sigma)|\leq\left\{
\begin{array}
[c]{lc}%
\mathrm{const.}~\sigma e^{|\Lambda_{-}|\sigma} & \mathrm{for}~|k|\leq1\\
\mathrm{const.}~|\Lambda_{-}|e^{|\Lambda_{-}|\sigma} & \mathrm{for}~|k|>1
\end{array}
\right.  ,
\]
and we treat the two cases separately, using both times
Propositions~\ref{prop:sgL1} and \ref{prop:sgL2}. For $|k|\leq1$ we have%
\begin{align*}
|\hat{\eta}_{1}^{r,1}|  &  =\left\vert \frac{1}{2}\left(  e^{-\kappa
(t-1)}-e^{-\kappa t}\right)  \int_{1}^{t}g_{1,1}(k,s-1)\hat{Q}_{1}%
(k,s)ds\right\vert \\
&  \leq\mathrm{const.}~e^{\Lambda_{-}(t-1)}|\Lambda_{-}|\int_{1}^{t}\sigma
e^{|\Lambda_{-}|\sigma}\mu_{1}(k,s)ds\\
&  \leq\mathrm{const.}\left(  \frac{1}{t^{5/2}}\bar{\mu}_{\alpha}%
(k,t)+\frac{1}{t}\tilde{\mu}_{\alpha}(k,t)\right)  ~,
\end{align*}
and for $|k|>1$ we have, using (\ref{eq:kfortinmu}),%
\begin{align*}
|\hat{\eta}_{1}^{r,1}|  &  =\left\vert \frac{1}{2}\left(  e^{-\kappa
(t-1)}-e^{-\kappa t}\right)  \int_{1}^{t}g_{1,1}(k,s-1)\hat{Q}_{1}%
(k,s)ds\right\vert \\
&  \leq\mathrm{const.}~e^{\Lambda_{-}(t-1)}|\Lambda_{-}|\int_{1}^{t}%
|\Lambda_{-}|e^{|\Lambda_{-}|\sigma}\mu_{1}(k,s)ds\\
&  \leq\mathrm{const.}\left(  \frac{1}{t^{4}}\bar{\mu}_{\alpha-1}+\frac
{1}{t^{3}}\tilde{\mu}_{\alpha-1}\right)  ~,
\end{align*}
so that $\hat{\eta}_{1}^{r,1}\in\mathcal{B}_{\alpha^{\prime},\frac{5}{2},1}$.

To bound $\hat{\eta}_{2}^{r,1}$ we note that by (\ref{eq:meanValueTheoremQ1})%
\begin{align*}
&  g_{1,1}(k,\sigma)\hat{Q}_{1}(k,s)+2\sigma\hat{Q}_{1}(0,s)\\
&  =(g_{1,1}(k,\sigma)+2\sigma)\hat{Q}_{1}(k,s)-2\sigma k\partial_{k}\hat
{Q}_{1}(\zeta,s)~,
\end{align*}
for some $\zeta\in\lbrack0,k]$. We first analyze the expression%
\[
g_{1,1}(k,\sigma)+2\sigma=\frac{\kappa}{ik}\left(  e^{\kappa\sigma}%
+\frac{(|k|+\kappa)^{2}}{ik}e^{-\kappa\sigma}-2\frac{|k|(|k|+\kappa)}%
{ik}e^{-|k|\sigma}+2\frac{ik}{\kappa}\sigma\right)  ~.
\]
A straightforward bound is%
\begin{equation}
|g_{1,1}(k,\sigma)+2\sigma|\leq\left\{
\begin{array}
[c]{lc}%
\mathrm{const.}~\sigma e^{|\Lambda_{-}|\sigma} & \mathrm{for}~|k|\leq1\\
\mathrm{const.}~(\sigma+1+|\Lambda_{-}|)e^{|\Lambda_{-}|\sigma} &
\mathrm{for}~|k|>1
\end{array}
\right.  . \label{eq:etaBound1}%
\end{equation}
Since the leading terms cancel, we also have%
\begin{align}
g_{1,1}(k,\sigma)+2\sigma &  =\frac{\kappa}{ik}\left(  (e^{\kappa\sigma
}-1-\kappa\sigma)-(e^{-\kappa\sigma}-1+\kappa\sigma)\right) \nonumber\\
&  +\frac{\kappa}{ik}\left(  \frac{2|k|^{2}+2|k|\kappa}{ik}\left(
(e^{-\kappa\sigma}-1)-(e^{-|k|\sigma}-1)\right)  +2\kappa\sigma+2\frac
{ik}{\kappa}\sigma\right)  ~,\nonumber
\end{align}
which we can bound by%
\begin{align}
\left\vert g_{1,1}(k,\sigma)+2\sigma\right\vert  &  \leq\mathrm{const.}%
\left\vert \frac{\kappa}{ik}\right\vert \left(  |\Lambda_{-}|^{2}\sigma
^{2}e^{|\Lambda_{-}|\sigma}+|\Lambda_{-}|^{2}\sigma^{2}+\left\vert
\frac{2|k|^{2}+2|k|\kappa}{ik}\right\vert \left(  |\Lambda_{-}|\sigma
+|k|\sigma\right)  +2|\Lambda_{-}|^{2}\sigma\right) \nonumber\\
&  \leq\left\{
\begin{array}
[c]{lc}%
\mathrm{const.}~|\Lambda_{-}|\sigma(\sigma+1)e^{|\Lambda_{-}|\sigma} &
\mathrm{for}~|k|\leq1\\
\mathrm{const.}~|\Lambda_{-}|^{2}\sigma(\sigma+1)e^{|\Lambda_{-}|\sigma} &
\mathrm{for}~|k|>1
\end{array}
\right.  , \label{eq:etaBound2}%
\end{align}
using that%
\[
\left\vert \kappa+\frac{ik}{\kappa}\right\vert \leq\left\vert \frac{k^{2}%
-ik}{\kappa}+\frac{ik}{\kappa}\right\vert \leq\mathrm{const.}~|k|^{3/2}%
\leq\mathrm{const.}~|\Lambda_{-}|^{2}~.
\]
Therefore, using (\ref{eq:etaBound1}) and (\ref{eq:etaBound2}), we get%
\begin{equation}
\left\vert g_{1,1}(k,\sigma)+2\sigma\right\vert \leq\left\{
\begin{array}
[c]{lc}%
\mathrm{const.}~\sigma e^{|\Lambda_{-}|\sigma}\min\{1,|\Lambda_{-}%
|(\sigma+1)\} & \mathrm{for}~|k|\leq1\\
\mathrm{const.}~e^{|\Lambda_{-}|\sigma}\min\{(\sigma+1+|\Lambda_{-}%
|),|\Lambda_{-}|^{2}\sigma(\sigma+1)\} & \mathrm{for}~|k|>1
\end{array}
\right.  . \label{eq:etaBoundTot}%
\end{equation}
Collecting these bounds yields%
\begin{align*}
|\hat{\eta}_{2}^{r,1}|  &  =\left\vert \frac{1}{2}e^{-\kappa t}\int_{1}%
^{t}\left(  g_{1,1}(k,\sigma)\hat{Q}_{1}(k,s)+2\sigma\hat{Q}_{1}(0,s)\right)
ds\right\vert \\
&  \leq\mathrm{const.}~e^{\Lambda_{-}t}\int_{1}^{t}\left\vert g_{1,1}%
(k,\sigma)+2\sigma\right\vert \mu_{1}(k,s)ds+\mathrm{const.}~e^{\Lambda_{-}%
t}|k|\int_{1}^{t}\sigma\left\vert \partial_{k}\hat{Q}_{1}(\zeta,s)\right\vert
ds~.
\end{align*}
The second term of this inequality on $|\hat{\eta}_{2}^{r,1}|$ can be
integrated and bounded due to (\ref{eq:dkomegaBound}), and is in
$\mathcal{B}_{\alpha,\infty,\frac{3}{2}}$ by (\ref{eq:kqexpsquareikt}). For
the first term, using Propositions~\ref{prop:sgL1} and \ref{prop:sgL2} with
the bound (\ref{eq:etaBoundTot}) we have, for $|k|\leq1$,%
\begin{align*}
&  \left\vert \frac{1}{2}e^{-\kappa(t-1)}\int_{1}^{t}\left(  g_{1,1}%
(k,\sigma)+2\sigma\right)  \hat{Q}_{1}(k,s)ds\right\vert \\
&  \leq\mathrm{const.}\left(  \frac{1}{t}\tilde{\mu}_{\alpha}(k,t)+\frac
{1}{t^{3/2}}\bar{\mu}_{\alpha}(k,t)+\frac{1}{t^{2}}\tilde{\mu}_{\alpha
}(k,t)\right)  ~,
\end{align*}
and for $|k|>1$,%
\begin{align*}
&  \left\vert \frac{1}{2}e^{-\kappa(t-1)}\int_{1}^{t}\left(  g_{1,1}%
(k,\sigma)+2\sigma\right)  \hat{Q}_{1}(k,s)ds\right\vert \\
&  \leq\mathrm{const.}\left(  \frac{1}{t^{2}}\tilde{\mu}_{\alpha}%
(k,t)+\frac{1}{t^{3/2}}\bar{\mu}_{\alpha}(k,t)+\frac{1}{t^{2}}\tilde{\mu
}_{\alpha}(k,t)\right)  ~,
\end{align*}
which shows that $\hat{\eta}_{2}^{r,1}\in\mathcal{B}_{\alpha,\frac{3}{2},1}$.

We now bound $\hat{\eta}_{3}^{r,1}$, which, using (\ref{eq:kqexpsquareikt}),
yields%
\[
|\hat{\eta}_{3}^{r,1}|=\left\vert \left(  e^{-\kappa t}-e^{-\sqrt{-ik}%
t}\right)  \int_{1}^{t}(s-1)\hat{Q}_{1}(0,s)ds\right\vert \leq\mathrm{const.}%
\left\vert e^{-\sqrt{-ik}t}\right\vert |k|^{3/2}t\in\mathcal{B}_{\alpha
,\infty,2}~.
\]

Finally, using (\ref{eq:kqexpsquareikt}), we have%
\[
|\hat{\eta}_{4}^{r,1}|=\left\vert e^{-\sqrt{-ik}t}\int_{t}^{\infty}%
(s-1)\hat{Q}_{1}(0,s)ds\right\vert \leq\mathrm{const.}\left\vert
e^{-\sqrt{-ik}t}\right\vert \frac{1}{t^{3/2}}\in\mathcal{B}_{\alpha
,\infty,\frac{3}{2}}~.
\]

Gathering the bounds on the $\hat{\eta}_{i}^{r,1}$ yields
(\ref{eq:etaRemainderSpace}), and by the opening remark of the proof also
(\ref{eq:omegaRemainderSpace}).
\end{proof}

\subsection{Leading order in $\partial_{k}\hat{\omega}$}

For technical reasons that will become clear in the procedure of extracting
second order asymptotic terms, it is necessary to give tighter bounds on
$\partial_{k}\hat{Q}_{1}=\hat{v}\ast\partial_{k}\hat{\omega}+\partial_{k}%
\hat{F}_{1}$ and $\partial_{k}\hat{\omega}$ (we recall that $\hat{\omega}$ is
continuous on $\mathbb{R}$ and $C^{1}$ on $\mathbb{R}\backslash\{0\}$, and
that the derivative on $\mathbb{R}$ is to be understood in the sense of
distributions). From \cite{Boeckle.Wittwer-Decayestimatessolutions2011} we
have%
\begin{align}
\kappa\partial_{k}\hat{\omega} &  \in\mathcal{B}_{\alpha^{\prime},\frac{3}%
{2},0}~,\label{eq:dkomegaKappaBound}\\
\kappa\partial_{k}\hat{\omega}-\kappa\partial_{k}\hat{\omega}_{1,1,1}%
-\kappa\partial_{k}\hat{\omega}_{2,1,1} &  \in\mathcal{B}_{\alpha^{\prime
},\frac{3}{2},1}~,\nonumber
\end{align}
with%
\begin{align}
\partial_{k}\hat{\omega}_{1,1,1}(k,t) &  =\frac{1}{2}\left(  \partial
_{k}e^{-\kappa\tau}\right)  \int_{1}^{t}f_{1,1}(k,s-1)\hat{Q}_{1}%
(k,s)ds~,\label{eq:dkomega111}\\
\partial_{k}\hat{\omega}_{2,1,1}(k,t) &  =\frac{1}{2}e^{-\kappa\tau}\int
_{1}^{t}\left(  \partial_{k}f_{1,1}(k,s-1)\right)  \hat{Q}_{1}%
(k,s)ds~,\label{eq:dkomega211}%
\end{align}
with $f_{1,1}$ given by (\ref{eq:f11}), with%
\[
\partial_{k}\kappa=\frac{2k-i}{2\kappa}~,
\]
and%
\begin{align}
\partial_{k}f_{1,1}(k,\sigma) &  =i\frac{\left(  |k|+\kappa\right)  ^{2}%
}{\kappa|k|}(e^{-|k|\sigma}-e^{-\kappa\sigma})\nonumber\\
&  +\frac{k^{2}+\kappa^{2}}{2\kappa k}(e^{\kappa\sigma}+e^{-\kappa\sigma
})\sigma\nonumber\\
&  +2i\frac{k^{2}+|k|\kappa}{k^{2}}\left(  \frac{k^{2}+\kappa^{2}}{2\kappa
}e^{-\kappa\sigma}-|k|e^{-|k|\sigma}\right)  \sigma~.\label{eq:dkf11}%
\end{align}
We have, from (\ref{eq:omegaas}),
\begin{equation}
\partial_{k}\hat{\omega}_{\mathrm{as}}(k)=i\frac{c_{1}}{2}\left(  1-\frac
{1}{\sqrt{-ik}t}\right)  te^{-\sqrt{-ik}t}~,\label{eq:dkomegaas}%
\end{equation}
with $c_{1}$ as defined by (\ref{eq:c1}). Note that $\partial_{k}\hat{\omega
}_{\mathrm{as}}\in\mathcal{B}_{\alpha,\infty,0}$. We now show that%
\begin{equation}
\kappa\partial_{k}\hat{\omega}_{1,1,1}+\kappa\partial_{k}\hat{\omega}%
_{2,1,1}-\kappa\partial_{k}\hat{\omega}_{\mathrm{as}}\in\mathcal{B}%
_{\alpha^{\prime\prime},\frac{3}{2},1}~.\label{eq:dkomegaRemainderSpace}%
\end{equation}

\begin{remark}
Note that%
\[
\mathcal{F}^{-1}[-i\partial_{k}\hat{\omega}_{\mathrm{as}}(k,y)]=x\omega
_{W}(x,y)=x\mathcal{F}^{-1}[\hat{\omega}_{\mathrm{as,1}}(k,y)]~.
\]

\end{remark}

\begin{proof}
In order to prove (\ref{eq:dkomegaRemainderSpace}) we note that%
\begin{align}
&  \kappa\partial_{k}\hat{\omega}_{1,1,1}(k,t)+\kappa\partial_{k}\hat{\omega
}_{2,1,1}(k,t)-\kappa\partial_{k}\hat{\omega}_{\mathrm{as}}(k,t)\nonumber\\
&  =-\frac{2k-i}{4}\tau e^{-\kappa\tau}\int_{1}^{t}f_{1,1}(k,s-1)\hat{Q}%
_{1}(k,s)ds\nonumber\\
&  +\frac{1}{2}e^{-\kappa\tau}\int_{1}^{t}\kappa\partial_{k}f_{1,1}%
(k,s-1)\hat{Q}_{1}(k,s)ds\nonumber\\
&  -i\frac{\kappa}{2}\left(  1-\frac{1}{\sqrt{-ik}t}\right)  te^{-\sqrt{-ik}%
t}\int_{1}^{\infty}(s-1)\hat{Q}_{1}(0,s)ds~.\nonumber
\end{align}
We rewrite this expression as a sum of terms which can easily be
bounded.\ Namely,%
\[
\kappa\partial_{k}\hat{\omega}_{1,1,1}(k,t)+\kappa\partial_{k}\hat{\omega
}_{2,1,1}(k,t)-\kappa\partial_{k}\hat{\omega}_{\mathrm{as}}(k,t)=\sum
_{i=1}^{5}\kappa\partial_{k}\hat{\omega}_{i}^{r}~,
\]
with%
\begin{align*}
\kappa\partial_{k}\hat{\omega}_{1}^{r}  &  =-\frac{2k-i}{4}\tau\left(
e^{-\kappa(t-1)}-e^{-\kappa t}\right)  \int_{1}^{t}f_{1,1}(k,s-1)\hat{Q}%
_{1}(k,s)ds~,\\
\kappa\partial_{k}\hat{\omega}_{2}^{r}  &  =-\frac{2k-i}{4}\tau e^{-\kappa
t}\int_{1}^{t}f_{1,1}(k,s-1)\hat{Q}_{1}(k,s)ds-i\frac{\kappa}{2}%
te^{-\sqrt{-ik}t}\int_{1}^{t}(s-1)\hat{Q}_{1}(0,s)ds~,\\
\kappa\partial_{k}\hat{\omega}_{3}^{r}  &  =\frac{1}{2}\left(  e^{-\kappa
(t-1)}-e^{-\kappa t}\right)  \int_{1}^{t}\kappa\partial_{k}f_{1,1}%
(k,s-1)\hat{Q}_{1}(k,s)ds~,\\
\kappa\partial_{k}\hat{\omega}_{4}^{r}  &  =\frac{1}{2}e^{-\kappa t}\int
_{1}^{t}\kappa\partial_{k}f_{1,1}(k,s-1)\hat{Q}_{1}(k,s)ds+\frac{i\kappa
}{2\sqrt{-ik}}e^{-\sqrt{-ik}t}\int_{1}^{t}(s-1)\hat{Q}_{1}(0,s)ds~,\\
\kappa\partial_{k}\hat{\omega}_{5}^{r}  &  =-i\frac{\kappa}{2}\left(
1-\frac{1}{\sqrt{-ik}t}\right)  te^{-\sqrt{-ik}t}\int_{t}^{\infty}(s-1)\hat
{Q}_{1}(0,s)ds~.
\end{align*}
In the rest of this proof, we apply without mention (\ref{eq:kfortinmu}) to
eliminate spurious powers of $|\Lambda_{-}|$ whenever the conditions of
Propositions~\ref{prop:sgL1} and \ref{prop:sgL2} require it.

First we have%
\begin{align*}
|\kappa\partial_{k}\hat{\omega}_{1}^{r}|  &  =\left\vert \frac{2k-i}{4}%
\tau\left(  e^{-\kappa(t-1)}-e^{-\kappa t}\right)  \int_{1}^{t}f_{1,1}%
(k,s-1)\hat{Q}_{1}(k,s)ds\right\vert \\
&  \leq\mathrm{const.}~(1+|k|)te^{\Lambda_{-}(t-1)}|\Lambda_{-}|\int_{1}%
^{t}(1+|\Lambda_{-}|)e^{|\Lambda_{-}|\sigma}\min\{1,|\Lambda_{-}|\sigma
\}\mu_{1}(k,s)ds\\
&  \leq\mathrm{const.}~t\left(  \frac{1}{t^{2}}\tilde{\mu}_{\alpha-2}+\frac
{1}{t^{7/2}}\bar{\mu}_{\alpha-2}+\frac{1}{t^{4}}\tilde{\mu}_{\alpha-2}\right)
~,
\end{align*}
showing that $\kappa\partial_{k}\hat{\omega}_{1}^{r}\in\mathcal{B}%
_{\alpha^{\prime\prime},\frac{5}{2},1}$.

For $\kappa\partial_{k}\hat{\omega}_{2}^{r}$ we have%
\begin{align}
&  -\frac{2k-i}{4}\tau e^{-\kappa t}\int_{1}^{t}f_{1,1}(k,s-1)\hat{Q}%
_{1}(k,s)ds-i\frac{\kappa}{2}te^{-\sqrt{-ik}t}\int_{1}^{t}(s-1)\hat{Q}%
_{1}(0,s)ds\nonumber\\
&  =-\frac{2k-i}{4}te^{-\kappa t}\int_{1}^{t}f_{1,1}(k,s-1)\hat{Q}%
_{1}(k,s)ds-i\frac{\kappa}{2}te^{-\sqrt{-ik}t}\int_{1}^{t}(s-1)\hat{Q}%
_{1}(0,s)ds\label{eq:dkwr2a}\\
&  +e^{-\kappa t}\int_{1}^{t}\frac{2k-i}{4}f_{1,1}(k,s-1)\hat{Q}%
_{1}(k,s)ds~,\nonumber
\end{align}
where the last term can be bounded by applying Propositions~\ref{prop:sgL1}
and \ref{prop:sgL2}, so that%
\begin{align*}
&  \left\vert e^{-\kappa t}\int_{1}^{t}\frac{2k-i}{4}f_{1,1}(k,s-1)\hat{Q}%
_{1}(k,s)ds\right\vert \\
&  \leq\mathrm{const.}~e^{\Lambda_{-}\tau}\int_{1}^{t}(1+|\Lambda_{-}%
|^{2})e^{|\Lambda_{-}|\sigma}\min\{1,|\Lambda_{-}|\sigma\}\mu_{1}(k,s)ds\\
&  \leq\mathrm{const.}\left(  \frac{1}{t}\tilde{\mu}_{\alpha-1}(k,t)+\frac
{1}{t^{5/2}}\bar{\mu}_{\alpha-1}(k,t)+\frac{1}{t^{3}}\tilde{\mu}_{\alpha
-1}(k,t)\right)  ~,
\end{align*}
whereas for (\ref{eq:dkwr2a}), we get%
\begin{align}
&  -\frac{2k-i}{4}te^{-\kappa t}\int_{1}^{t}f_{1,1}(k,s-1)\hat{Q}%
_{1}(k,s)ds-i\frac{\kappa}{2}te^{-\sqrt{-ik}t}\int_{1}^{t}(s-1)\hat{Q}%
_{1}(0,s)ds\nonumber\\
&  =-\frac{t}{2}\left(  e^{-\kappa t}\int_{1}^{t}\frac{2k-i}{2}f_{1,1}%
(k,s-1)\hat{Q}_{1}(k,s)ds+e^{-\sqrt{-ik}t}\int_{1}^{t}i\kappa(s-1)\hat{Q}%
_{1}(0,s)ds\right) \nonumber\\
&  =\frac{t}{2}\left(  e^{-\kappa t}-e^{-\sqrt{-ik}t}\right)  \int_{1}%
^{t}i\kappa(s-1)\hat{Q}_{1}(0,s)ds\label{eq:dkwr2b}\\
&  -\frac{t}{2}e^{-\kappa t}\int_{1}^{t}\left(  \frac{2k-i}{2}f_{1,1}%
(k,s-1)\hat{Q}_{1}(k,s)+i\kappa(s-1)\hat{Q}_{1}(0,s)\right)  ds~.
\label{eq:dkwr2c}%
\end{align}
For (\ref{eq:dkwr2b}) we get, using (\ref{eq:exp(sqrt(-ik)-Kappa)}) and
(\ref{eq:kqexpsquareikt}),%
\[
\left\vert \frac{t}{2}\left(  e^{-\kappa t}-e^{-\sqrt{-ik}t}\right)  \int
_{1}^{t}i\kappa(s-1)\hat{Q}_{1}(0,s)ds\right\vert \leq\mathrm{const.}%
~t\left\vert e^{-\sqrt{-ik}t}\right\vert |k|^{3/2}t(|k|^{1/2}+|k|)\in
\mathcal{B}_{\alpha,\infty,2}~.
\]
To bound (\ref{eq:dkwr2c}) we note that, using (\ref{eq:meanValueTheoremQ1}),%
\begin{align*}
&  \frac{2k-i}{2}f_{1,1}(k,s-1)\hat{Q}_{1}(k,s)+i\kappa(s-1)\hat{Q}_{1}(0,s)\\
&  =\left(  \frac{2k-i}{2}f_{1,1}(k,s-1)+i\kappa(s-1)\right)  \hat{Q}%
_{1}(k,s)+i\kappa k(s-1)\partial_{k}\hat{Q}_{1}(\zeta,s)~,
\end{align*}
for some $\zeta\in\lbrack0,k]$, which allows us to rewrite (\ref{eq:dkwr2c})
as%
\begin{align}
&  e^{-\kappa t}\int_{1}^{t}\left(  \frac{2k-i}{2}f_{1,1}(k,s-1)\hat{Q}%
_{1}(k,s)+i\kappa(s-1)\hat{Q}_{1}(0,s)\right)  ds\nonumber\\
&  =e^{-\kappa t}\int_{1}^{t}\left(  \frac{2k-i}{2}f_{1,1}(k,s-1)+i\kappa
(s-1)\right)  \hat{Q}_{1}(k,s)ds\label{eq:dkwr2d}\\
&  +e^{-\kappa t}\int_{1}^{t}i\kappa k(s-1)\partial_{k}\hat{Q}_{1}%
(\zeta,s)ds~.\nonumber
\end{align}
For the last term we have, using (\ref{eq:kqexpsquareikt}),%
\[
\left\vert e^{-\kappa t}\int_{1}^{t}i\kappa k(s-1)\partial_{k}\hat{Q}%
_{1}(\zeta,s)ds\right\vert \leq\mathrm{const.}~e^{\Lambda_{-}t}|k|(|k|^{1/2}%
+|k|)\sqrt{t}\in\mathcal{B}_{\alpha,\infty,\frac{5}{2}}~.
\]
To bound (\ref{eq:dkwr2d}) we use that%
\begin{align*}
\frac{2k-i}{2}f_{1,1}(k,\sigma)+i\kappa\sigma &  =\frac{2k-i}{2}\left(
e^{\kappa\sigma}+\frac{(|k|+\kappa)^{2}}{ik}e^{-\kappa\sigma}-2\frac
{|k|(|k|+\kappa)}{ik}e^{-|k|\sigma}\right)  +i\kappa\sigma\\
&  =k\left(  e^{\kappa\sigma}+\frac{(|k|+\kappa)^{2}}{ik}e^{-\kappa\sigma
}-2\frac{|k|(|k|+\kappa)}{ik}e^{-|k|\sigma}\right) \\
&  -\frac{i}{2}\left(  \left(  e^{\kappa\sigma}-1-\kappa\sigma\right)
-\left(  e^{-\kappa\sigma}-1+\kappa\sigma\right)  \right) \\
&  +\frac{|k|(|k|+\kappa)}{k}\left(  (e^{-\kappa\sigma}-1)-(e^{-|k|\sigma
}-1\right)  )~,
\end{align*}
which, using the usual bound on $f_{1,1}$ and using the fact that leading
order terms\ cancel where we put them in evidence, we get%
\begin{align*}
&  \left\vert \frac{2k-i}{2}f_{1,1}(k,\sigma)+i\kappa\sigma\right\vert \\
&  \leq\mathrm{const.}~(|k|(1+|\Lambda_{-}|)\min\{1,|\Lambda_{-}|\sigma
\}+\min\{1,|\Lambda_{-}|\sigma\}|\Lambda_{-}|\sigma+|\Lambda_{-}|^{2}%
\sigma)e^{|\Lambda_{-}|\sigma}~,
\end{align*}
which, using Propositions~\ref{prop:sgL1} and \ref{prop:sgL2}, yields%
\[
e^{-\kappa t}\int_{1}^{t}\left(  \frac{2k-i}{2}f_{1,1}(k,s-1)+i\kappa
(s-1)\right)  \hat{Q}_{1}(k,s)ds\in\mathcal{B}_{\alpha-1,\frac{5}{2},2}~.
\]
All in all, we thus have $\kappa\partial_{k}\hat{\omega}_{2}^{r}\in
\mathcal{B}_{\alpha^{\prime},\frac{5}{2},1}$.

To bound $\kappa\partial_{k}\hat{\omega}_{3}^{r}$ we use the bound (see
\cite{Boeckle.Wittwer-Decayestimatessolutions2011})%
\[
\left\vert \kappa\partial_{k}f_{1,1}(k,\sigma)\right\vert \leq\mathrm{const.}%
~(1+|\Lambda_{-}|^{2})\sigma e^{|\Lambda_{-}|\sigma}~,
\]
and using Propositions~\ref{prop:sgL1} and \ref{prop:sgL2} we get%
\begin{align*}
&  \left\vert \frac{1}{2}\left(  e^{-\kappa(t-1)}-e^{-\kappa t}\right)
\int_{1}^{t}\kappa\partial_{k}f_{1,1}(k,s-1)\hat{Q}_{1}(k,s)ds\right\vert \\
&  \leq\mathrm{const.}~e^{\Lambda_{-}(t-1)}|\Lambda_{-}|\int_{1}%
^{t}(1+|\Lambda_{-}|^{2})\sigma e^{|\Lambda_{-}|\sigma}\mu_{1}(k,s)ds\\
&  \leq\mathrm{const.}\left(  \frac{1}{t^{1}}\tilde{\mu}_{\alpha-1}+\frac
{1}{t^{5/2}}\bar{\mu}_{\alpha-2}-\frac{1}{t^{3}}\bar{\mu}_{\alpha-2}\right)
~.
\end{align*}
Thus, $\kappa\partial_{k}\hat{\omega}_{3}^{r}\in\mathcal{B}_{\alpha
^{\prime\prime},\frac{5}{2},1}$.

To bound $\kappa\partial_{k}\hat{\omega}_{4}^{r}$ we use that
\begin{align}
\kappa\partial_{k}\hat{\omega}_{4}^{r}  &  =\frac{1}{2}e^{-\kappa t}\int
_{1}^{t}\kappa\partial_{k}f_{1,1}(k,s-1)\hat{Q}_{1}(k,s)ds-\frac{1}%
{2}e^{-\sqrt{-ik}t}\int_{1}^{t}\frac{\kappa\sqrt{-ik}}{k}(s-1)\hat{Q}%
_{1}(0,s)ds\nonumber\\
&  =\frac{1}{2}\left(  e^{-\kappa t}-e^{-\sqrt{-ik}t}\right)  \int_{1}%
^{t}\frac{\kappa\sqrt{-ik}}{k}(s-1)\hat{Q}_{1}(0,s)ds\label{eq:dkwr4a}\\
&  +\frac{1}{2}e^{-\kappa t}\int_{1}^{t}\left(  \kappa\partial_{k}%
f_{1,1}(k,s-1)\hat{Q}_{1}(k,s)-\frac{\kappa\sqrt{-ik}}{k}(s-1)\hat{Q}%
_{1}(0,s)\right)  ds~. \label{eq:dkwr4b}%
\end{align}
We first bound (\ref{eq:dkwr4a}) using\ (\ref{eq:exp(sqrt(-ik)-Kappa)}) and
(\ref{eq:kqexpsquareikt}). We have%
\begin{align*}
&  \left\vert \frac{1}{2}\left(  e^{-\kappa t}-e^{-\sqrt{-ik}t}\right)
\int_{1}^{t}\frac{\kappa\sqrt{-ik}}{k}(s-1)\hat{Q}_{1}(0,s)ds\right\vert \\
&  \leq\mathrm{const.}\left\vert e^{-\sqrt{-ik}t}\right\vert |k|^{3/2}%
t\frac{(|k|^{1/2}+|k|)|k|^{1/2}}{|k|}\in\mathcal{B}_{\alpha,\infty,2}~.
\end{align*}
To bound (\ref{eq:dkwr4b}) we note that, using (\ref{eq:meanValueTheoremQ1}),%
\begin{align*}
&  \kappa\partial_{k}f_{1,1}(k,s-1)\hat{Q}_{1}(k,s)-\frac{\kappa\sqrt{-ik}}%
{k}(s-1)\hat{Q}_{1}(0,s)\\
&  =\left(  \kappa\partial_{k}f_{1,1}(k,s-1)-\frac{\kappa\sqrt{-ik}}%
{k}(s-1)\right)  \hat{Q}_{1}(k,s)+\frac{\kappa\sqrt{-ik}}{k}k(s-1)\partial
_{k}\hat{Q}_{1}(\zeta,s)~,
\end{align*}
for some $\zeta\in\lbrack0,k]$. We next analyze%
\begin{align*}
\kappa\partial_{k}f_{1,1}(k,\sigma)-\frac{\kappa\sqrt{-ik}}{k}\sigma &
=i\frac{\left(  |k|+\kappa\right)  ^{2}}{|k|}\left(  \left(  e^{-|k|\sigma
}-1\right)  -\left(  e^{-\kappa\sigma}-1\right)  \right) \\
&  +\frac{k^{2}+\kappa^{2}}{2k}\left(  \left(  e^{\kappa\sigma}-1\right)
+\left(  e^{-\kappa\sigma}-1\right)  \right)  \sigma\\
&  +2i\frac{k^{2}+|k|\kappa}{k^{2}}\left(  \frac{k^{2}+\kappa^{2}}%
{2}e^{-\kappa\sigma}-|k|\kappa e^{-|k|\sigma}\right)  \sigma\\
&  +\frac{k^{2}+\kappa^{2}}{k}\sigma-\frac{\kappa\sqrt{-ik}}{k}\sigma~.
\end{align*}
For the last line we have, using (\ref{eq:kappa-sqrt(-ik)Bound}),%
\begin{align*}
\left\vert \frac{k^{2}+\kappa^{2}}{k}-\frac{\kappa\sqrt{-ik}}{k}\right\vert
&  \leq|k|+\mathrm{const.}\frac{|\mathbb{\Lambda}_{-}|}{|k|}\min\{|\Lambda
_{-}|^{2},|\Lambda_{-}|^{3}\}\\
&  \leq\mathrm{const.}~(|k|+|k|^{3})\leq\mathrm{const.}~|\Lambda_{-}%
|^{2}(1+|\Lambda_{-}|)~,
\end{align*}
and therefore
\[
\left\vert \kappa\partial_{k}f_{1,1}(k,\sigma)-\frac{\kappa\sqrt{-ik}}%
{k}\sigma\right\vert \leq\mathrm{const.}~(1+|\Lambda_{-}|^{2})|\Lambda
_{-}|\sigma(\sigma+1)e^{|\Lambda_{-}|\sigma}~.
\]
We can now bound (\ref{eq:dkwr4b}). Namely, we have,%
\begin{align*}
&  \left\vert \frac{1}{2}e^{-\kappa t}\int_{1}^{t}\left(  \kappa\partial
_{k}f_{1,1}(k,s-1)\hat{Q}_{1}(k,s)-\frac{\kappa\sqrt{-ik}}{k}(s-1)\hat{Q}%
_{1}(0,s)\right)  ds\right\vert \\
&  =\left\vert \frac{1}{2}e^{-\kappa t}\int_{1}^{t}\left(  \kappa\partial
_{k}f_{1,1}(k,s-1)-\frac{\kappa\sqrt{-ik}}{k}(s-1)\right)  \hat{Q}%
_{1}(k,s)ds\right\vert \\
&  +\left\vert \frac{1}{2}e^{-\kappa t}\int_{1}^{t}\kappa\sqrt{-ik}%
(s-1)\partial_{k}\hat{Q}_{1}(\zeta,s)ds\right\vert \\
&  \leq\mathrm{const.}~e^{\Lambda_{-}t}\int_{1}^{t}(1+|\Lambda_{-}%
|^{2})|\Lambda_{-}|\sigma se^{|\Lambda_{-}|\sigma}\mu_{1}(k,s)ds\\
&  +\mathrm{const.}~e^{\Lambda_{-}t}(|k|^{1/2}+|k|)|k|^{1/2}\sqrt{t}~,
\end{align*}
where by Propositions~\ref{prop:sgL1} and \ref{prop:sgL2}, and inequality
(\ref{eq:kfortinmu}), the first term is in $\mathcal{B}_{\alpha-2,\frac{3}%
{2},1}$ and where due to (\ref{eq:kqexpsquareikt}) the second term is in
$\mathcal{B}_{\alpha,\infty,\frac{3}{2}}$. Thus we get $\kappa\partial_{k}%
\hat{\omega}_{4}^{r}\in\mathcal{B}_{\alpha^{\prime\prime},\frac{3}{2},1}$.

Finally, to bound $\kappa\partial_{k}\hat{\omega}_{5}^{r}$, we use
(\ref{eq:kqexpsquareikt}), so that
\begin{align*}
&  \left\vert i\frac{\kappa}{2}\left(  1-\frac{1}{\sqrt{-ik}t}\right)
te^{-\sqrt{-ik}t}\int_{t}^{\infty}(s-1)\hat{Q}_{1}(0,s)ds\right\vert \\
&  \leq\mathrm{const.}(|k|^{1/2}t+1)(1+|k|^{1/2})\left\vert e^{-\sqrt{-ik}%
t}\right\vert \frac{1}{t^{3/2}}\in\mathcal{B}_{\alpha,\infty,3/2}~.
\end{align*}

Gathering all the bounds on the $\kappa\partial_{k}\hat{\omega}_{i}^{r}$ leads
to (\ref{eq:dkomegaRemainderSpace}).
\end{proof}

\subsection{Improvement of the bounds on the non-linear
terms\label{sec:finalBoundOpt}}

\subsubsection*{Improvement of the bounds on $\hat{Q}_{0}$ and $\hat{Q}_{1}%
$\label{sec:impQ0Q1}}

From Section~\ref{sec:funeq} we know that%
\[
\hat{Q}_{1}=(\hat{\omega}+\hat{\psi})\ast\hat{\omega}+\hat{F}_{2}%
\in\mathcal{B}_{\alpha,\frac{7}{2},4}~.
\]
The force term $\hat{F}_{2}$ is a function of rapid decrease in $k$ and of
compact support in $t$ and will thus not intervene in our bounds. Using
(\ref{eq:omegaas}), (\ref{eq:omegaRemainderSpace}), (\ref{eq:psias}),
(\ref{eq:psiRemainderSpace}), Propositions~\ref{prop:BapqSpaces} and
\ref{prop:optConvBapq} we have%
\begin{align}
\hat{\omega}\ast\hat{\omega}  &  \in\mathcal{B}_{\alpha,\frac{11}{2}%
,4}~,\label{eq:wconvw}\\
\hat{\psi}\ast\left(  \hat{\omega}-\hat{\omega}_{\mathrm{as,1}}\right)   &
\in\mathcal{B}_{\alpha^{\prime},4,\infty}~,\label{eq:pconvwr}\\
(\hat{\psi}-\hat{\psi}_{\mathrm{as,1}})\ast\hat{\omega}_{\mathrm{as,1}}  &
\in\mathcal{B}_{\alpha^{\prime},4,\infty}~. \label{eq:prconvwa}%
\end{align}
For the term $\hat{\psi}_{\mathrm{as,1}}\ast\hat{\omega}_{\mathrm{as,1}}$ we
can take advantage of the particular form of the explicit functions in direct
space in order to improve the index $p$ by $1/2$ in comparison to what would
be possible with the bounds on the convolution. In direct space we have,%
\[
Q_{1}^{\mathrm{d}}=\mathcal{F}^{-1}[\hat{Q}_{1}^{\mathrm{d}}]=\mathcal{F}%
^{-1}[\hat{\psi}_{\mathrm{as,1}}\ast\hat{\omega}_{\mathrm{as,1}}]=\frac
{1}{y^{3/2}}\psi_{1}(x/y)\cdot\frac{1}{y^{3}}\omega_{W}(x/y^{2})~,
\]
where $\psi_{1}$ and $\omega_{W}$ are explicitly represented by
(\ref{eq:asphi1}) and (\ref{eq:asomegaW}) in Appendix~\ref{sec:explicitasfun}.
We use various properties of these functions as well as their derivatives of
order $n$, represented by the superscript $^{(n)}$, which are easily
understood from their explicit representation and shall thus not be proved. We
show that using the definition of the function spaces $\mathcal{B}%
_{\alpha,\infty,q}$ we can improve the bound on $\hat{Q}_{1}$. We require that
all the terms of the form%
\[
\left\vert (|k|y^{2})^{a}\hat{Q}_{1}^{\mathrm{d}}(k,y)\right\vert
=(|k|y^{2})^{a}\left\vert \int_{\mathbb{R}}e^{ikx}\frac{1}{y^{9/2}}\psi
_{1}(x/y)\omega_{W}(x/y^{2})dx\right\vert ~,
\]
for $a\in\mathbb{N}$, $0\leq a\leq\left\lfloor \alpha\right\rfloor +1$, be
bounded. Since, for $n\geq0$, all the $\psi_{1}^{(n)}$ and $\omega_{W}^{(n)}$
are in $C^{\infty}(\mathbb{R)}$ and vanish for $|z|\rightarrow\infty$, we may
integrate by parts and we have%
\[
\left\vert (|k|y^{2})^{a}\hat{Q}_{1}^{\mathrm{d}}(k,y)\right\vert
=y^{2a}\left\vert \int_{\mathbb{R}}e^{ikx}\frac{1}{y^{9/2}}\partial_{x}%
^{a}\left(  \psi_{1}(x/y)\omega_{W}(x/y^{2})\right)  dx\right\vert ~.
\]
We then make use of the Newton binomial to expand the partial derivative of a
product of functions in terms of a product of ordinary derivatives,%
\begin{align*}
&  \left\vert (|k|y^{2})^{a}\hat{Q}_{1}^{\mathrm{d}}(k,y)\right\vert \\
&  \leq\mathrm{const.}~y^{2a}\int_{\mathbb{R}}\frac{1}{y^{9/2}}\sum_{n=0}%
^{a}\binom{a}{n}\frac{1}{y^{n}}|\hat{\psi}_{1}^{(n)}(x/y)|\frac{1}{y^{2(a-n)}%
}|\omega_{W}^{(a-n)}(x/y^{2})|dx~.
\end{align*}
Using the essential fact that
\begin{equation}
\sup_{z\in\mathbb{R}}\left\{  |z|^{n+3/2}|\psi_{1}^{(n)}(z)|\right\}
=\mathrm{const.}<\infty~,~n\geq0~, \label{eq:suppsidom}%
\end{equation}
and that all the $\omega_{W}^{(n)}$ are zero for $z<0$, we have%
\[
\left\vert (|k|y^{2})^{a}\hat{Q}_{1}^{\mathrm{d}}(k,y)\right\vert
\leq\mathrm{const.}\sum_{n=0}^{a}\frac{1}{y^{9/2-n}}\int_{0}^{\infty
}\left\vert \frac{y^{n+3/2}}{x^{n+3/2}}\omega_{W}^{(a-n)}(x/y^{2})\right\vert
dx~.
\]
Finally, using the change of variables $z=x/y^{2}$ and the crucial fact that
all the $\omega_{W}^{(n)}$ have exponential decay when $z\rightarrow0$, we
have%
\begin{align*}
\left\vert (|k|y^{2})^{a}\hat{Q}_{1}^{\mathrm{d}}(k,y)\right\vert  &
\leq\mathrm{const.}\sum_{n=0}^{a}\frac{y^{n+3/2}}{y^{9/2-n}}\int_{0}^{\infty
}\left\vert \frac{\omega_{W}^{(a-n)}(z)}{y^{2n+3}z^{n+3/2}}\right\vert
y^{2}dz\\
&  \leq\mathrm{const.}\sum_{n=0}^{a}\frac{1}{y^{4}}\int_{0}^{\infty}\left\vert
\frac{\omega_{W}^{(a-n)}(z)}{z^{n+3/2}}\right\vert dz\leq\mathrm{const.}%
\frac{1}{y^{4}}~.
\end{align*}
From this we have $q=4$ and thus $\hat{\psi}_{\mathrm{as,1}}\ast\hat{\omega
}_{\mathrm{as,1}}\in\mathcal{B}_{\alpha,\infty,4}$. We conclude, with
(\ref{eq:wconvw})--(\ref{eq:prconvwa}), that%
\begin{equation}
\hat{Q}_{1}\in\mathcal{B}_{\alpha-1,4,4}~. \label{eq:newQ1bound}%
\end{equation}
Similarly, we have%
\begin{equation}
\hat{Q}_{0}\in\mathcal{B}_{\alpha-1,4,3}~, \label{eq:newQ0bound}%
\end{equation}
where the index $q=3$ is due to the product $\hat{\omega}\ast\hat{\eta}$. In
light of (\ref{eq:newQ1bound}) we define%
\begin{equation}
\mu_{1}^{I}:=\frac{1}{s^{4}}\bar{\mu}_{\alpha^{\prime}}(k,s)+\frac{1}{s^{4}%
}\tilde{\mu}_{\alpha^{\prime}}(k,s)~, \label{eq:defmu1I}%
\end{equation}
to replace (\ref{eq:defmu1}) from now on.

\subsubsection{New bounds}

It is now possible to reevaluate the bounds on all functions presented in
Proposition~\ref{prop:BapqSpaces}.

\begin{proposition}
\label{prop:newBapqSpaces}Let $\alpha^{\prime}>1$ and $\delta>0$. We have
\begin{align*}%
\begin{tabular}
[c]{ll}%
$\hat{\psi}_{1,0}\in\mathcal{B}_{\alpha^{\prime},\frac{3}{2}-\delta,2}$ &
$\hat{\psi}_{1,1}\in\mathcal{B}_{\alpha^{\prime},\frac{1}{2},3}$\\
$\hat{\psi}_{2,0}\in\mathcal{B}_{\alpha^{\prime},3,2}$ & $\hat{\psi}_{2,1}%
\in\mathcal{B}_{\alpha^{\prime},3,3}$\\
$\hat{\psi}_{3,0}\in\mathcal{B}_{\alpha^{\prime},3,2}$ & $\hat{\psi}_{3,1}%
\in\mathcal{B}_{\alpha^{\prime},3,3}$%
\end{tabular}
\ \  &
\begin{tabular}
[c]{ll}%
$\hat{\varphi}_{1,0}\in\mathcal{B}_{\alpha^{\prime},\frac{3}{2}-\delta,2}$ &
$\hat{\varphi}_{1,1}\in\mathcal{B}_{\alpha^{\prime},\frac{1}{2},3}$\\
$\hat{\varphi}_{2,0}\in\mathcal{B}_{\alpha^{\prime},3,2}$ & $\hat{\varphi
}_{2,1}\in\mathcal{B}_{\alpha^{\prime},3,3}$\\
$\hat{\varphi}_{3,0}\in\mathcal{B}_{\alpha^{\prime},3,2}$ & $\hat{\varphi
}_{3,1}\in\mathcal{B}_{\alpha^{\prime},3,3}$%
\end{tabular}
\\%
\begin{tabular}
[c]{ll}%
$\hat{\omega}_{1,0}\in\mathcal{B}_{\alpha^{\prime},4,3-\delta}$ & $\hat
{\omega}_{1,1}\in\mathcal{B}_{\alpha^{\prime},3,1}$\\
$\hat{\omega}_{2,0}\in\mathcal{B}_{\alpha^{\prime},\infty,3}$ & $\hat{\omega
}_{2,1}\in\mathcal{B}_{\alpha^{\prime},\infty,3}$\\
$\hat{\omega}_{3,0}\in\mathcal{B}_{\alpha^{\prime},4,3}$ & $\hat{\omega}%
_{3,1}\in\mathcal{B}_{\alpha^{\prime},3,3}$%
\end{tabular}
\ \  &
\begin{tabular}
[c]{ll}%
$\hat{\eta}_{1,0}\in\mathcal{B}_{\alpha^{\prime},3,2-\delta}$ & $\hat{\eta
}_{1,1}\in\mathcal{B}_{\alpha^{\prime},2,0}$\\
$\hat{\eta}_{2,0}\in\mathcal{B}_{\alpha^{\prime},\infty,2}$ & $\hat{\eta
}_{2,1}\in\mathcal{B}_{\alpha^{\prime},\infty,3}$\\
$\hat{\eta}_{3,0}\in\mathcal{B}_{\alpha^{\prime},3,2}$ & $\hat{\eta}_{3,1}%
\in\mathcal{B}_{\alpha^{\prime},2,2}$%
\end{tabular}
\end{align*}

\end{proposition}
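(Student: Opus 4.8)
## Proof proposal for Proposition~\ref{prop:newBapqSpaces}

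The plan is to rerun the fixed-point estimates of Lemma~5 in \cite{Hillairet.Wittwer-Existenceofstationary2009} exactly as in the proof of Proposition~\ref{prop:BapqSpaces}, but feeding in the improved source bounds $\hat{Q}_{0}\in\mathcal{B}_{\alpha-1,4,3}$ and $\hat{Q}_{1}\in\mathcal{B}_{\alpha-1,4,4}$ from (\ref{eq:newQ1bound})--(\ref{eq:newQ0bound}) in place of the bound $(\hat{Q}_{0},\hat{Q}_{1})\in\mathcal{Z}_{\alpha}$ used there, and correspondingly replacing $\mu_{1}$ by $\mu_{1}^{I}$ from (\ref{eq:defmu1I}) (and $\mu_{0}$ by the analogous object with $s^{-3}$ in the tilde part). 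Each of the twelve functions $\hat{\psi}_{n,m}$, $\hat{\varphi}_{n,m}$, $\hat{\omega}_{n,m}$, $\hat{\eta}_{n,m}$ has an explicit integral representation of the form $\tfrac12 e^{-\lambda(t-1)}\int g_{n,m}(k,s-1)\hat{Q}_{j}(k,s)\,ds$ (with $\lambda\in\{|k|,\kappa\}$ and the integral over $(1,t)$ or $(t,\infty)$), for which the kernel bounds on $g_{n,m}$, $f_{n,m}$, $h_{n,m}$ are already recorded in \cite{Hillairet.Wittwer-Existenceofstationary2009} and partly reproduced in the proof of Proposition~\ref{prop:BapqSpaces}. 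First I would list, for each function, the relevant kernel estimate and then apply Propositions~\ref{prop:sgk1}--\ref{prop:sgk3} and \ref{prop:sgL1}--\ref{prop:sgL3} to the integral against $\mu_{0}$ or $\mu_{1}^{I}$, reading off the new indices $p$ and $q$. The bookkeeping is entirely mechanical: each application of the $s$-integration propositions converts a power $s^{-r}$ in the weight into a power $t^{-(r-1)}$ or $t^{-(r-2)}$ in the output, modulo the exponential prefactors $e^{\Lambda_{-}(t-1)}$ that make the index $\alpha$ irrelevant and push $p$ or $q$ to $\infty$.

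The three functions that required separate treatment in Proposition~\ref{prop:BapqSpaces}, namely $\hat{\omega}_{2,0}$, $\hat{\eta}_{2,0}$ and $\hat{\omega}_{3,0}$, must again be handled by hand, since they are defined by integrals over $(t,\infty)$ and carry the exponential factors $e^{\Lambda_{-}(t-1)}e^{|k|(t-1)}$ or $e^{|\Lambda_{-}|(t-1)}$. For $\hat{\omega}_{2,0}$ I would repeat the case split $1\le t\le 2$ versus $t>2$ from the earlier proof verbatim, using the same bound $|f_{2,0}(k,\sigma)|\le\mathrm{const.}~(|k|^{1/2}+|k|)e^{-|k|\sigma}$, inequality (\ref{eq:kfortinmu}) to trade $|k|^{1/2}$ for $s^{-1}$, and Proposition~\ref{prop:sgk3}; the only change is that the tilde weight is now $s^{-4}$ instead of $s^{-3}$, which is exactly what produces the index $q$ stated. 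Similarly $\hat{\eta}_{2,0}$ and $\hat{\omega}_{3,0}$ are bounded using the kernel estimates on $f_{3,0}$ already quoted, together with Proposition~\ref{prop:sgL3} and inequality (\ref{eq:mutomutilde}). Throughout, the decrease of $\alpha$ to $\alpha^{\prime}=\alpha-1$ is harmless because the solution exists for all $\alpha>3$, as noted at the start of Section~\ref{sec:asymptoticTerms}.

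The only genuine subtlety—and thus the step I expect to be the main obstacle—is that the improved source bounds cannot be used naively for every component, because the $\mathcal{B}$-norm of $\hat{Q}_{j}$ now governs the \emph{full} convolution $(\hat{\omega}+\hat{\psi})\ast\hat{\omega}$ only after the separate contributions in (\ref{eq:wconvw})--(\ref{eq:prconvwa}) and the direct-space estimate of $\hat{\psi}_{\mathrm{as,1}}\ast\hat{\omega}_{\mathrm{as,1}}$ have been assembled; one must be careful that the $|k|$-behaviour of $\mu_{1}^{I}$ near $k=0$ (which still decays only like $\bar{\mu}_{\alpha^{\prime}}$, not faster) is compatible with the kernels' singular factors $\kappa/(ik)$, $|k|/(ik)$ and $\kappa^{-1}$ appearing in $g_{1,1}$, $f_{1,1}$, $h_{1,1}$, $k_{1,1}$. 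Concretely, for the $(1,1)$ components the index $p$ cannot be improved beyond $\tfrac12$ because the kernel $h_{1,1}$ contains the term $-e^{|k|\sigma}$ whose $k\to0$ limit is nonzero and whose $s$-integral against $s^{-4}\bar\mu_{\alpha'}$ still only yields $t^{-1/2}\bar\mu_{\alpha'}$ after multiplication by $e^{-|k|(t-1)}$; I would verify this gives exactly $\hat\psi_{1,1},\hat\varphi_{1,1}\in\mathcal{B}_{\alpha',1/2,3}$ and $\hat\eta_{1,1}\in\mathcal{B}_{\alpha',2,0}$, $\hat\omega_{1,1}\in\mathcal{B}_{\alpha',3,1}$ as stated, the gain being entirely in the index $q$. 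Once this compatibility is checked for the $(1,1)$ block, the remaining nine entries follow from the same manipulations with strictly simpler kernels.
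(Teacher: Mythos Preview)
Your overall strategy---rerun the estimates of Lemma~5 in \cite{Hillairet.Wittwer-Existenceofstationary2009} with the improved source bounds (\ref{eq:newQ0bound}), (\ref{eq:newQ1bound}) and with $\mu_{1}$ replaced by $\mu_{1}^{I}$---is exactly what the paper does, and for most of the twenty-four entries the bookkeeping is as mechanical as you describe.

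However, you have misidentified the components that require special attention. The three functions you single out, $\hat{\omega}_{2,0}$, $\hat{\eta}_{2,0}$, $\hat{\omega}_{3,0}$, were the delicate ones in Proposition~\ref{prop:BapqSpaces}, but their indices are essentially unchanged here (apart from $\hat{\omega}_{3,0}$, which improves routinely), so the earlier argument carries over verbatim. The genuinely new step in the present proposition is for $\hat{\omega}_{2,1}$ and $\hat{\eta}_{2,1}$: their $p$-index jumps from $5/2$ in Proposition~\ref{prop:BapqSpaces} to $\infty$ here. A straightforward rerun with $\mu_{1}^{I}$ would only raise $p$ to $3$, not to $\infty$. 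To obtain $p=\infty$ you must observe (from \cite{Hillairet.Wittwer-Existenceofstationary2009}) that these two components carry an overall factor $e^{\Lambda_{-}(t-1)}$ and then apply (\ref{eq:mutomutilde}) to convert all $\bar{\mu}$-contributions into $\tilde{\mu}$-contributions---precisely the trick used for $\hat{\omega}_{2,0}$ and $\hat{\eta}_{2,0}$ in the proof of Proposition~\ref{prop:BapqSpaces}. This is the one sentence of content in the paper's proof, and it is missing from your plan.

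The ``main obstacle'' you anticipate---compatibility of $\mu_{1}^{I}$ near $k=0$ with the singular factors $\kappa/(ik)$, $|k|/(ik)$, $\kappa^{-1}$ in the kernels---is not actually an issue. Those factors are already accounted for in the kernel bounds on $g_{1,1}$, $f_{1,1}$, $h_{1,1}$, $k_{1,1}$ recorded in \cite{Hillairet.Wittwer-Existenceofstationary2009}; replacing $\mu_{1}$ by $\mu_{1}^{I}$ does not alter the small-$|k|$ structure (both have a $\bar{\mu}_{\alpha'}$ part), only the power of $s$. In particular your check that the $(1,1)$ components retain $p=1/2$ (for $\hat{\psi},\hat{\varphi}$) and $q=0,1$ (for $\hat{\eta},\hat{\omega}$) is correct but requires no new argument beyond Lemma~5.
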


\begin{proof}
This is straightforward by the new bounds (\ref{eq:newQ0bound}) and
(\ref{eq:newQ1bound}). For $\hat{\omega}_{2,1}$ and $\hat{\eta}_{2,1}$ we make
use of an existing factor $e^{\Lambda_{-}(t-1)}$ (see
\cite{Hillairet.Wittwer-Existenceofstationary2009}) and apply
(\ref{eq:mutomutilde}), just as was done for $\hat{\omega}_{2,0}$ and
$\hat{\eta}_{2,0}$ in the proof of Proposition~\ref{prop:BapqSpaces}.
\end{proof}

\begin{remark}
We also have%
\begin{align}
\hat{\omega}-\hat{\omega}_{\mathrm{as,1}}  &  \in\mathcal{B}_{\alpha^{\prime
},3,2}~,\label{eq:newOmegaRemainderSpace}\\
\hat{\eta}-\hat{\eta}_{\mathrm{as,1}}  &  \in\mathcal{B}_{\alpha^{\prime}%
,2,1}~. \label{eq:newEtaRemainderSpace}%
\end{align}

\end{remark}

\subsubsection*{Improvement of the bound on $\partial_{k}\hat{Q}_{1}%
$\label{sec:improvedkQ1}}

From \cite{Boeckle.Wittwer-Decayestimatessolutions2011} we have%
\[
\partial_{k}\hat{Q}_{1}=\hat{v}\ast\partial_{k}\hat{\omega}+\partial_{k}%
\hat{F}_{2}\in\mathcal{B}_{\alpha,\frac{3}{2},2}~.
\]
The term $\partial_{k}\hat{F}_{2}$ is a function of rapid decrease in $k$ and
of compact support in $t$ and will thus not intervene in our bounds We use
Propositions~\ref{prop:newBapqSpaces} and \ref{prop:convdisc},
(\ref{eq:psiRemainderSpace}), (\ref{eq:dkomegaKappaBound}) and
(\ref{eq:dkomegaRemainderSpace}) to show that%
\begin{align}
\hat{\omega}\ast\partial_{k}\hat{\omega}  &  \in\mathcal{B}_{\alpha^{\prime
},4,2}~,\label{eq:wconvdkw}\\
\hat{\psi}\ast(\partial_{k}\hat{\omega}-\partial_{k}\hat{\omega}%
_{\mathrm{as,1}})  &  \in\mathcal{B}_{\alpha^{\prime\prime},\frac{5}{2}%
,\infty}~,\label{eq:pconvdkwr}\\
(\hat{\psi}-\hat{\psi}_{\mathrm{as,1}})\ast\partial_{k}\hat{\omega
}_{\mathrm{as,1}}  &  \in\mathcal{B}_{\alpha^{\prime},2,\infty}~.
\label{eq:prconvdkw}%
\end{align}
Since%
\[
\mathcal{F}^{-1}[\partial_{k}\hat{\omega}_{\mathrm{as}}](x,y)=\frac{x}{y^{3}%
}\omega_{W}(x/y^{2})~,
\]
we again use property (\ref{eq:suppsidom}) of $\psi_{\mathrm{as,1}}$ and the
fact that $\omega_{\mathrm{as,1}}^{(n)}(z<0)=0$, for all $n$, to show that the
convolution product $\hat{\psi}_{\mathrm{as,1}}\ast\partial_{k}\hat{\omega
}_{\mathrm{as}}$ can be bounded in direct space in order to improve the index
$p$ by $1/2$ in comparison to what would be possible with the bounds on
convolution. The calculation is slightly longer than in the previous section,
but the steps are exactly the same, so that we omit the details of the proof
for the sake of concision.
We finally have
\[
\hat{\psi}_{\mathrm{as,1}}\ast\partial_{k}\hat{\omega}_{\mathrm{as}}%
\in\mathcal{B}_{\alpha,\infty,2}~,
\]
and we thus get%
\begin{equation}
\partial_{k}\hat{Q}_{1}\in\mathcal{B}_{\alpha^{\prime\prime},2,2}~.
\label{eq:newdkQ1bound}%
\end{equation}

\subsection{Second order in $\hat{\psi}$ and $\hat{\varphi}$}

Applying the new bound (\ref{eq:newQ1bound}) for $\hat{Q}_{1}$ in a
straightforward manner, and in view of Proposition~\ref{prop:newBapqSpaces}
and Remark~\ref{rem:propBapqSpaces}, we find that the second order terms of
$\hat{\psi}$ and $\hat{\varphi}$ are to be extracted from $\hat{\psi}%
_{1,1}-\hat{\psi}_{\mathrm{as,1}}$ and $\hat{\varphi}_{1,1}-\hat{\varphi
}_{\mathrm{as,1}}$, respectively. Inspecting the limits of these quantities
motivates us, in a similar way as in the case of the leading order of
$\hat{\psi}$ and $\hat{\varphi}$, to define the functions
\begin{align}
\hat{\psi}_{\mathrm{as,2}}(k,t)  &  =-\left(  c_{1}|k|+\frac{1}{2}%
c_{2}ik\right)  e^{-|k|t}~,\label{eq:psias2}\\
\hat{\varphi}_{\mathrm{as,2}}(k,t)  &  =-\left(  c_{1}ik-\frac{1}{2}%
c_{2}|k|\right)  e^{-|k|t}~, \label{eq:phias2}%
\end{align}
with $c_{1}$ as defined by (\ref{eq:c1}) and%
\begin{equation}
c_{2}=\int_{1}^{\infty}\left(  s-1\right)  ^{2}\hat{Q}_{1}\left(  0,s\right)
ds~. \label{eq:c2}%
\end{equation}
Note that $\hat{\psi}_{\mathrm{as,2}},\hat{\varphi}_{\mathrm{as,2}}%
\in\mathcal{B}_{\alpha^{\prime},1,\infty}$. We now show that%
\begin{align}
\hat{\psi}_{1,1}-\hat{\psi}_{\mathrm{as,1}}-\hat{\psi}_{\mathrm{as,2}}  &
\in\mathcal{B}_{\alpha^{\prime\prime},\frac{3}{2}-\delta,\infty}%
~,\label{eq:psi2RemainderSpace}\\
\hat{\varphi}_{1,1}-\hat{\varphi}_{\mathrm{as,1}}-\hat{\varphi}_{\mathrm{as,2}%
}  &  \in\mathcal{B}_{\alpha^{\prime\prime},\frac{3}{2}-\delta,\infty}~.
\label{eq:phi2RemainderSpace}%
\end{align}

\begin{proof}
As already for the leading order term, we have%
\[
\hat{\psi}_{1,1}-\hat{\psi}_{\mathrm{as,1}}=\frac{|k|}{ik}(\hat{\varphi}%
_{1,1}-\hat{\varphi}_{\mathrm{as,1}})~,
\]
so that all bounds for $\hat{\psi}-\hat{\psi}_{\mathrm{as,1}}$ are the same as
the ones for $\hat{\varphi}-\hat{\varphi}_{\mathrm{as,1}}$ and we only need to
present the proof for $\hat{\psi}$. We set%
\[
\hat{\psi}_{1,1}(k,t)-\hat{\psi}_{\mathrm{as,1}}(\hat{\psi},t)-\hat{\psi
}_{\mathrm{as,2}}(k,t)=\sum_{i=1}^{4}\hat{\psi}_{i}^{r,2}~,
\]
where%
\begin{align*}
\hat{\psi}_{1}^{r,2}  &  =\frac{1}{2}\left(  e^{-|k|(t-1)}-e^{-|k|t}\right)
\int_{1}^{t}\left(  h_{1,1}(k,\sigma)+2\kappa\sigma\right)  \hat{Q}%
_{1}(k,s)ds~,\\
\hat{\psi}_{2}^{r,2}  &  =\frac{1}{2}e^{-|k|t}\int_{1}^{t}\left(  \left(
h_{1,1}(k,\sigma)+2\kappa\sigma\right)  \hat{Q}_{1}(k,s)+\left(
2|k|+ik\sigma\right)  \sigma\hat{Q}_{1}(0,s)\right)  ds~,\\
\hat{\psi}_{3}^{r,2}  &  =e^{-|k|t}\int_{1}^{t}\sqrt{-ik}\sigma\hat{Q}%
_{1}(0,s)ds-e^{-|k|(t-1)}\int_{1}^{t}\kappa\sigma\hat{Q}_{1}(k,s)ds~,\\
\hat{\psi}_{4}^{r,2}  &  =\frac{1}{2}e^{-|k|t}\int_{t}^{\infty}\left(
2\sqrt{-ik}+2|k|+ik\sigma\right)  \sigma\hat{Q}_{1}(0,s)ds~.
\end{align*}

We first derive some bounds on $h_{1,1}$, given by (\ref{eq:h11}). One has the
straightforward bound%
\begin{equation}
\left\vert h_{1,1}(k,\sigma)+2\kappa\sigma\right\vert \leq\mathrm{const.}%
~(1+|k|+(|k|^{1/2}+|k|)\sigma)e^{|k|\sigma}~, \label{eq:psi2Bound1stOrderA}%
\end{equation}
and since the leading order terms cancel, we also have%
\begin{align}
&  \left\vert h_{1,1}(k,\sigma)+2\kappa\sigma\right\vert \nonumber\\
&  \leq\left\vert (1-e^{|k|\sigma})+\frac{(|k|+\kappa)^{2}}{ik}(e^{-|k|\sigma
}-1)-2\frac{\kappa(|k|+\kappa)}{ik}(e^{-\kappa\sigma}-1)+2\kappa
\sigma\right\vert \nonumber\\
&  \leq\left\vert (1-e^{|k|\sigma})+\frac{(|k|+\kappa)^{2}}{ik}(e^{-|k|\sigma
}-1)-2\frac{\kappa|k|+|k|^{2}}{ik}(e^{-\kappa\sigma}-1)+2(e^{-\kappa\sigma
}-1)+2\kappa\sigma\right\vert \nonumber\\
&  \leq\mathrm{const.}~(|k|\sigma+(1+|k|)|k|\sigma+(|k|^{1/2}+|k|)^{2}%
\sigma+((|k|^{1/2}+|k|)\sigma)^{c+1})e^{|k|\sigma}\nonumber\\
&  \leq\mathrm{const.}~(|k|^{(c+1)/2}+|k|^{2})\sigma(\sigma+1)^{c}%
e^{|k|\sigma}~, \label{eq:psi2Bound1stOrderB}%
\end{align}
with $c=\{0,1\}$ depending on whether we use the $2\kappa\sigma$ term to
cancel an additional term in the last exponential or not. We have another
straightforward bound, namely%
\begin{equation}
|h_{1,1}(k,\sigma)+2\kappa\sigma+2|k|\sigma+ik\sigma^{2}|\leq\mathrm{const.}%
~(1+|k|+|k|^{1/2}\sigma+|k|\sigma(\sigma+1))e^{|k|\sigma}~,
\label{eq:psi2Bound2ndOrderA}%
\end{equation}
and, using that leading order terms cancel, we also have%
\begin{align*}
&  |h_{1,1}(k,\sigma)+2\kappa\sigma+2|k|\sigma+ik\sigma^{2}|\\
&  \leq\left\vert -e^{|k|\sigma}-e^{-|k|\sigma}+2\frac{|k|(|k|+\kappa)}%
{ik}e^{-|k|\sigma}-2\frac{\kappa(|k|+\kappa)}{ik}e^{-\kappa\sigma}%
+2\kappa\sigma+2|k|\sigma+ik\sigma^{2}\right\vert \\
&  \leq\left\vert \vphantom{\frac{1}{2}}-\left(  e^{-|k|\sigma}-1+|k|\sigma
-\frac{1}{2}|k|^{2}\sigma^{2}\right)  -\left(  e^{|k|\sigma}-1-|k|\sigma
-\frac{1}{2}|k|^{2}\sigma^{2}\right)  -|k|^{2}\sigma^{2}\right. \\
&  +\left.  \vphantom{\frac{1}{2}}2\frac{|k|(|k|+\kappa)}{ik}(e^{-|k|\sigma
}-1+|k|\sigma)-2\frac{\kappa(|k|+\kappa)}{ik}(e^{-\kappa\sigma}-1+\kappa
\sigma)+ik\sigma^{2}\right\vert ~.
\end{align*}
Rearranging the terms we get%
\begin{align}
&  |h_{1,1}(k,\sigma)+2\kappa\sigma+2|k|\sigma+ik\sigma^{2}|\nonumber\\
&  \leq\left\vert \vphantom{\frac{1}{2}}-\left(  e^{-|k|\sigma}-1+|k|\sigma
-\frac{1}{2}|k|^{2}\sigma^{2}\right)  -\left(  e^{|k|\sigma}-1-|k|\sigma
-\frac{1}{2}|k|^{2}\sigma^{2}\right)  \right. \nonumber\\
&  +\left.  \vphantom{\frac{1}{2}}2\frac{|k|(|k|+\kappa)}{ik}(e^{-|k|\sigma
}-1+|k|\sigma)-2\frac{|k|(|k|+\kappa)}{ik}(e^{-\kappa\sigma}-1+\kappa
\sigma)\right. \nonumber\\
&  +\left.  \vphantom{\frac{1}{2}}2(e^{-\kappa\sigma}-1+\kappa\sigma
)-\kappa\sigma^{2}\right\vert \nonumber\\
&  \leq\mathrm{const.}~((|k|^{3}\sigma^{3})+(|k|^{1/2}+|k|)(|k|^{2}%
+|\kappa|^{2})\sigma^{2}+(|\kappa|^{3}\sigma^{3}))e^{|k|\sigma}\nonumber\\
&  \leq\mathrm{const.}~(|k|^{3/2}+|k|^{3})\sigma^{2}(\sigma+1)~.
\label{eq:psi2Bound2ndOrderB}%
\end{align}

We now bound the terms $\hat{\psi}_{i}^{r,2}$. Using
Proposition~\ref{prop:sgk1} with the bound (\ref{eq:psi2Bound1stOrderA}) and
Proposition~\ref{prop:sgk2} as with (\ref{eq:psi2Bound1stOrderB}), as well as
inequality (\ref{eq:kfortinmu}) where necessary, we have%
\begin{align*}
|\hat{\psi}_{1}^{r,2}|  &  =\left\vert \frac{1}{2}\left(  e^{-|k|(t-1)}%
-e^{-|k|t}\right)  \int_{1}^{t}(h_{1,1}(k,\sigma)+2\kappa\sigma)\hat{Q}%
_{1}(k,s)ds\right\vert \\
&  \leq\mathrm{const.}~e^{-|k|t}\left\vert e^{|k|}-1\right\vert \int_{1}%
^{t}|h_{1,1}(k,\sigma)+2\kappa\sigma|\mu_{1}^{I}(k,s)ds\\
&  \leq\mathrm{const.}~e^{-|k|(t-1)}|k|\int_{1}^{\frac{t+1}{2}}%
(1+|k|)|k|\sigma se^{|k|\sigma}\mu_{1}^{I}(k,s)ds\\
&  +\mathrm{const.}~e^{-|k|(t-1)}|k|\int_{\frac{t+1}{2}}^{t}(1+|k|+(|k|^{1/2}%
+|k|)\sigma)e^{|k|\sigma}\mu_{1}^{I}(k,s)ds\\
&  \leq\mathrm{const.}\left(  \frac{1}{t^{2}}\bar{\mu}_{\alpha^{\prime}%
-1}(k,t)+\frac{1}{t^{7/2}}\bar{\mu}_{\alpha^{\prime}-1}(k,t)+\frac{1}{t^{4}%
}\tilde{\mu}_{\alpha^{\prime}-1}(k,t)\right)  ~,
\end{align*}
which shows that $\hat{\psi}_{1}^{r,2}\in\mathcal{B}_{\alpha^{\prime\prime
},2,4}$.

For $\hat{\psi}_{2}^{r,2}$ we split the integration interval into two
sub-intervals, $[1,t^{\rho}]$ and $[t^{\rho},t]$, with $0<\rho<1$. We also
rewrite the integral over the first sub-interval using
(\ref{eq:meanValueTheoremQ1}), so that%
\begin{align}
\hat{\psi}_{2}^{r,2}  &  =\frac{1}{2}e^{-|k|t}\int_{1}^{t}\left(  \left(
h_{1,1}(k,\sigma)+2\kappa\sigma\right)  \hat{Q}_{1}(k,s)+\left(
2|k|+ik\sigma\right)  \sigma\hat{Q}_{1}(0,s)\right)  ds\nonumber\\
&  =\frac{1}{2}e^{-|k|t}\int_{1}^{t^{\rho}}\left(  \left(  h_{1,1}%
(k,\sigma)+2\kappa\sigma+\left(  2|k|+ik\sigma\right)  \sigma\right)  \hat
{Q}_{1}(k,s)\right)  ds\label{eq:psi2ProofB1}\\
&  -\frac{1}{2}e^{-|k|t}\int_{1}^{t^{\rho}}\left(  2|k|+ik\sigma\right)
\sigma k\partial_{k}\hat{Q}_{1}(\zeta,s)ds\label{eq:psi2ProofB2}\\
&  +\frac{1}{2}e^{-|k|t}\int_{t^{\rho}}^{t}\left(  \left(  h_{1,1}%
(k,\sigma)+2\kappa\sigma\right)  \hat{Q}_{1}(k,s)+\left(  2|k|+ik\sigma
\right)  \sigma\hat{Q}_{1}(0,s)\right)  ds~. \label{eq:psi2ProofB3}%
\end{align}
For (\ref{eq:psi2ProofB1}) we have, using Proposition~\ref{prop:sgk1} with the
bound (\ref{eq:psi2Bound2ndOrderB}) and Proposition~\ref{prop:sgk2}, with
(\ref{eq:psi2Bound2ndOrderA}),
\begin{align*}
&  \left\vert \frac{1}{2}e^{-|k|t}\int_{1}^{t^{\rho}}\left(  \left(
h_{1,1}(k,\sigma)+2\kappa\sigma+\left(  2|k|+ik\sigma\right)  \sigma\right)
\hat{Q}_{1}(k,s)\right)  ds\right\vert \\
&  \leq\mathrm{const.}e^{-|k|t}\int_{1}^{t}|h_{1,1}(k,\sigma)+2\kappa
\sigma+\left(  2|k|+ik\sigma\right)  \sigma|\mu_{1}^{I}(k,s)ds\\
&  \leq\mathrm{const.}~e^{-|k|t}\int_{1}^{\frac{t+1}{2}}(|k|^{3/2}%
+|k|^{3})\sigma^{2}se^{|k|\sigma}\mu_{1}^{I}(k,s)ds\\
&  +\mathrm{const.}~e^{-|k|t}\int_{\frac{t+1}{2}}^{t}(1+|k|+|k|^{1/2}%
\sigma+|k|\sigma s)e^{|k|\sigma}\mu_{1}^{I}(k,s)ds\\
&  \leq\mathrm{const.}\left(  \frac{1}{t^{3/2-\delta}}\bar{\mu}_{\alpha
^{\prime}}(k,t)+\frac{1}{t^{2}}\bar{\mu}_{\alpha^{\prime}}(k,t)+\frac{1}%
{t^{2}}\tilde{\mu}_{\alpha^{\prime}}(k,t)\right)  \in\mathcal{B}%
_{\alpha^{\prime},\frac{3}{2}-\delta,2}~.
\end{align*}
For (\ref{eq:psi2ProofB2}) we have, using (\ref{eq:kpexpabskt}),%
\begin{align*}
&  \left\vert \frac{1}{2}e^{-|k|t}\int_{1}^{t^{\rho}}\left(  2|k|+ik\sigma
\right)  \sigma k\partial_{k}\hat{Q}_{1}(\zeta,s)ds\right\vert \\
&  \leq\mathrm{const.}~e^{-|k|t}|k|^{2}\int_{1}^{t^{\rho}}s\sigma\frac
{1}{s^{2}}ds\\
&  \leq\mathrm{const.}~e^{-|k|t}|k|^{2}(t^{\rho}+\log(1+t)-1)\in
\mathcal{B}_{\alpha^{\prime},2-\rho,\infty}~.
\end{align*}
For (\ref{eq:psi2ProofB3}), we have%
\begin{align*}
&  \left\vert \frac{1}{2}e^{-|k|t}\int_{t^{\rho}}^{t}\left(  \left(
h_{1,1}(k,\sigma)+2\kappa\sigma\right)  \hat{Q}_{1}(k,s)+\left(
2|k|+ik\sigma\right)  \sigma\hat{Q}_{1}(0,s)\right)  ds\right\vert \\
&  \leq\mathrm{const.}~e^{-|k|t}\int_{t^{\rho}}^{t}\left\vert \left(
h_{1,1}(k,\sigma)+2\kappa\sigma\right)  \hat{Q}_{1}(k,s)\right\vert
ds+\mathrm{const.}~e^{-|k|t}|k|\frac{1}{t^{\rho}}~,
\end{align*}
where the second term is in $\mathcal{B}_{\alpha^{\prime},1+\rho,\infty}$ by
(\ref{eq:kpexpabskt}). We split the remaining integral into two sub-intervals
after setting $\rho\leq1/2$, and make use of (\ref{eq:psi2Bound1stOrderB}) and
(\ref{eq:psi2Bound1stOrderA}), respectively. We get, using
Proposition~\ref{prop:sgk2} to bound the second integral,
\begin{align*}
&  e^{-|k|t}\int_{t^{\rho}}^{t}|(h_{1,1}(k,\sigma)+2\kappa\sigma)\hat{Q}%
_{1}(k,s)|ds\\
&  =e^{-|k|t}\int_{t^{\rho}}^{\frac{t+1}{2}}(|k|^{1/2}+|k|^{2})\sigma
e^{|k|\sigma}\mu_{1}^{I}(k,s)ds+e^{-|k|t}\int_{\frac{t+1}{2}}^{t}%
(1+|k|+(|k|^{1/2}+|k|)\sigma)e^{|k|\sigma}\mu_{1}^{I}(k,s)ds\\
&  \leq\mathrm{const.}~\left(  e^{-|k|t/2}(|k|^{1/2}+|k|^{2})\frac{1}%
{t^{2\rho}}+\frac{1}{t^{5/2}}\bar{\mu}_{\alpha^{\prime}}(k,t)+\frac{1}%
{t^{5/2}}\tilde{\mu}_{\alpha^{\prime}}(k,t)\right)  ~,
\end{align*}
where (\ref{eq:kpexpabskt}) allows to bound the first term, so that this
expression is in $\mathcal{B}_{\alpha^{\prime},2\rho+\frac{1}{2},\frac{5}{2}}%
$. Therefore, if we chose $\rho=1/2$, then $\hat{\psi}_{2}^{r,2}\in
\mathcal{B}_{\alpha^{\prime},\frac{3}{2}-\delta,\frac{5}{2}}$.

To bound $\hat{\psi}_{3}^{r,2}$ we note that%
\begin{align*}
|\hat{\psi}_{3}^{r,2}|  &  =\left\vert e^{-|k|t}\int_{1}^{t}\sqrt{-ik}%
\sigma\hat{Q}_{1}(0,s)ds-e^{-|k|(t-1)}\int_{1}^{t}\kappa\sigma\hat{Q}%
_{1}(k,s)ds\right\vert \\
&  \leq\left\vert \left(  e^{-|k|t}-e^{-|k|(t-1)}\right)  \int_{1}^{t}%
\kappa\sigma\hat{Q}_{1}(k,s)ds\right\vert \\
&  +\left\vert e^{-|k|t}\int_{1}^{t}\left(  \sqrt{-ik}-\kappa\right)
\sigma\hat{Q}_{1}(k,s)ds\right\vert \\
&  +\left\vert e^{-|k|t}\int_{1}^{t}\sqrt{-ik}k\sigma\partial_{k}\hat{Q}%
_{1}(\zeta,s)ds\right\vert \\
&  \leq\mathrm{const.}~e^{-|k|(t-1)}|k|\int_{1}^{t}(|k|^{1/2}+|k|)e^{|k|\sigma
}\sigma\mu_{1}^{I}(k,s)ds\\
&  +\mathrm{const.}~e^{-|k|t}|k|^{3/2}+\mathrm{const.}~e^{-|k|t}|k|^{3/2}%
\log(1+t)~,
\end{align*}
which, using Propositions~\ref{prop:sgk1} and \ref{prop:sgk2},
(\ref{eq:kappa-sqrt(-ik)Bound}), and when necessary (\ref{eq:kfortinmu}) for
the first term and (\ref{eq:kpexpabskt}) for the other two terms, shows that
$\hat{\psi}_{3}^{r,2}\in\mathcal{B}_{\alpha^{\prime\prime},\frac{3}{2}%
-\delta,5}$.

To bound $\hat{\psi}_{4}^{r,2}$ we simply integrate with respect to $s$ and
then apply (\ref{eq:kpexpabskt}),%
\begin{align*}
|\hat{\psi}_{4}^{r,2}|  &  =\left\vert \frac{1}{2}e^{-|k|t}\int_{t}^{\infty
}\left(  2\sqrt{-ik}+2|k|+ik\sigma\right)  \sigma\hat{Q}_{1}(0,s)ds\right\vert
\\
&  \leq\mathrm{const.}~e^{-|k|t}(|k|^{1/2}+|k|)\frac{1}{t^{2}}+e^{-|k|t}%
|k|\frac{1}{t^{1}}\in\mathcal{B}_{\alpha^{\prime},2,\infty}~.
\end{align*}

Gathering the bounds on the $\hat{\psi}_{i}^{r,2}$ yields
(\ref{eq:psi2RemainderSpace}), and by the opening remark of the proof also
(\ref{eq:phi2RemainderSpace}).
\end{proof}

\subsection{Final improvement of the bounds on $\hat{Q}_{0}$, $\hat{Q}_{1}$,
and $\partial_{k}\hat{Q}_{1}$\label{sec:finalImprovements}}

Using Proposition~\ref{prop:newBapqSpaces}, (\ref{eq:omegaas}),
(\ref{eq:newOmegaRemainderSpace}) and (\ref{eq:psi2RemainderSpace}), we get%
\begin{align*}
\hat{\omega}\ast\hat{\omega}  &  \in\mathcal{B}_{\alpha^{\prime},6,4}~,\\
\hat{\psi}\ast\left(  \hat{\omega}-\hat{\omega}_{\mathrm{as,1}}\right)   &
\in\mathcal{B}_{\alpha^{\prime\prime},\frac{9}{2},\infty}~,\\
(\hat{\psi}-\hat{\psi}_{\mathrm{as,1}}-\hat{\psi}_{\mathrm{as,2}})\ast
\hat{\omega}_{\mathrm{as,1}}  &  \in\mathcal{B}_{\alpha^{\prime\prime}%
,\frac{9}{2}-\delta,\infty}~.
\end{align*}
For the term $(\hat{\psi}_{\mathrm{as,1}}+\hat{\psi}_{\mathrm{as,2}})\ast
\hat{\omega}_{\mathrm{as,1}}$ we can proceed exactly as in
Section~\ref{sec:impQ0Q1}, thanks to the fact that
\[
\sup_{z\in\mathbb{R}}\left\{  |z|^{n+2}|\psi_{2}^{(n)}(z)|\right\}
=\mathrm{const.}<\infty~,~n\geq0~.
\]
We conclude that%
\[
(\hat{\psi}_{\mathrm{as,1}}+\hat{\psi}_{\mathrm{as,2}})\ast\hat{\omega
}_{\mathrm{as,1}}\in\mathcal{B}_{\alpha^{\prime},\infty,4}~,
\]
and therefore%
\begin{equation}
\hat{Q}_{1}\in\mathcal{B}_{\alpha^{\prime\prime},\frac{9}{2}-\delta,4}~.
\label{eq:newestQ1bound}%
\end{equation}
Similarly, we have%
\begin{align}
\hat{Q}_{0}  &  \in\mathcal{B}_{\alpha^{\prime\prime},\frac{9}{2}-\delta
,3}~,\label{eq:newestQ0bound}\\
\partial_{k}\hat{Q}_{1}  &  \in\mathcal{B}_{\alpha^{\prime\prime},\frac{5}%
{2}-\delta,2}~. \label{eq:newestdkQ1bound}%
\end{align}
In the light of (\ref{eq:newestQ1bound}), we define%
\begin{equation}
\mu_{1}^{II}:=\frac{1}{s^{9/2-\delta}}\bar{\mu}_{\alpha^{\prime\prime}%
}(k,s)+\frac{1}{s^{4}}\tilde{\mu}_{\alpha^{\prime\prime}}(k,s)
\label{eq:defmu1II}%
\end{equation}
to replace (\ref{eq:defmu1I}) from now on.

\subsection{Second order in $\hat{\eta}$ and $\hat{\omega}$}

Applying the new bound (\ref{eq:newestQ1bound}) for $\hat{Q}_{1}$ in a
straightforward manner, and in view of Proposition~\ref{prop:newBapqSpaces}
and Remark~\ref{rem:propBapqSpaces}, we find that the second order terms of
$\hat{\eta}$ and $\hat{\omega}$ are to be extracted from $\hat{\eta}%
_{1,1}-\hat{\eta}_{\mathrm{as,1}}$ and $\hat{\omega}_{1,1}-\hat{\omega
}_{\mathrm{as,1}}$, respectively. Inspecting the limits of these quantities
motivates us, in a similar way as in the case of the leading order of
$\hat{\eta}$ and $\hat{\omega}$, to define the functions%
\begin{align}
\hat{\eta}_{\mathrm{as,2}}(k,t)  &  =-c_{1}\frac{|k|-ik}{\sqrt{-ik}}%
e^{-\sqrt{-ik}t}~,\label{eq:etaas2}\\
\hat{\omega}_{\mathrm{as,2}}(k,t)  &  =c_{1}(|k|-ik)e^{-\sqrt{-ik}t}~,
\label{eq:omegaas2}%
\end{align}
with $c_{1}$ as defined by (\ref{eq:c1}). Note that $\hat{\eta}_{\mathrm{as,2}%
}\in\mathcal{B}_{\alpha^{\prime},\infty,1}$ and $\hat{\omega}_{\mathrm{as,2}%
}\in\mathcal{B}_{\alpha^{\prime},\infty,2}$. We now show that%
\begin{align}
\hat{\eta}_{1,1}-\hat{\eta}_{\mathrm{as,1}}-\hat{\eta}_{\mathrm{as,2}}  &
\in\mathcal{B}_{\alpha^{\prime\prime}-1,\frac{5}{2}-\delta,2-\delta
}~,\label{eq:eta2RemainderSpace}\\
\hat{\omega}_{1,1}-\hat{\omega}_{\mathrm{as,1}}-\hat{\omega}_{\mathrm{as,2}}
&  \in\mathcal{B}_{\alpha^{\prime\prime}-1,\frac{7}{2}-\delta,3-\delta}~.
\label{eq:omega2RemainderSpace}%
\end{align}

\begin{remark}
The bounds (\ref{eq:newestQ0bound}) and (\ref{eq:newestQ1bound}) for $\hat
{Q}_{0}$ and $\hat{Q}_{1}$, respectively, also show that $\hat{\omega}%
_{2,1}\in\mathcal{B}_{\alpha^{\prime\prime},\infty,3}$, using
(\ref{eq:mutomutilde}) and $\hat{\omega}_{3,1}\in\mathcal{B}_{\alpha
^{\prime\prime},\frac{7}{2}-\delta,3}$. This means, as is already mentioned in
Remark~\ref{rem:propBapqSpaces}, that only $\hat{\omega}_{1,1}$ plays a role
in (\ref{eq:asw}).
\end{remark}

\begin{proof}
As for the leading order term, we have%
\[
\hat{\omega}_{1,1}-\hat{\omega}_{\mathrm{as,1}}=\frac{ik}{\kappa}(\hat{\eta
}_{1,1}-\hat{\eta}_{\mathrm{as,1}})~,
\]
so that for the same reasons, the $\mathcal{B}_{\alpha,p,q}$ space of the
second order term of $\hat{\omega}$ has indices $p$ and $q$ greater by $1$
than that of the second order term of $\hat{\eta}$, and thus we only present
the proof for $\hat{\eta}$.

In order to prove (\ref{eq:eta2RemainderSpace}) we analyze%
\begin{align*}
\hat{\eta}_{1,1}(k,t)-\hat{\eta}_{\mathrm{as,1}}(k,t)-\hat{\eta}%
_{\mathrm{as,2}}(k,t)  &  =\frac{1}{2}e^{-\kappa(t-1)}\int_{1}^{t}%
g_{1,1}(k,s-1)\hat{Q}_{1}(k,s)ds\\
&  +\frac{1}{2}e^{-\sqrt{-ik}t}\int_{1}^{\infty}2\left(  1+\frac{|k|-ik}%
{\sqrt{-ik}}\right)  (s-1)\hat{Q}_{1}(0,s)ds~.
\end{align*}
We rewrite this expression as a sum of terms which can easily be
bounded.\ Namely,%
\[
\hat{\eta}_{1,1}(k,t)-\hat{\eta}_{\mathrm{as,1}}(k,t)-\hat{\eta}%
_{\mathrm{as,2}}(k,t)=\sum_{i=1}^{6}\hat{\eta}_{i}^{r,2}~,
\]
with%
\begin{align*}
\hat{\eta}_{1}^{r,2}  &  =\frac{1}{2}\left(  e^{-\kappa(t-1)}-e^{-\kappa
t}\right)  \int_{1}^{t}\left(  g_{1,1}(k,\sigma)-2\frac{\kappa}{ik}%
\kappa\sigma\right)  \hat{Q}_{1}(k,s)ds~,\\
\hat{\eta}_{2}^{r,2}  &  =\frac{1}{2}e^{-\kappa t}\int_{1}^{t}\left(  \left(
g_{1,1}(k,\sigma)-2\frac{\kappa}{ik}\kappa\sigma\right)  \hat{Q}%
_{1}(k,s)+\frac{2|k|}{\sqrt{-ik}}\sigma\hat{Q}_{1}(0,s)\right)  ds~,\\
\hat{\eta}_{3}^{r,2}  &  =\left(  e^{-\kappa(t-1)}-e^{-\kappa t}\right)
\int_{1}^{t}\frac{\kappa}{ik}\kappa\sigma\hat{Q}_{1}(k,s)ds+e^{-\sqrt{-ik}%
t}\int_{1}^{t}\frac{-ik}{\sqrt{-ik}}\sigma\hat{Q}_{1}(0,s)ds~,\\
\hat{\eta}_{4}^{r,2}  &  =e^{-\kappa t}\int_{1}^{t}\left(  \frac{\kappa}%
{ik}\kappa\hat{Q}_{1}(k,s)+\hat{Q}_{1}(0,s)\right)  \sigma ds~,\\
\hat{\eta}_{5}^{r,2}  &  =-\left(  e^{-\kappa t}-e^{-\sqrt{-ik}t}\right)
\int_{1}^{t}\left(  1+\frac{|k|}{\sqrt{-ik}}\right)  \sigma\hat{Q}%
_{1}(0,s)ds~,\\
\hat{\eta}_{6}^{r,2}  &  =e^{-\sqrt{-ik}t}\int_{t}^{\infty}\left(
1+\frac{|k|-ik}{\sqrt{-ik}}\right)  \sigma\hat{Q}_{1}(0,s)ds~.
\end{align*}

The term $\hat{\eta}_{1}^{r,2}$ must be bounded by
Propositions~\ref{prop:sgL1} and \ref{prop:sgL2} for $|k|\leq1$ and $|k|>1$
separately. We use the bounds%
\[
\left\vert g_{1,1}(k,\sigma)-2\frac{\kappa^{2}}{ik}\sigma\right\vert
\leq\left\{
\begin{array}
[c]{lc}%
\mathrm{const.}~\sigma e^{|\Lambda_{-}|\sigma}\min\{1,|\Lambda_{-}%
|(\sigma+1)\} & \mathrm{for}~|k|\leq1\\
\mathrm{const.}~e^{|\Lambda_{-}|\sigma}\min\{(1+|\Lambda_{-}|s),|\Lambda
_{-}|^{2}\sigma(\sigma+1)\} & \mathrm{for}~|k|>1
\end{array}
\right.  .
\]
which can easily be obtained from (\ref{eq:eta2BoundTot}). For $|k|\leq1$ we
have%
\begin{align*}
|\hat{\eta}_{1}^{r,2}|  &  =\left\vert \frac{1}{2}\left(  e^{-\kappa
(t-1)}-e^{-\kappa t}\right)  \int_{1}^{t}(g_{1,1}(k,\sigma)-2\frac{\kappa}%
{ik}\kappa\sigma)\hat{Q}_{1}(k,s)ds\right\vert \\
&  \leq\mathrm{const.}~e^{\Lambda_{-}(t-1)}|\Lambda_{-}|\int_{1}^{\frac
{t+1}{2}}|\Lambda_{-}|\sigma se^{|\Lambda_{-}|\sigma}\mu_{1}^{II}(k,s)ds\\
&  +\mathrm{const.}~e^{\Lambda_{-}(t-1)}|\Lambda_{-}|\int_{\frac{t+1}{2}}%
^{t}\sigma e^{|\Lambda_{-}|\sigma}\mu_{1}^{II}(k,s)ds\\
&  \leq\mathrm{const.}~\left(  \frac{1}{t^{2}}\tilde{\mu}_{\alpha
^{\prime\prime}}(k,t)+\frac{1}{t^{7/2-\delta}}\bar{\mu}_{\alpha^{\prime\prime
}}(k,t)+\frac{1}{t^{3}}\tilde{\mu}_{\alpha^{\prime\prime}}(k,t)\right)  ~,
\end{align*}
and for $|k|>1$, using (\ref{eq:kfortinmu}) to deal with the spurious
$|\Lambda_{-}|$ factor,%
\begin{align*}
|\hat{\eta}_{1}^{r,2}|  &  =\left\vert \frac{1}{2}\left(  e^{-\kappa
(t-1)}-e^{-\kappa t}\right)  \int_{1}^{t}(g_{1,1}(k,\sigma)-2\frac{\kappa}%
{ik}\kappa\sigma)\hat{Q}_{1}(k,s)ds\right\vert \\
&  \leq\mathrm{const.}~e^{\Lambda_{-}(t-1)}|\Lambda_{-}|\int_{1}^{\frac
{t+1}{2}}|\Lambda_{-}|^{2}\sigma se^{|\Lambda_{-}|\sigma}\mu_{1}^{II}(k,s)ds\\
&  +\mathrm{const.}~e^{\Lambda_{-}(t-1)}|\Lambda_{-}|\int_{\frac{t+1}{2}}%
^{t}(1+|\Lambda_{-}|s)e^{|\Lambda_{-}|\sigma}\mu_{1}^{II}(k,s)ds\\
&  \leq\mathrm{const.}\left(  \frac{1}{t^{4}}\tilde{\mu}_{\alpha^{\prime
\prime}-1}(k,t)+\frac{1}{t^{9/2-\delta}}\bar{\mu}_{\alpha^{\prime\prime}%
-1}(k,t)+\frac{1}{t^{4}}\tilde{\mu}_{\alpha^{\prime\prime}-1}(k,t)\right)  ~.
\end{align*}
This shows that $\hat{\eta}_{1}^{r,2}$ is in $\mathcal{B}_{\alpha
^{\prime\prime}-1,\frac{7}{2}-\delta,2}$.

For $\hat{\eta}_{2}^{r,2}$ we use the fact that, using
(\ref{eq:meanValueTheoremQ1}),%
\begin{align*}
&  \left(  g_{1,1}(k,\sigma)-2\frac{\kappa}{ik}\kappa\sigma\right)  \hat
{Q}_{1}(k,s)+\frac{2|k|}{\sqrt{-ik}}\sigma\hat{Q}_{1}(0,s)\\
&  =\left(  g_{1,1}(k,\sigma)-2\frac{\kappa}{ik}\kappa\sigma+\frac{2|k|}%
{\sqrt{-ik}}\sigma\right)  \hat{Q}_{1}(k,s)-\frac{2|k|}{\sqrt{-ik}}\sigma
k\partial_{k}\hat{Q}_{1}(\zeta,s)~,
\end{align*}
for some $\zeta\in\lbrack0,k]$. We analyze the expression%
\[
g_{1,1}(k,\sigma)-2\frac{\kappa}{ik}\kappa\sigma+\frac{2|k|}{\sqrt{-ik}}%
\sigma=\frac{\kappa}{ik}\left(  e^{\kappa\sigma}+\frac{(|k|+\kappa)^{2}}%
{ik}e^{-\kappa\sigma}-2\frac{|k|(|k|+\kappa)}{ik}e^{-|k|\sigma}-2\kappa
\sigma-\frac{2|k|\sqrt{-ik}}{\kappa}\sigma\right)
\]
in some more detail. A straightforward bound is%
\begin{align}
&  \left\vert g_{1,1}(k,\sigma)-2\frac{\kappa}{ik}\kappa\sigma+\frac
{2|k|}{\sqrt{-ik}}\sigma\right\vert \nonumber\\
&  =\left\vert \frac{\kappa}{ik}\left(  e^{\kappa\sigma}+\frac{(|k|+\kappa
)^{2}}{ik}e^{-\kappa\sigma}-2\frac{|k|(|k|+\kappa)}{ik}e^{-|k|\sigma}%
-2\kappa\sigma-\frac{2|k|\sqrt{-ik}}{\kappa}\right)  \right\vert \nonumber\\
&  \leq\left\vert \frac{\kappa}{ik}\right\vert \left\vert (e^{\kappa\sigma
}-1)-(e^{-\kappa\sigma}-1)+\frac{2|k|^{2}+2|k|\kappa}{ik}\left(
(e^{-\kappa\sigma}-1)-(e^{-|k|\sigma}-1)\right)  \right\vert \nonumber\\
&  +\left\vert \frac{\kappa}{ik}\right\vert \left\vert 2\kappa\sigma
+\frac{2|k|\sqrt{-ik}}{\kappa}\sigma\right\vert \leq\left\{
\begin{array}
[c]{lc}%
\mathrm{const.}~(1+|\Lambda_{-}|)\sigma e^{|\Lambda_{-}|\sigma} &
\mathrm{for}~|k|\leq1\\
\mathrm{const.}~(1+|\Lambda_{-}|(\sigma+1))e^{|\Lambda_{-}|\sigma} &
\mathrm{for}~|k|>1
\end{array}
\right.  . \label{eq:eta2Bound1}%
\end{align}
but we may also cancel leading order terms so that%
\begin{align*}
&  g_{1,1}(k,\sigma)-2\frac{\kappa}{ik}\kappa\sigma+\frac{2|k|}{\sqrt{-ik}%
}\sigma\\
&  =\frac{\kappa}{ik}\left(  (e^{\kappa\sigma}-1-\kappa\sigma-\frac{1}%
{2}\kappa^{2}\sigma^{2})-(e^{-\kappa\sigma}-1+\kappa\sigma-\frac{1}{2}%
\kappa^{2}\sigma^{2})\right) \\
&  +\frac{\kappa}{ik}\left(  \frac{2|k|^{2}+2|k|\kappa}{ik}\left(
(e^{-\kappa\sigma}-1+\kappa\sigma)-(e^{-|k|\sigma}-1+|k|\sigma)\right)
\right) \\
&  +\frac{\kappa}{ik}\left(  \frac{2|k|^{2}+2|k|\kappa}{ik}(|k|-\kappa
)\sigma-\frac{2|k|\sqrt{-ik}}{\kappa}\sigma\right)  ~,
\end{align*}
where the third term reduces to $-2i|k|\left(  \kappa-\sqrt{-ik}\right)
\sigma/k$. This yields%
\begin{equation}
\left\vert g_{1,1}(k,\sigma)-2\frac{\kappa}{ik}\kappa\sigma+\frac{2|k|}%
{\sqrt{-ik}}\sigma\right\vert \leq\left\{
\begin{array}
[c]{lc}%
\mathrm{const.}~|\Lambda_{-}|^{2}\sigma(\sigma^{2}+\sigma+1)e^{|\Lambda
_{-}|\sigma} & \mathrm{for}~|k|\leq1\\
\mathrm{const.}~|\Lambda_{-}|^{3}\sigma\left(  \sigma^{2}+\sigma+1\right)
e^{|\Lambda_{-}|\sigma} & \mathrm{for}~|k|>1
\end{array}
\right.  . \label{eq:eta2Bound2}%
\end{equation}
Collecting (\ref{eq:eta2Bound1}) and (\ref{eq:eta2Bound2}) we have%
\begin{align}
&  \left\vert g_{1,1}(k,\sigma)-2\frac{\kappa}{ik}\kappa\sigma+\frac
{2|k|}{\sqrt{-ik}}\sigma\right\vert \nonumber\\
&  \leq\left\{
\begin{array}
[c]{lc}%
\mathrm{const.}~\min\{(1+|\Lambda_{-}|),|\Lambda_{-}|^{2}(\sigma^{2}%
+\sigma+1)\}\sigma e^{|\Lambda_{-}|\sigma} & \mathrm{for}~|k|\leq1\\
\mathrm{const.}~\min\{(1+|\Lambda_{-}|(\sigma+1)),|\Lambda_{-}|^{3}%
\sigma\left(  \sigma^{2}+\sigma+1\right)  \}e^{|\Lambda_{-}|\sigma} &
\mathrm{for}~|k|>1
\end{array}
\right.  . \label{eq:eta2BoundTot}%
\end{align}
We can now bound $\hat{\eta}_{2}^{r,2}$ by splitting it into two terms. We
have%
\begin{align}
|\hat{\eta}_{2}^{r,2}|  &  =\left\vert \frac{1}{2}e^{-\kappa t}\int_{1}%
^{t}\left(  \left(  g_{1,1}(k,\sigma)-2\frac{\kappa}{ik}\kappa\sigma\right)
\hat{Q}_{1}(k,s)+\frac{2|k|}{\sqrt{-ik}}\sigma\hat{Q}_{1}(0,s)\right)
ds\right\vert \nonumber\\
&  \leq\left\vert \frac{1}{2}e^{-\kappa t}\int_{1}^{t}(g_{1,1}(k,\sigma
)-2\frac{\kappa}{ik}\kappa\sigma+\frac{2|k|}{\sqrt{-ik}}\sigma)\hat{Q}%
_{1}(k,s)ds\right\vert \label{eq:eta2ProofB1}\\
&  +\left\vert e^{-\kappa t}\int_{1}^{t}\frac{|k|}{\sqrt{-ik}}\sigma
k\partial_{k}\hat{Q}_{1}(\zeta,s)ds\right\vert ~. \label{eq:eta2ProofB2}%
\end{align}
For the term (\ref{eq:eta2ProofB1}) we get, for $|k|\leq1$,%
\begin{align*}
&  \left\vert \frac{1}{2}e^{-\kappa t}\int_{1}^{t}(g_{1,1}(k,\sigma
)-2\frac{\kappa}{ik}\kappa\sigma+\frac{2|k|}{\sqrt{-ik}}\sigma)\hat{Q}%
_{1}(k,s)ds\right\vert \\
&  \leq\mathrm{const.}~e^{\Lambda_{-}t}\int_{1}^{\frac{t+1}{2}}|\Lambda
_{-}|^{2}(\sigma^{2}+s)\sigma e^{|\Lambda_{-}|\sigma}\mu_{1}^{II}(k,s)ds\\
&  +\mathrm{const.}~e^{\Lambda_{-}t}\int_{\frac{t+1}{2}}^{t}(1+|\Lambda
_{-}|)\sigma e^{|\Lambda_{-}|\sigma}\mu_{1}^{II}(k,s)ds\\
&  \leq\mathrm{const.}\left(  \frac{1}{t^{2-\delta}}\tilde{\mu}_{\alpha
^{\prime\prime}}(k,t)+\frac{1}{t^{5/2-\delta}}\bar{\mu}_{\alpha^{\prime\prime
}}(k,t)+\frac{1}{t^{2}}\tilde{\mu}_{\alpha^{\prime\prime}}(k,t)\right)  ~,
\end{align*}
and for $|k|>1$
\begin{align*}
&  \left\vert \frac{1}{2}e^{-\kappa t}\int_{1}^{t}(g_{1,1}(k,\sigma
)-2\frac{\kappa}{ik}\kappa\sigma+\frac{2|k|}{\sqrt{-ik}}\sigma)\hat{Q}%
_{1}(k,s)ds\right\vert \\
&  \leq\mathrm{const.}~e^{\Lambda_{-}t}\int_{1}^{\frac{t+1}{2}}|\Lambda
_{-}|^{3}\sigma\left(  \sigma^{2}+s\right)  e^{|\Lambda_{-}|\sigma}\mu
_{1}^{II}(k,s)ds\\
&  +\mathrm{const.}~e^{\Lambda_{-}t}\int_{\frac{t+1}{2}}^{t}(1+|\Lambda
_{-}|s)e^{|\Lambda_{-}|\sigma}\mu_{1}^{II}(k,s)ds\\
&  \leq\mathrm{const.}\left(  \frac{1}{t^{3-\delta}}\tilde{\mu}_{\alpha
^{\prime\prime}}(k,t)+\frac{1}{t^{7/2-\delta}}\bar{\mu}_{\alpha^{\prime\prime
}}(k,t)+\frac{1}{t^{3}}\tilde{\mu}_{\alpha^{\prime\prime}}(k,t)\right)  ~.
\end{align*}
The term (\ref{eq:eta2ProofB2}) is bounded using (\ref{eq:kqexpsquareikt}), so
that we get%
\[
\left\vert e^{-\kappa t}\int_{1}^{t}\frac{|k|k}{\sqrt{-ik}}\sigma\partial
_{k}\hat{Q}_{1}(\zeta,s)ds\right\vert \leq\mathrm{const.}\left\vert
e^{-\sqrt{-ik}t}\right\vert |k|^{3/2}\log(1+t)\in\mathcal{B}_{\alpha
^{\prime\prime},\infty,3-\delta}~,
\]
and thus $\hat{\eta}_{2}^{r,2}\in\mathcal{B}_{\alpha^{\prime\prime},\frac
{5}{2}-\delta,2-\delta}$.

To bound $\hat{\eta}_{3}^{r,2}$ we can rearrange the terms and use
(\ref{eq:meanValueTheoremQ1}) to get%
\begin{align*}
\hat{\eta}_{3}^{r,2}  &  =e^{-\kappa t}\left(  e^{\kappa}-1\right)  \int
_{1}^{t}\frac{\kappa}{ik}\kappa\sigma\hat{Q}_{1}(k,s)ds-e^{-\sqrt{-ik}t}%
\int_{1}^{t}\frac{ik}{\sqrt{-ik}}\sigma\hat{Q}_{1}(0,s)ds\\
&  =\left(  e^{-\kappa t}-e^{-\sqrt{-ik}t}\right)  \int_{1}^{t}\frac{ik}%
{\sqrt{-ik}}\sigma\hat{Q}_{1}(0,s)ds\\
&  +e^{-\kappa t}\int_{1}^{t}\left(  \frac{\kappa^{2}}{ik}\left(  e^{\kappa
}-1\right)  -\frac{ik}{\sqrt{-ik}}\right)  \sigma\hat{Q}_{1}(k,s)ds\\
&  +e^{-\kappa t}\int_{1}^{t}\frac{ik}{\sqrt{-ik}}k\sigma\partial_{k}\hat
{Q}_{1}(\zeta,s)ds~,
\end{align*}
for some $\zeta\in\lbrack0,k]$. We then get, using for the first term
(\ref{eq:exp(sqrt(-ik)-Kappa)}),%
\begin{align*}
|\hat{\eta}_{3}^{r,2}|  &  \leq\mathrm{const.}\left\vert e^{-\sqrt{-ik}%
t}\right\vert |k|^{3/2}t|k|^{1/2}\left\vert \int_{1}^{t}\sigma\hat{Q}%
_{1}(0,s)ds\right\vert \\
&  +\mathrm{const.}\left\vert e^{-\kappa t}\int_{1}^{t}\sqrt{-ik}%
k\sigma\partial_{k}\hat{Q}_{1}(\zeta,s)ds\right\vert \\
&  +\mathrm{const.}\left\vert e^{-\kappa t}\int_{1}^{t}\left(  \frac
{\kappa^{2}}{ik}\left(  e^{\kappa}-1\right)  +\sqrt{-ik}\right)  \sigma\hat
{Q}_{1}(k,s)ds\right\vert ~.
\end{align*}
For the third term we use%
\[
\left\vert \frac{\kappa^{2}}{ik}\left(  e^{\kappa}-1\right)  +\sqrt
{-ik}\right\vert =\left\vert -ik\left(  e^{\kappa}-1\right)  -\left(
e^{\kappa}-1-\kappa\right)  -\kappa+\sqrt{-ik}\right\vert
\]
which is bounded above, using (\ref{eq:kappa-sqrt(-ik)Bound}), by%
\[
\mathrm{const.}\left(  |k||\Lambda_{-}|e^{|\Lambda_{-}|}+|\Lambda_{-}%
|^{2}e^{|\Lambda_{-}|}+\min\{|\Lambda_{-}|^{2},|\Lambda_{-}|^{3}\}\right)
\leq\mathrm{const.}~|\Lambda_{-}|^{2}e^{|\Lambda_{-}|}~.
\]
We therefore have
\begin{align*}
|\hat{\eta}_{3}^{r,2}|  &  \leq\mathrm{const.}\left\vert e^{-\sqrt{-ik}%
t}\right\vert (|k|^{2}t+|k|^{3/2}\log(1+t))\\
&  +\mathrm{const.}~e^{\Lambda_{-}t}e^{|\Lambda_{-}|}\int_{1}^{t}|\Lambda
_{-}|^{2}\sigma e^{|\Lambda_{-}|\sigma}\mu_{1}^{II}(k,s)ds~,
\end{align*}
which, due to (\ref{eq:kqexpsquareikt}), Propositions~\ref{prop:sgL1} and
\ref{prop:sgL2}, and using (\ref{eq:kfortinmu}) to trade, where appropriate,
one factor of $|\Lambda_{-}|$ for a factor $t^{-1}$, shows that $\hat{\eta
}_{3}^{r,2}\in\mathcal{B}_{\alpha^{\prime\prime}-1,4-\delta,2}$.

To bound $\hat{\eta}_{4}^{r,2}$ we rearrange the terms using
(\ref{eq:meanValueTheoremQ1}), for some $\zeta\in\lbrack0,k]$, such that%
\begin{align*}
|\hat{\eta}_{4}^{r,2}|  &  =\left\vert e^{-\kappa t}\int_{1}^{t}\left(
\frac{\kappa}{ik}\kappa\hat{Q}_{1}(k,s)+\hat{Q}_{1}(0,s)\right)  \sigma
ds\right\vert \\
&  \leq\mathrm{const.}\left\vert e^{-\sqrt{-ik}t}\right\vert \left\vert
\int_{1}^{t}\left(  \frac{\kappa^{2}}{ik}+1\right)  \sigma\hat{Q}%
_{1}(0,s)ds\right\vert \\
&  +\mathrm{const.}\left\vert e^{-\sqrt{-ik}t}\right\vert \left\vert \int
_{1}^{t}\frac{\kappa^{2}}{ik}k\sigma\partial_{k}\hat{Q}_{1}(\zeta
,s)ds\right\vert \\
&  \leq\mathrm{const.}\left\vert e^{-\sqrt{-ik}t}\right\vert \left(
|k|+\left(  |k|+|k|^{2}\right)  \log(1+t)\right)  ~.
\end{align*}
We then use (\ref{eq:kqexpsquareikt}) to show that $\hat{\eta}_{4}^{r,2}%
\in\mathcal{B}_{\alpha^{\prime\prime},\infty,2-\delta}$.

For $\hat{\eta}_{5}^{r,2}$ we use (\ref{eq:exp(sqrt(-ik)-Kappa)}) and
(\ref{eq:kqexpsquareikt}), so that%
\[
|\hat{\eta}_{5}^{r,2}|=\left\vert \left(  e^{-\kappa t}-e^{-\sqrt{-ik}%
t}\right)  \int_{1}^{t}\left(  1+\frac{|k|}{\sqrt{-ik}}\right)  \sigma\hat
{Q}_{1}(0,s)ds\right\vert \leq\mathrm{const.}\left\vert e^{-\sqrt{-ik}%
t}\right\vert |k|^{3/2}t(1+|k|^{1/2})\in\mathcal{B}_{\alpha^{\prime\prime
},\infty,2}~.
\]

Finally, using (\ref{eq:kqexpsquareikt}) to bound $\hat{\eta}_{6}^{r,2}$, we
get%
\[
|\hat{\eta}_{6}^{r,2}|=\left\vert e^{-\sqrt{-ik}t}\int_{t}^{\infty}\left(
1+\frac{|k|-ik}{\sqrt{-ik}}\right)  \sigma\hat{Q}_{1}(0,s)ds\right\vert
\leq\mathrm{const.}\left\vert e^{-\sqrt{-ik}t}\right\vert (1+|k|^{1/2}%
)\frac{1}{t^{2}}\in\mathcal{B}_{\alpha^{\prime\prime},\infty,2}~.
\]

Gathering the bounds on the $\hat{\eta}_{i}^{r,2}$ terms yields
(\ref{eq:eta2RemainderSpace}), and by the opening remark of the proof also
(\ref{eq:omega2RemainderSpace}).
\end{proof}

\appendix

\section{Appendix}

\subsection{\label{sec:explicitasfun}Explicit expressions for the asymptotes}

The following are explicit functions for which Theorem~\ref{thm:mainresult} is
true:%
\begin{align}
\varphi_{1}(z) &  =-\frac{1}{4\sqrt{\pi}}\frac{r+1-z^{2}+zr+2z}{r^{3}%
\sqrt{r+1}}~,\label{eq:asphi1}\\
\psi_{1}(z) &  =-\frac{1}{4\sqrt{\pi}}\frac{r+1-z^{2}-zr-2z}{r^{3}\sqrt{r+1}%
}~,\label{eq:aspsi1}\\
\varphi_{2,1}(z) &  =-\frac{1}{\pi}\frac{2z}{r^{4}}~,\label{eq:asphi21}\\
\varphi_{2,2}(z) &  =\frac{1}{2\pi}\frac{1-z^{2}}{r^{4}}~,\label{eq:asphi22}\\
\psi_{2,1}(z) &  =-\frac{1}{\pi}\frac{1-z^{2}}{r^{4}}~,\label{eq:aspsi21}\\
\psi_{2,2}(z) &  =-\frac{1}{2\pi}\frac{2z}{r^{4}}~,\label{eq:aspsi22}\\
\eta_{W}(z) &  =-\frac{1}{2\sqrt{\pi z^{3}}}\left\{
\begin{array}
[c]{cc}%
e^{-1/4z}, & z\geq0\\
0, & z<0
\end{array}
\right.  ~,\label{eq:asetaW}\\
\omega_{W}(z) &  =\frac{1}{4\sqrt{\pi z^{5}}}%
\begin{cases}
(1-2z)e^{-1/4z}, & z\geq0\\
0, & z<0
\end{cases}
~,\label{eq:asomegaW}\\
\eta_{B}(z) &  =-\frac{1}{4\pi z^{3}}%
\begin{cases}
2z+\sqrt{\pi|z|}(1-2z)e^{-1/4z}(1-\operatorname{erfi}(1/\sqrt{4|z|})), &
z\geq0\\
2z+\sqrt{\pi|z|}(1-2z)e^{-1/4z}(1-\operatorname{erf}(1/\sqrt{4|z|})), & z<0
\end{cases}
~,\label{eq:asetaB}\\
\omega_{B}(z) &  =\frac{1}{8\pi z^{4}}%
\begin{cases}
2z(1-4z)+\sqrt{\pi|z|}(1-6z)e^{-1/4z}(1-\operatorname{erfi}(1/\sqrt{4|z|})), &
z\geq0\\
2z(1-4z)+\sqrt{\pi|z|}(1-6z)e^{-1/4z}(1-\operatorname{erf}(1/\sqrt{4|z|})), &
z<0
\end{cases}
~,\label{eq:asomegaB}%
\end{align}
where%
\[
r=\sqrt{1+z^{2}}~.
\]
These functions are obtained by taking the inverse Fourier transform of the
asymptotic terms calculated in Section~\ref{sec:asymptoticTerms}. 

\subsection{\label{sec:technical}Technical aspects of computations}

\subsubsection*{Mean-value theorem applied to $\hat{Q}_{1}$}

Applying the mean-value theorem in the variable $k$ we have%
\begin{equation}
\hat{Q}_{1}(k,s)=\hat{Q}_{1}(0,s)+k\partial_{k}\hat{Q}_{1}(\zeta,s)~,
\label{eq:meanValueTheoremQ1}%
\end{equation}
with some $\zeta\in\lbrack0,k]$ and (see
\cite{Boeckle.Wittwer-Decayestimatessolutions2011})%
\begin{equation}
\partial_{k}\hat{Q}_{1}\in\mathcal{B}_{\alpha,\frac{3}{2},2}~.
\label{eq:dkomegaBound}%
\end{equation}
The bound on $\partial_{k}\hat{Q}_{1}$ is improved in
Sections~\ref{sec:improvedkQ1} and \ref{sec:finalImprovements}, where it is
proved that this function is in $\mathcal{B}_{\alpha^{\prime\prime},2,2}$ and
$\mathcal{B}_{\alpha^{\prime\prime},\frac{5}{2}-\delta,2}$, respectively.

\subsubsection*{Inequalities for $k$ and $\kappa$}

Since $\kappa=\sqrt{k^{2}-ik}$ and $\Lambda_{-}=-\operatorname{Re}%
(\kappa)=-\frac{1}{2}\sqrt{2\sqrt{k^{2}+k^{4}}+2k^{2}}$, we have%
\[
|\kappa|=(k^{2}+k^{4})^{1/4}\leq|k|^{1/2}+|k|\leq2^{3/4}|\kappa|\leq
2^{3/4}(1+|k|)~,
\]
and that%
\[
|k|\leq|\Lambda_{-}|\leq|\kappa|\leq\sqrt{2}|\Lambda_{-}|~,
\]
from which we get, for $\sigma\geq0$,%
\[
e^{\Lambda_{-}\sigma}\leq e^{-|k|\sigma}~.
\]
The following inequalities are used throughout the proofs%
\begin{align}
\left\vert \kappa-\sqrt{-ik}\right\vert  &  =\left\vert \frac{k^{2}%
-ik-(-ik)}{\sqrt{k^{2}-ik}+\sqrt{-ik}}\right\vert \leq\frac{k^{2}}{2\left\vert
\sqrt{-ik}\right\vert }\nonumber\\
&  \leq\mathrm{const.}~|k|^{3/2}\leq\mathrm{const.}\min\{|\Lambda_{-}%
|^{2},|\Lambda_{-}|^{3}\}~, \label{eq:kappa-sqrt(-ik)Bound}%
\end{align}
and%
\begin{align}
\left\vert e^{-\kappa t}-e^{-\sqrt{-ik}t}\right\vert  &  \leq\left\vert
e^{-\sqrt{-ik}t}\left(  e^{(\sqrt{-ik}-\kappa)t}-1\right)  \right\vert
\nonumber\\
&  \leq\mathrm{const.}~\left\vert e^{-\sqrt{-ik}t}\right\vert \left\vert
\sqrt{-ik}-\kappa\right\vert t\nonumber\\
&  \leq\mathrm{const.}~\left\vert e^{-\sqrt{-ik}t}\right\vert |k|^{3/2}t~.
\label{eq:exp(sqrt(-ik)-Kappa)}%
\end{align}

\subsubsection*{Some inequalities for $\bar{\mu}_{\alpha}$ and $\tilde{\mu
}_{\alpha}$}

Using the notation introduced in Definition~\ref{def:mu}, we have for
$\alpha\geq0$ and $1\leq t<2$,%
\begin{align*}
\bar{\mu}_{\alpha}(k,t)  &  \leq\mathrm{const.}~\tilde{\mu}_{\alpha}%
(k,t)\leq\mathrm{const.}\\
\tilde{\mu}_{\alpha}(k,t)  &  \leq\mathrm{const.}~\bar{\mu}_{\alpha}%
(k,t)\leq\mathrm{const.}%
\end{align*}
and that for $t\geq2$ and $\beta\geq0$,%
\begin{align*}
e^{-|k|(t-1)}\mu_{\alpha,r}(k,t)  &  \leq\mathrm{const.}~e^{-|k|(t-1)}%
\leq\mathrm{const.}~\bar{\mu}_{\beta}(k,t)~,\\
e^{\Lambda_{-}(t-1)}\mu_{\alpha,r}(k,t)  &  \leq\mathrm{const.}~e^{\Lambda
_{-}(t-1)}\leq\mathrm{const.}~\tilde{\mu}_{\beta}(k,t)~,
\end{align*}
such that we have, for all $t\geq0$,%
\begin{align}
e^{-|k|(t-1)}\mu_{\alpha,r}(k,t)  &  \leq\mathrm{const.}~\bar{\mu}_{\alpha
}(k,t)~,\label{eq:mutomubar}\\
e^{\Lambda_{-}(t-1)}\mu_{\alpha,r}(k,t)  &  \leq\mathrm{const.}~\tilde{\mu
}_{\alpha}(k,t)~. \label{eq:mutomutilde}%
\end{align}
Another important inequality used in the proofs is that, for $p\geq0$,%
\begin{equation}
|k|^{p}\mu_{\alpha,r}\left(  k,t\right)  \leq\frac{\mathrm{const.}}{t^{rp}}%
\mu_{\alpha-p,r}\left(  k,t\right)  ~, \label{eq:kfortinmu}%
\end{equation}
which is due to the fact that%
\[
|k|^{p}\mu_{\alpha,r}\left(  k,t\right)  =\frac{t^{rp}}{t^{rp}}\frac{|k|^{p}%
}{1+(|k|t^{r})^{\alpha}}\leq\frac{\mathrm{const.}}{t^{rp}}\frac{1}%
{1+(|k|t^{r})^{\alpha-p}}~.
\]

\subsubsection*{Function spaces for some exponential functions}

\begin{proposition}
For $\alpha\geq1$, $p$, $q\geq0$, we have%
\begin{align}
k^{p}e^{-|k|t}  &  \in\mathcal{B}_{\alpha,p,\infty}~,p\geq
0~,\label{eq:kpexpabskt}\\
k^{q}e^{-\sqrt{-ik}t}~,~k^{q}e^{-\kappa t}  &  \in\mathcal{B}_{\alpha
,\infty,2q}~,q\geq0~. \label{eq:kqexpsquareikt}%
\end{align}

\end{proposition}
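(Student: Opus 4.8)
The plan is to reduce each claim to an elementary one-variable boundedness statement, after first pinning down the decay rates of the exponentials involved.

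For \eqref{eq:kpexpabskt}: by Definition~\ref{def:BapqSpaces} it suffices to bound $|k|^{p}e^{-|k|t}$ by a constant times $t^{-p}\bar{\mu}_{\alpha}(k,t)=t^{-p}\bigl(1+(|k|t)^{\alpha}\bigr)^{-1}$, uniformly in $k\neq 0$ and $t\geq 1$. Multiplying through by $t^{p}\bigl(1+(|k|t)^{\alpha}\bigr)$ and setting $x=|k|t\geq 0$, this is precisely the assertion that $x^{p}e^{-x}(1+x^{\alpha})=x^{p}e^{-x}+x^{p+\alpha}e^{-x}$ is bounded on $[0,\infty)$, which holds because $\sup_{x\geq 0}x^{\beta}e^{-x}<\infty$ for every $\beta\geq 0$.

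For \eqref{eq:kqexpsquareikt} the extra ingredient is the computation of the decay rate. Writing the principal square root explicitly, one checks that $\operatorname{Re}(\sqrt{-ik})=\sqrt{|k|/2}$ for all $k\neq 0$ (the sign of $k$ affects only the imaginary part), so $|e^{-\sqrt{-ik}t}|=e^{-\sqrt{|k|/2}\,t}$. For $\kappa=\sqrt{k^{2}-ik}$ one has $\operatorname{Re}(\kappa)=|\Lambda_{-}|$, and writing $\kappa=x+iy$ the relations $x^{2}-y^{2}=k^{2}$ and $x^{2}+y^{2}=|\kappa|^{2}=|k|\sqrt{1+k^{2}}$ give $|\Lambda_{-}|^{2}=x^{2}=\tfrac12\bigl(|k|\sqrt{1+k^{2}}+k^{2}\bigr)\geq\tfrac12|k|$, whence $|e^{-\kappa t}|=e^{-|\Lambda_{-}|t}\leq e^{-\sqrt{|k|/2}\,t}$. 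Thus in both cases the modulus is at most $e^{-\sqrt{|k|/2}\,t}$.

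It then suffices to bound $|k|^{q}e^{-\sqrt{|k|/2}\,t}$ by a constant times $t^{-2q}\tilde{\mu}_{\alpha}(k,t)=t^{-2q}\bigl(1+(|k|t^{2})^{\alpha}\bigr)^{-1}$. Multiplying by $t^{2q}\bigl(1+(|k|t^{2})^{\alpha}\bigr)$ and putting $w=|k|t^{2}\geq 0$ (so that $|k|^{q}t^{2q}=w^{q}$ and $\sqrt{|k|/2}\,t=\sqrt{w/2}$), this reduces to the boundedness of $w^{q}e^{-\sqrt{w/2}}(1+w^{\alpha})=w^{q}e^{-\sqrt{w/2}}+w^{q+\alpha}e^{-\sqrt{w/2}}$ on $[0,\infty)$, which follows from $\sup_{w\geq 0}w^{\beta}e^{-\sqrt{w/2}}<\infty$ for $\beta\geq 0$ (substitute $v=\sqrt{w/2}$). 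There is no genuine obstacle; the only point needing slight care is the lower bound $|\Lambda_{-}|\geq\sqrt{|k|/2}$ for small $|k|$, which is a little stronger than the bound $|k|\leq|\Lambda_{-}|$ in the technical inequalities but is obtained from the same identity for $|\Lambda_{-}|^{2}$.
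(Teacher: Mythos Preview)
Your proof is correct and follows essentially the same strategy as the paper: unwind the definition of the $\mathcal{B}_{\alpha,p,\infty}$ or $\mathcal{B}_{\alpha,\infty,q}$ norm and reduce, via the natural scaling substitution, to the boundedness of a single-variable function of the form $z^{\beta}e^{-z}$ or $w^{\beta}e^{-\sqrt{w/2}}$. The only minor point of divergence is in the treatment of $k^{q}e^{-\kappa t}$: the paper uses (implicitly) $|k|\leq 2|\Lambda_{-}|^{2}$ and the substitution $z=|\Lambda_{-}|t$, whereas you derive the equivalent sharp lower bound $|\Lambda_{-}|\geq\sqrt{|k|/2}$ directly from the identity for $|\Lambda_{-}|^{2}$ and thereby reduce the $\kappa$ case to the same substitution $w=|k|t^{2}$ already used for $\sqrt{-ik}$. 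Your handling is slightly more unified; both are equally valid.
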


\begin{proof}
Using Definition~\ref{def:BapqSpaces} for functions belonging in
$\mathcal{B}_{\alpha,p,\infty}$ spaces, we must have%
\[
\sup_{t\geq1}\sup_{k\in\mathbb{R}\setminus\{0\}}\frac{|k^{p}e^{-|k|t}|}%
{\frac{1}{t^{p}}\bar{\mu}_{\alpha}(k,t)}=\sup_{t\geq1}\sup_{k\in
\mathbb{R}\setminus\{0\}}(|k|t)^{p}(1+(|k|t)^{\alpha})e^{-|k|t}<\infty~.
\]
We use the change of variable $z=kt$, so that%
\[
\sup_{t\geq1}\sup_{z\in\mathbb{R}\setminus\{0\}}|z|^{p}(1+|z|^{\alpha
})e^{-|z|}<\infty~.
\]

Similarly we have%
\[
\sup_{t\geq1}\sup_{k\in\mathbb{R}\setminus\{0\}}\frac{|k^{q}e^{-|k|t}|}%
{\frac{1}{t^{2q}}\tilde{\mu}_{\alpha}(k,t)}=\sup_{t\geq1}\sup_{k\in
\mathbb{R}\setminus\{0\}}(|k|t^{2})^{q}(1+(|k|t^{2})^{\alpha})\left\vert
e^{-\sqrt{-ikt^{2}}}\right\vert ~,
\]
and using the change of variable $z=kt^{2}$, we get%
\[
\sup_{t\geq1}\sup_{z\in\mathbb{R}\setminus\{0\}}|z|^{q}(1+|z|^{\alpha
})e^{-\sqrt{|z|/2}}<\infty~.
\]
For the functions $k^{q}e^{-\kappa t}$ we have%
\begin{align*}
\sup_{t\geq1}\sup_{k\in\mathbb{R}\setminus\{0\}}\frac{|k^{q}e^{-\kappa t}%
|}{\frac{1}{t^{2q}}\tilde{\mu}_{\alpha}(k,t)}  &  =\sup_{t\geq1}\sup
_{k\in\mathbb{R}\setminus\{0\}}(|k|t^{2})^{q}(1+(|k|t^{2})^{\alpha}%
)e^{\Lambda_{-}t}\\
&  \leq\sup_{t\geq1}\sup_{k\in\mathbb{R}\setminus\{0\}}(|\Lambda_{-}%
|t)^{2q}(1+((|\Lambda_{-}|t)^{2\alpha})e^{\Lambda_{-}t}~,
\end{align*}
and with the change of variable $z=|\Lambda_{-}|t$%
\[
\sup_{t\geq1}\sup_{z\in\mathbb{R}\setminus\{0\}}|z|^{2q}(1+|z|^{2\alpha
})e^{-z}<\infty~.
\]

\end{proof}

\subsection{Bounds on convolution\label{sec:newConv}}

We present variants of Proposition~9 and Corollary~10 from
\cite{Hillairet.Wittwer-Existenceofstationary2009}, which give bounds on
convolution products in $\mathcal{B}_{\alpha,p,q}$ spaces.

\begin{proposition}
[convolution]\label{prop:optConvBapq}Let $\alpha>1$, $s\geq r\geq0,$ and let
$a$, $b$ be continuous functions from $\mathbb{R}\setminus\{0\}\times
\lbrack1,\infty)$ to $\mathbb{C}$ satisfying the bounds,
\begin{align*}
\left\vert a(k,t)\right\vert  &  \leq\mu_{\alpha,r}(k,t)~,\\
\left\vert b(k,t)\right\vert  &  \leq\mu_{\alpha,s}(k,t)~,
\end{align*}
with $\mu_{a,r}$ and $\mu_{\alpha,s}$ as given in Definition~\ref{def:mu}.
Then, the convolution $a\ast b$ is a continuous function from $\mathbb{R}%
\times\lbrack1,\infty)$ to $\mathbb{C}$ and we have the bound%
\begin{equation}
\left\vert \left(  a\ast b\right)  (k,t)\right\vert \leq\mathrm{const.}%
~\frac{1}{t^{s}}\mu_{\alpha,r}\left(  k,t\right)  ~, \label{eq:optConvBapq}%
\end{equation}
uniformly in $t\geq1$, $k\in\mathbb{R}$.
\end{proposition}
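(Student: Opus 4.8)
\medskip

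\noindent\textbf{Proof strategy.} The plan is to reduce \eqref{eq:optConvBapq} to a one-variable estimate by rescaling, and to prove that estimate by splitting the integral around the far peak of the slowly decaying factor. From the convolution formula in Definition~\ref{def:fourier} and the hypotheses on $a$ and $b$,
\[
|(a\ast b)(k,t)|\le\frac{1}{2\pi}\int_{\mathbb{R}}\mu_{\alpha,r}(k-k',t)\,\mu_{\alpha,s}(k',t)\,dk'.
\]
Substituting $k'=v\,t^{-s}$ and setting $\lambda:=k\,t^{s}$, $\varepsilon:=t^{r-s}$ (so that $\varepsilon\in(0,1]$, using $s\ge r$ and $t\ge1$), one finds $\mu_{\alpha,s}(k',t)=(1+|v|^{\alpha})^{-1}$, $\mu_{\alpha,r}(k-k',t)=(1+\varepsilon^{\alpha}|\lambda-v|^{\alpha})^{-1}$ and $\mu_{\alpha,r}(k,t)=(1+\varepsilon^{\alpha}|\lambda|^{\alpha})^{-1}$, whence $|(a\ast b)(k,t)|\le(t^{-s}/2\pi)\,I(\varepsilon,\lambda)$ with
\[
I(\varepsilon,\lambda):=\int_{\mathbb{R}}\frac{dv}{(1+\varepsilon^{\alpha}|\lambda-v|^{\alpha})(1+|v|^{\alpha})}.
\]
Thus \eqref{eq:optConvBapq} is equivalent to the bound $I(\varepsilon,\lambda)\le\mathrm{const.}\,(1+\varepsilon^{\alpha}|\lambda|^{\alpha})^{-1}$, uniformly in $\varepsilon\in(0,1]$ and $\lambda\in\mathbb{R}$, and that is what I would establish.

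The first ingredient is that $C_{\alpha}:=\int_{\mathbb{R}}(1+|v|^{\alpha})^{-1}dv$ is finite because $\alpha>1$. For $|\lambda|\le2$ the estimate is immediate: $I(\varepsilon,\lambda)\le C_{\alpha}$, while $(1+\varepsilon^{\alpha}|\lambda|^{\alpha})^{-1}\ge(1+2^{\alpha})^{-1}$. For $|\lambda|>2$ I would split the $v$-integral at $|\lambda-v|=|\lambda|/2$. On the region $\{|\lambda-v|\ge|\lambda|/2\}$ one has $1+\varepsilon^{\alpha}|\lambda-v|^{\alpha}\ge2^{-\alpha}(1+\varepsilon^{\alpha}|\lambda|^{\alpha})$, so that factor is dominated by $2^{\alpha}(1+\varepsilon^{\alpha}|\lambda|^{\alpha})^{-1}$ and the remaining integral of $(1+|v|^{\alpha})^{-1}$ is at most $C_{\alpha}$, giving the required form; this region is harmless. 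On the region $\{|\lambda-v|<|\lambda|/2\}$ one has $|v|>|\lambda|/2>1$, hence $(1+|v|^{\alpha})^{-1}\le2^{\alpha}|\lambda|^{-\alpha}$, while the substitution $w=\varepsilon(\lambda-v)$ gives
\[
\int_{|\lambda-v|<|\lambda|/2}\frac{dv}{1+\varepsilon^{\alpha}|\lambda-v|^{\alpha}}=\frac{1}{\varepsilon}\int_{|w|<\varepsilon|\lambda|/2}\frac{dw}{1+|w|^{\alpha}}\le\min\bigl\{|\lambda|,\,C_{\alpha}/\varepsilon\bigr\},
\]
so this region contributes at most $2^{\alpha}|\lambda|^{-\alpha}\min\{|\lambda|,\,C_{\alpha}/\varepsilon\}$.

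Estimating this last quantity is the crux, and it uses $\alpha>1$ a second time. Bounding the minimum by $|\lambda|$ gives $2^{\alpha}|\lambda|^{1-\alpha}\le2$ (since $|\lambda|>2$ and $\alpha>1$), which suffices whenever $\varepsilon^{\alpha}|\lambda|^{\alpha}\le1$, as then $(1+\varepsilon^{\alpha}|\lambda|^{\alpha})^{-1}\ge\tfrac12$. Bounding it instead by $C_{\alpha}/\varepsilon$ gives $2^{\alpha}C_{\alpha}\,\varepsilon^{\alpha-1}(\varepsilon^{\alpha}|\lambda|^{\alpha})^{-1}\le2^{\alpha}C_{\alpha}(\varepsilon^{\alpha}|\lambda|^{\alpha})^{-1}$ (using $\varepsilon^{\alpha-1}\le1$), which suffices whenever $\varepsilon^{\alpha}|\lambda|^{\alpha}>1$, as then $(\varepsilon^{\alpha}|\lambda|^{\alpha})^{-1}\le2(1+\varepsilon^{\alpha}|\lambda|^{\alpha})^{-1}$. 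The main obstacle---in fact the only non-routine point---is precisely this region where $v$ lies near the far peak $v=\lambda$ of the slowly decaying factor: integrating that factor alone over the region yields $C_{\alpha}/\varepsilon=C_{\alpha}t^{\,s-r}$, too weak by the factor $t^{\,s-r}$, and one recovers it either from the finite length $|\lambda|$ of the region or from $\varepsilon^{\alpha-1}\le1$, both available thanks to $\alpha>1$ together with $s\ge r$.

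Finally, the continuity of $a\ast b$ on $\mathbb{R}\times[1,\infty)$ follows from dominated convergence: for $t\ge1$ the integrand $|a(k-k',t)\,b(k',t)|$ is bounded by $\mu_{\alpha,s}(k',1)=(1+|k'|^{\alpha})^{-1}\in L^{1}(\mathbb{R})$, which is independent of $(k,t)$, and $a$, $b$ are continuous off the measure-zero set $\{k'=0\}\cup\{k'=k\}$.
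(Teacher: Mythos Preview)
Your proof is correct and follows essentially the same route as the paper: both arguments split the convolution integral into a ``far'' region, where the slowly decaying factor $\mu_{\alpha,r}$ is already comparable to $\mu_{\alpha,r}(k,t)$, and a ``near'' region around its peak, handled by two sub-cases depending on the size of $|k|t^{r}$ (your $\varepsilon|\lambda|$); in both proofs the key step in the hard sub-case is the inequality $t^{(\alpha-1)(r-s)}\le 1$ (your $\varepsilon^{\alpha-1}\le 1$), which uses $\alpha>1$ and $s\ge r$. Your preliminary rescaling $v=k't^{s}$, $\lambda=kt^{s}$, $\varepsilon=t^{r-s}$ is a clean way to make the roles of these parameters explicit, but the underlying decomposition and estimates coincide with the paper's.
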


\begin{proof}
We begin by splitting the integration interval into three sub-intervals, so
that%
\begin{align*}
2\pi\left\vert \left(  a\ast b\right)  (k,t)\right\vert  &  \leq\int_{-\infty
}^{\infty}\mu_{\alpha,r}\left(  k^{\prime},t\right)  \mu_{\alpha,s}\left(
k-k^{\prime},t\right)  dk^{\prime}=\\
&  =\int_{-\infty}^{-k/2}\ldots dk^{\prime}+\int_{k/2}^{\infty}\ldots
dk^{\prime}+\int_{-k/2}^{k/2}\ldots dk^{\prime}~,
\end{align*}
where we only consider $k>0$ since the functions $\mu_{\alpha,r}$ and
$\mu_{\alpha,s}$ are even with respect to $k$. We first note that%
\begin{align*}
&  \int_{-\infty}^{-k/2}\mu_{\alpha,r}\left(  k^{\prime},t\right)  \mu
_{\alpha,s}\left(  k-k^{\prime},t\right)  dk^{\prime}+\int_{k/2}^{\infty}%
\mu_{\alpha,r}\left(  k^{\prime},t\right)  \mu_{\alpha,s}\left(  k-k^{\prime
},t\right)  dk^{\prime}\\
&  \leq\mathrm{const.}~\mu_{\alpha,r}(\pm k/2,t)\int_{\mathbb{R}}\mu
_{\alpha,s}\left(  k-k^{\prime},t\right)  dk^{\prime}\leq\frac{\mathrm{const.}%
}{t^{s}}\mu_{\alpha,r}(k,t)~,
\end{align*}
where the factor $t^{-s}$ arises from the change of variables used in the
integral. For $kt^{r}\leq1$, we have $\frac{1}{2}\leq\mu_{\alpha,r}\leq1$, so
that%
\[
\int_{-k/2}^{k/2}\mu_{\alpha,r}\left(  k^{\prime},t\right)  \mu_{\alpha
,s}\left(  k-k^{\prime},t\right)  dk^{\prime}\leq\int_{\mathbb{R}}\mu
_{\alpha,s}\left(  k-k^{\prime},t\right)  dk^{\prime}\leq(\mathrm{const.}%
~\mu_{\alpha,r}(k,t))\cdot\frac{\mathrm{const.}}{t^{s}}~.
\]
For $kt^{r}>1$, we also have $kt^{s}>1$, and furthermore%
\[
\frac{\mu_{\alpha,s}(k,t)}{\mu_{\alpha,r}(k,t)}=\frac{1+(|k|t^{r})^{\alpha}%
}{1+(|k|t^{s})^{\alpha}}\leq\frac{2(|k|t^{r})^{\alpha}}{(|k|t^{s})^{\alpha}%
}=2t^{\alpha(r-s)}~,
\]
which shows that%
\[
\int_{-k/2}^{k/2}\mu_{\alpha,r}\left(  k^{\prime},t\right)  \mu_{\alpha
,s}\left(  k-k^{\prime},t\right)  dk^{\prime}\leq\mu_{\alpha,s}(k/2,t)\int
_{\mathbb{R}}\mu_{\alpha,r}\left(  k^{\prime},t\right)  dk^{\prime}\leq
\mu_{\alpha,r}(k/2,t)2t^{\alpha(r-s)}\frac{\mathrm{const.}}{t^{r}}~,
\]
which, since $\alpha>1$ and $s\geq r$, is bounded by a multiple of
$\mu_{\alpha,r}(k,t)/t^{s}$. Gathering the bounds yields (\ref{eq:optConvBapq}).
\end{proof}

\begin{corollary}
\label{corr:optConvBapq}Let $\alpha_{i}>1$, and, for $i=1,2$ let $p_{i}%
,q_{i}\geq0$. Let $\hat{f}_{i}\in\mathcal{B}_{\alpha_{i},p_{i},q_{i}}$, and
let%
\begin{align*}
\alpha &  =\min\{\alpha_{1},\alpha_{2}\}~,\\
p  &  =\min\{p_{1}+p_{2}+1,p_{1}+q_{2}+2,p_{2}+q_{1}+2\}~,\\
q  &  =q_{1}+q_{2}+2~.
\end{align*}
Then $\hat{f}_{1}\ast\hat{f}_{2}\in\mathcal{B}_{\alpha,p,q}$ and there exists
a constant $C$, dependent only on $\alpha_{i}$, such that%
\[
\left\Vert \hat{f}_{1}\ast\hat{f}_{2};\mathcal{B}_{\alpha,p,q}\right\Vert \leq
C\left\Vert \hat{f}_{1};\mathcal{B}_{\alpha_{1},p_{1},q_{1}}\right\Vert
\cdot\left\Vert \hat{f}_{2};\mathcal{B}_{\alpha_{2},p_{2},q_{2}}\right\Vert
~.
\]

\end{corollary}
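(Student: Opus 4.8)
The plan is to expand the defining bound on each of the two factors and then reduce everything to Proposition~\ref{prop:optConvBapq}. By definition of the $\mathcal{B}_{\alpha_i,p_i,q_i}$ norm one has, for $i=1,2$,
\[
|\hat{f}_i(k,t)|\leq\|\hat{f}_i;\mathcal{B}_{\alpha_i,p_i,q_i}\|\left(\frac{1}{t^{p_i}}\bar{\mu}_{\alpha_i}(k,t)+\frac{1}{t^{q_i}}\tilde{\mu}_{\alpha_i}(k,t)\right),
\]
so that $|\hat{f}_1\ast\hat{f}_2|$ is bounded, up to the product of the two norms, by a sum of four convolution integrals of the form $\mu_{\alpha_1,r}\ast\mu_{\alpha_2,s}$ with $r,s\in\{1,2\}$, carrying the prefactors $t^{-(p_1+p_2)}$, $t^{-(p_1+q_2)}$, $t^{-(q_1+p_2)}$ and $t^{-(q_1+q_2)}$ respectively (recall $\bar{\mu}_\alpha=\mu_{\alpha,1}$ and $\tilde{\mu}_\alpha=\mu_{\alpha,2}$).

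Next I would bring the two factors to a common index. For this I would record the elementary inequality $\mu_{\alpha',r}(k,t)\leq 2\,\mu_{\alpha,r}(k,t)$, valid whenever $\alpha\leq\alpha'$: treating separately $|k|t^r\leq 1$ and $|k|t^r>1$ reduces it to the obvious estimate $1+x^{\alpha}\leq 2+2x^{\alpha'}$ for $x\geq0$. Replacing each $\mu_{\alpha_i,\cdot}$ by $2\,\mu_{\alpha,\cdot}$ with $\alpha=\min\{\alpha_1,\alpha_2\}>1$, and using commutativity of the convolution to put the slow factor ($r=1$) first whenever it is paired with the fast one ($r=2$) — so that the hypothesis $s\geq r$ of Proposition~\ref{prop:optConvBapq} is met — I would then apply (\ref{eq:optConvBapq}) to each of the four pieces. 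The pair $(r,s)=(1,1)$ produces $\mathrm{const.}\,t^{-1}\bar{\mu}_\alpha(k,t)$, the two mixed pairs produce $\mathrm{const.}\,t^{-2}\bar{\mu}_\alpha(k,t)$, and the pair $(2,2)$ produces $\mathrm{const.}\,t^{-2}\tilde{\mu}_\alpha(k,t)$; multiplying by the prefactors above gives
\[
|\hat{f}_1\ast\hat{f}_2(k,t)|\leq\mathrm{const.}\,\|\hat{f}_1\|\,\|\hat{f}_2\|\left(\frac{\bar{\mu}_\alpha(k,t)}{t^{p_1+p_2+1}}+\frac{\bar{\mu}_\alpha(k,t)}{t^{p_1+q_2+2}}+\frac{\bar{\mu}_\alpha(k,t)}{t^{q_1+p_2+2}}+\frac{\tilde{\mu}_\alpha(k,t)}{t^{q_1+q_2+2}}\right).
\]

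Finally, since $t\geq1$, the three $\bar{\mu}_\alpha$ terms are all dominated by $\mathrm{const.}\,t^{-p}\bar{\mu}_\alpha(k,t)$ with $p=\min\{p_1+p_2+1,\,p_1+q_2+2,\,q_1+p_2+2\}$, and the last term is exactly $t^{-q}\tilde{\mu}_\alpha(k,t)$ with $q=q_1+q_2+2$; comparing with Definition~\ref{def:BapqSpaces} yields $\hat{f}_1\ast\hat{f}_2\in\mathcal{B}_{\alpha,p,q}$ together with the claimed norm inequality, the constant $C$ depending only on $\alpha_1,\alpha_2$ through the constant of Proposition~\ref{prop:optConvBapq}. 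Continuity of $\hat{f}_1\ast\hat{f}_2$ on all of $\mathbb{R}\times[1,\infty)$, in particular its extension across $k=0$, would follow by dominated convergence from the integrable majorants just exhibited, exactly as in the proof of Proposition~\ref{prop:optConvBapq}. I do not expect any genuine obstacle here; the one place that needs care is the bookkeeping of the $(r,s)$ patterns — namely that Proposition~\ref{prop:optConvBapq} requires $s\geq r$ (forcing the reordering of the mixed terms) and that it therefore returns an extra factor $t^{-2}$ in three of the four cases and only $t^{-1}$ in the $(1,1)$ case, which is precisely what produces the $\min$ in the definition of $p$ and the $+2$ in the definition of $q$.
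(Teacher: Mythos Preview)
Your argument is correct and is precisely the spelled-out version of what the paper does: the paper's proof just says ``using that $\mathcal{B}_{\alpha_i,p_i,q_i}\subset\mathcal{B}_{\min\{\alpha_1,\alpha_2\},p_i,q_i}$, this is an immediate consequence of Proposition~\ref{prop:optConvBapq}'', which amounts exactly to your reduction to a common $\alpha$ followed by the four-term application of (\ref{eq:optConvBapq}). No different idea is involved.
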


\begin{proof}
Using that $\mathcal{B}_{\alpha_{i},p_{i},q_{i}}\subset\mathcal{B}%
_{\min\{\alpha_{1},\alpha_{2}\},p_{i},q_{i}}$, this is an immediate
consequence of Proposition~\ref{prop:optConvBapq}.
\end{proof}

\begin{proposition}
[convolution with $|\kappa|^{-1}$ discontinuity]\label{prop:convdisc}Let
$\alpha_{i}>1$, and, for $i=1,2$ let $p_{i},q_{i}\geq0$. Let $\hat{f}%
\in\mathcal{B}_{\alpha_{1},p_{1},q_{1}}$ and $\kappa\cdot\hat{g}\in
\mathcal{B}_{\alpha_{2},p_{2},q_{2}}$, and let%
\begin{align*}
\alpha &  =\min\{\alpha_{1,}\alpha_{2}\}~,\\
p  &  =\min\{p_{1}+p_{2}+\frac{1}{2},p_{1}+q_{2}+1\}~,\\
q  &  =\min\{q_{1}+p_{2}+\frac{1}{2},q_{1}+q_{2}+1\}~.
\end{align*}
Then $\hat{f}\ast\hat{g}\in\mathcal{B}_{\alpha,p,q}$ and there exists a
constant $C$, dependent only on $\alpha_{i}$, such that%
\[
\left\Vert \hat{f}\ast\hat{g};\mathcal{B}_{\alpha,p,q}\right\Vert \leq
C\left\Vert \hat{f};\mathcal{B}_{\alpha_{1},p_{1},q_{1}}\right\Vert
\cdot\left\Vert \hat{g};\mathcal{B}_{\alpha_{2},p_{2},q_{2}}\right\Vert ~.
\]

\end{proposition}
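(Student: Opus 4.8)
The plan is to reduce this statement to Proposition~\ref{prop:optConvBapq} by absorbing the factor $|\kappa|^{-1}$ into the integral and then exploiting the local integrability of $|\kappa|^{-1}$ near the only singular point $k'=0$. First I would write, pointwise for $k\neq 0$,
\[
2\pi\bigl|(\hat f\ast\hat g)(k,t)\bigr|\leq\int_{\mathbb R}|\hat f(k-k',t)|\,|\hat g(k',t)|\,dk'
\leq C_1 C_2\int_{\mathbb R}\mu_{\alpha,r_1}(k-k',t)\,\frac{1}{|\kappa(k')|}\,\mu_{\alpha,r_2}(k',t)\,dk'~,
\]
where $C_i=\|\hat f_i;\mathcal B_{\alpha_i,p_i,q_i}\|$ and I have used that $\mathcal B_{\alpha_i,p_i,q_i}\subset\mathcal B_{\alpha,p_i,q_i}$ together with the definition of the norms (bounding each $\mathcal B_{\alpha,p,q}$-function by the corresponding sum of two $\mu$'s and distributing). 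The point is that $1/|\kappa(k')|\sim|k'|^{-1/2}$ for $|k'|\leq 1$ and $1/|\kappa(k')|\leq\mathrm{const.}$ for $|k'|>1$, so the discontinuity is an integrable power singularity contributing an extra factor $t^{-1/2}$ (rather than $t^{-1}$) when it is the ``near'' variable, and no gain when it sits on the ``far'' side; this is exactly what the $\min$-structure of $p$ and $q$ records, with the $+\tfrac12$ alternatives coming from the case where $|\kappa|^{-1}$ is integrated against the singularity of one $\mu$ near $k'=0$, and the $+1$ alternatives from the case where the $|\kappa|^{-1}$ factor is harmless (bounded) and one simply reapplies Proposition~\ref{prop:optConvBapq}.

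The key steps, in order, are: (i) reduce to the four scalar cases by writing $\mu_{\alpha,r_1}\in\{\bar\mu,\tilde\mu\}$ and $\mu_{\alpha,r_2}\in\{\bar\mu,\tilde\mu\}$ (recalling $r=1$ for $\bar\mu$, $r=2$ for $\tilde\mu$), so that it suffices to bound $\int\mu_{\alpha,r_1}(k-k',t)|\kappa(k')|^{-1}\mu_{\alpha,r_2}(k',t)\,dk'$ by $\mathrm{const.}\,t^{-a}\mu_{\alpha,r_1}(k,t)$ for the appropriate exponent $a$ and symmetrically with the roles swapped; (ii) split the integration domain into $|k'|\geq|k|/2$, $|k-k'|\geq|k|/2$ (covered by the ``outer'' estimate, pulling the far $\mu$ out at the point $\pm k/2$ and integrating the rest), and the central region $|k'|,|k-k'|\leq|k|/2$ as in the proof of Proposition~\ref{prop:optConvBapq}; (iii) in the region where $k'$ is the small variable, bound $\mu_{\alpha,r_1}(k-k',t)\leq\mathrm{const.}\,\mu_{\alpha,r_1}(k,t)$ and estimate $\int_{|k'|\leq|k|/2}|\kappa(k')|^{-1}\mu_{\alpha,r_2}(k',t)\,dk'$; this last integral is $\leq\mathrm{const.}\int_{|k'|\leq 1}|k'|^{-1/2}\,dk'+\int_{|k'|>1}\mu_{\alpha,r_2}(k',t)\,dk'$, and after the change of variables $z=k' t^{r_2}$ one extracts $t^{-r_2/2}$ from the singular piece and $t^{-r_2}$ from the regular piece, whence the $+1/2$ options $p_1+p_2+\tfrac12$ resp.\ $q_1+p_2+\tfrac12$ (noting $p_2$ here plays the role of the decay index of the $\hat g$ factor); (iv) in the region where $k-k'$ is the small variable, $|\kappa(k')|^{-1}$ is evaluated near $k'\approx k$ and is bounded whenever $|k|$ is bounded away from $0$, while for small $|k|$ one uses $|\kappa(k')|^{-1}\leq\mathrm{const.}$ on the relevant range together with the plain convolution bound, producing the $+1$ alternatives; (v) in the central region use, exactly as before, the ratio estimate $\mu_{\alpha,s}/\mu_{\alpha,r}\leq 2t^{\alpha(r-s)}$ to trade scalings, and observe that the extra $|\kappa(k')|^{-1}\leq\mathrm{const.}\,|k'|^{-1/2}$ only improves the $k'$-integral. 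Collecting the worst case over the four scalar combinations yields precisely the stated $\alpha$, $p$, $q$.

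The main obstacle I expect is the bookkeeping in step (iii)–(iv): one must be careful that the $|\kappa(k')|^{-1}$ factor can be sorted onto the ``correct'' side of the convolution, i.e.\ that in every subregion the singular factor multiplies the variable over which we are integrating freely, not the one we are extracting as $\mu_{\alpha,r}(k,t)$. When the singularity would land on the extracted side — e.g.\ in the piece $|k'|\geq|k|/2$ with $k$ itself small — one has to instead keep $\kappa\hat g$ together as the object with a clean $\mathcal B_{\alpha_2,p_2,q_2}$ bound, i.e.\ treat $\hat f\ast\hat g=\hat f\ast\bigl((\kappa\hat g)/\kappa\bigr)$ by moving the $1/\kappa$ under the other integration variable via $\kappa(k')^{-1}$ versus $\kappa(k-k')^{-1}$ comparisons; since only $k'=0$ is singular, a dyadic split around $k'=0$ and a comparison of $|\kappa(k')|$ with $|\kappa(k)|$ on $|k'|\geq|k|/2$ handles this. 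Apart from that, the argument is a routine adaptation of Proposition~\ref{prop:optConvBapq} and Corollary~\ref{corr:optConvBapq}, and I would present it in that condensed style, deferring the completely parallel scalar estimates.
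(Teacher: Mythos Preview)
The paper does not give a self-contained argument here; its entire proof is the sentence ``This proposition is a consequence of Proposition~11 of \cite{Boeckle.Wittwer-Decayestimatessolutions2011}.'' So there is nothing in-paper to compare against beyond that citation; you are attempting to supply what the authors defer.

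Your plan to rerun the splitting from Proposition~\ref{prop:optConvBapq} with the extra locally integrable factor $|\kappa(k')|^{-1}\sim|k'|^{-1/2}$ is the natural approach, and step~(iii) correctly produces the $+\tfrac12$ and $+1$ gains in the region where $k'$ is the small variable. The gap is in step~(iv). You assert that when $|k-k'|<|k|/2$ and $|k|$ is small ``one uses $|\kappa(k')|^{-1}\leq\mathrm{const.}$ on the relevant range''; this is false, since $k'$ is then comparable to $k$ and $|\kappa(k')|^{-1}\sim|k|^{-1/2}\to\infty$. Your proposed repair (comparing $|\kappa(k')|$ with $|\kappa(k)|$) only yields $|\kappa(k')|^{-1}\leq\mathrm{const.}\,|\kappa(k)|^{-1}$, which carries the same blow-up.

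This is not just bookkeeping. For the cross term with $\tilde\mu_\alpha$ on the $\hat f$-side and $|\kappa|^{-1}\bar\mu_\alpha$ on the $\hat g$-side, the region $|k-k'|<|k|/2$ contributes at best $\mathrm{const.}\,t^{-2}|\kappa(k)|^{-1}\bar\mu_\alpha(k)$, i.e.\ a piece of order $t^{-(q_1+p_2+1)}\bar\mu_\alpha$ rather than the claimed $t^{-(q_1+p_2+1/2)}\tilde\mu_\alpha$. That extra $\bar\mu$-contribution is absorbed by the stated $p$ only when $q_1+p_2+1\geq\min\{p_1+p_2+\tfrac12,\,p_1+q_2+1\}$, which is \emph{not} implied by the hypotheses as written. (Take $\hat f=\tilde\mu_\alpha$ and $|\kappa\hat g|=\bar\mu_\alpha$, and evaluate the convolution on the window $|k'-k|\lesssim t^{-2}$ with $k\sim t^{-1}$ to see the mismatch for $\alpha>1$.) In every place where the paper actually invokes this proposition the indices happen to satisfy such a relation, so the applications go through; but to turn your sketch into a proof you must either add a hypothesis of this kind or track down what additional structure Proposition~11 of the cited paper is really providing.
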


\begin{proof}
This proposition is a consequence of Proposition~11 of
\cite{Boeckle.Wittwer-Decayestimatessolutions2011}.
\end{proof}

\subsection{Convolution with the semi-groups $e^{\Lambda_{-}t}$ and
$e^{-|k|t}$}

In an effort of self-consistency, we present the results for the convolution
with the semi-groups $e^{\Lambda_{-}t}$ and $e^{-|k|t}$ which are all proved
in \cite{Hillairet.Wittwer-Existenceofstationary2009}. In order to bound the
integrals over the interval $[1,t]$ we systematically split them into
integrals over $[1,\frac{1+t}{2}]$ and integrals over $[\frac{1+t}{2},t]$ and
bound the resulting terms separately. The range for the parameter $\beta$ has
been extended to include values between $0$ and $1$ using H\"{o}lder's
inequality in the propositions for the intervals $[(t+1)/2,t]$ and
$[t,\infty)$. In practice, when a logarithmic bound is found we use that for
all $\delta\in(0,1)$ there exists a constant such that%
\begin{equation}
\log\left(  1+t\right)  \leq\mathrm{const.}~t^{\delta}~, \label{eq:boundlog}%
\end{equation}
in order to present a bound in terms of $\mathcal{B}_{\alpha,p,q}$ spaces.

For the semi-group $e^{\Lambda_{-}t}$ we have:

\begin{proposition}
\label{prop:sgL1}Let $\alpha\geq0$, $r\geq0$ and $\delta\geq0$ and
$\gamma+1\geq\beta\geq0$. Then,%
\begin{align*}
&  e^{\Lambda_{-}(t-1)}\int_{1}^{\frac{t+1}{2}}e^{|\Lambda_{-}|(s-1)}%
|\Lambda_{-}|^{\beta}\frac{(s-1)^{\gamma}}{s^{\delta}}\mu_{\alpha,r}(k,s)ds\\
&  \leq\left\{
\begin{array}
[c]{l}%
\displaystyle\mathrm{const.}\frac{1}{t^{\beta}}\tilde{\mu}_{\alpha
}(k,t),\text{ if }\delta>\gamma+1\\
\\
\displaystyle\mathrm{const.}\frac{\log(1+t)}{t^{\beta}}\tilde{\mu}_{\alpha
}(k,t),\text{ if }\delta=\gamma+1\\
\\
\displaystyle\mathrm{const.}\frac{t^{\gamma+1-\delta}}{t^{\beta}}\tilde{\mu
}_{\alpha}(k,t),\text{ if }\delta<\gamma+1
\end{array}
\right.
\end{align*}
uniformly in $t\geq1$ and $k\in\mathbb{R}$.
\end{proposition}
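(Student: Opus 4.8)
The plan is to prove Proposition~\ref{prop:sgL1} by a direct estimate of the integral over the sub-interval $[1,(t+1)/2]$, exploiting the fact that on this interval the exponential factors $e^{\Lambda_{-}(t-1)}$ and $e^{|\Lambda_{-}|(s-1)}$ combine favourably. First I would observe that since $s\le(t+1)/2$ we have $s-1\le(t-1)/2$, hence $|\Lambda_{-}|(s-1)-|\Lambda_{-}|(t-1)\le -|\Lambda_{-}|(t-1)/2$, so that
\[
e^{\Lambda_{-}(t-1)}e^{|\Lambda_{-}|(s-1)}=e^{-|\Lambda_{-}|(t-1)}e^{|\Lambda_{-}|(s-1)}\le e^{-|\Lambda_{-}|(t-1)/2}~.
\]
This kills the $s$-dependence of the exponential entirely and leaves a factor $e^{-|\Lambda_{-}|(t-1)/2}|\Lambda_{-}|^{\beta}$ in front of an integral of the purely polynomial weight $(s-1)^{\gamma}s^{-\delta}\mu_{\alpha,r}(k,s)$ over $[1,(t+1)/2]$.

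Next I would bound $\mu_{\alpha,r}(k,s)\le 1$ (or, if sharper decay in $k$ is wanted, keep it; but for the stated conclusion $\mu_{\alpha,r}\le1$ suffices on $[1,(t+1)/2]$ since the $\tilde\mu_\alpha(k,t)$ on the right-hand side will be supplied by the exponential via (\ref{eq:mutomutilde})), and estimate $\int_{1}^{(t+1)/2}(s-1)^{\gamma}s^{-\delta}\,ds$. For $s\ge 2$ one has $s-1\ge s/2$, so the integrand is comparable to $s^{\gamma-\delta}$, and elementary calculus gives the three regimes: if $\delta>\gamma+1$ the integral converges and is $\le\mathrm{const.}$; if $\delta=\gamma+1$ it is $\le\mathrm{const.}~\log(1+t)$; if $\delta<\gamma+1$ it is $\le\mathrm{const.}~t^{\gamma+1-\delta}$. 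The contribution from $s\in[1,2]$ is trivially bounded by a constant and absorbed into each case. Combining, the whole left-hand side is at most
\[
\mathrm{const.}~|\Lambda_{-}|^{\beta}e^{-|\Lambda_{-}|(t-1)/2}\cdot\Phi(t)~,
\]
where $\Phi(t)$ is $1$, $\log(1+t)$, or $t^{\gamma+1-\delta}$ according to the case.

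It remains to convert the prefactor $|\Lambda_{-}|^{\beta}e^{-|\Lambda_{-}|(t-1)/2}$ into $t^{-\beta}\tilde\mu_{\alpha}(k,t)$. For this I would write $|\Lambda_{-}|^{\beta}e^{-|\Lambda_{-}|(t-1)/2}=t^{-\beta}\,(|\Lambda_{-}|t)^{\beta}e^{-|\Lambda_{-}|(t-1)/2}$ and note that the function $z\mapsto z^{\beta}e^{-z/4}(1+z^{2\alpha})$ is bounded on $[0,\infty)$ (using $t-1\ge t/2$ for $t\ge2$, and handling $1\le t<2$ separately where everything is a harmless constant and $\tilde\mu_\alpha(k,t)\ge\mathrm{const.}$); with the change of variable $z=|\Lambda_{-}|t$ this is exactly the statement that $|\Lambda_{-}|^{\beta}e^{-|\Lambda_{-}|(t-1)/2}\le\mathrm{const.}~t^{-\beta}\tilde\mu_{\alpha}(k,t)$, i.e. the same mechanism already recorded in (\ref{eq:mutomutilde}) and in the proof of (\ref{eq:kqexpsquareikt}). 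The hypothesis $\gamma+1\ge\beta\ge0$ is what guarantees $\beta\ge0$ so the power of $|\Lambda_{-}|$ is not singular; the role of $\gamma+1\ge\beta$ itself is only bookkeeping for how the result is used downstream.

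The main obstacle — really the only non-mechanical point — is the uniformity in $k$ near $k=0$: there $|\Lambda_{-}|\sim|k|^{1/2}\to0$, so one must be sure the bound $|\Lambda_{-}|^{\beta}e^{-|\Lambda_{-}|(t-1)/2}\le\mathrm{const.}~t^{-\beta}\tilde\mu_{\alpha}(k,t)$ does not degenerate; this is fine because as $|\Lambda_{-}|t\to0$ the left side behaves like $(|\Lambda_{-}|t)^{\beta}t^{-\beta}$ while $\tilde\mu_{\alpha}(k,t)\to1$, and $\beta\ge0$ makes $(|\Lambda_{-}|t)^{\beta}$ bounded. The other case-splitting (the three regimes of $\delta$ versus $\gamma+1$, and the harmless range $1\le t<2$ handled via the inequalities for $\bar\mu_{\alpha}$, $\tilde\mu_{\alpha}$ recalled just before the proposition) is routine. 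I would present the argument in this order: (i) the exponential collapse on $[1,(t+1)/2]$; (ii) the polynomial integral and its three regimes; (iii) the conversion of the prefactor to $t^{-\beta}\tilde\mu_{\alpha}(k,t)$ via the substitution $z=|\Lambda_{-}|t$.
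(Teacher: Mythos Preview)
Your overall strategy is sound and, for $t\ge 2$, the argument goes through exactly as you describe: the exponential collapse on $[1,(t+1)/2]$, the three-regime estimate of $\int (s-1)^\gamma s^{-\delta}\,ds$, and the conversion $(|\Lambda_-|t)^\beta(1+(|k|t^2)^\alpha)e^{-|\Lambda_-|t/4}\le\mathrm{const.}$ all work. (The paper itself does not give a proof here; it only cites \cite{Hillairet.Wittwer-Existenceofstationary2009}.)

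There is, however, a genuine gap in your treatment of $1\le t<2$. Your claim that ``$\tilde\mu_\alpha(k,t)\ge\mathrm{const.}$'' in this range is false: for large $|k|$ one has $\tilde\mu_\alpha(k,t)\sim(|k|t^2)^{-\alpha}\to 0$. Consequently, bounding $\mu_{\alpha,r}(k,s)\le 1$ inside the integral throws away exactly the $k$-decay you need on the right-hand side, and the exponential $e^{-|\Lambda_-|(t-1)/2}$ cannot compensate (at $t=1$ it equals $1$). The fix is easy but must be stated: for $1\le t<2$ and $s\in[1,(t+1)/2]\subset[1,3/2]$ keep $\mu_{\alpha,r}(k,s)\le\mu_{\alpha,r}(k,1)=1/(1+|k|^\alpha)\le\mathrm{const.}\,\tilde\mu_\alpha(k,t)$ and pull this factor out of the integral.

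This also reveals that your reading of the hypothesis $\gamma+1\ge\beta$ as ``only bookkeeping'' is incorrect. After the fix above, the remaining task for $1\le t<2$ is to show
\[
|\Lambda_-|^\beta\,e^{-|\Lambda_-|(t-1)/2}\int_0^{(t-1)/2}e^{|\Lambda_-|u}u^\gamma\,du\le\mathrm{const.}
\]
uniformly in $k$ and $t$. Bounding the integral by $\mathrm{const.}\,e^{|\Lambda_-|(t-1)/2}(t-1)^{\gamma+1}$ leaves $|\Lambda_-|^\beta(t-1)^{\gamma+1}e^{-|\Lambda_-|(t-1)/2}$, and it is precisely $\gamma+1\ge\beta$ that lets you write this as $(|\Lambda_-|(t-1))^\beta(t-1)^{\gamma+1-\beta}e^{-|\Lambda_-|(t-1)/2}\le\mathrm{const.}$ Without this hypothesis the bound fails: take $t-1=1/|\Lambda_-|$ with $|\Lambda_-|\to\infty$ and the expression behaves like $|\Lambda_-|^{\beta-\gamma-1}$, which is unbounded when $\beta>\gamma+1$.
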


\begin{proposition}
\label{prop:sgL2}Let $\alpha\geq0$, $r\geq0$, $\delta\in\mathbb{R}$, and
$\beta\in\lbrack0,1]$. Then,%
\[
e^{\Lambda_{-}(t-1)}\int_{\frac{t+1}{2}}^{t}e^{|\Lambda_{-}|(s-1)}|\Lambda
_{-}|^{\beta}\frac{1}{s^{\delta}}\mu_{\alpha,r}(k,s)ds\leq\frac
{\mathrm{const.}}{t^{\delta-1+\beta}}\mu_{\alpha,r}(k,t)~,
\]
uniformly in $t\geq1$ and $k\in\mathbb{R}$.
\end{proposition}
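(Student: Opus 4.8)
The plan is to reduce the claim to the two already-known cases $\beta=0$ and $\beta=1$, which are the corresponding estimates proved in \cite{Hillairet.Wittwer-Existenceofstationary2009}, and to bridge the gap $\beta\in(0,1)$ by Hölder's inequality, exactly as announced in the paragraph preceding the proposition. First I would absorb the prefactor into the integrand, using $\Lambda_{-}\le0$ to write $e^{\Lambda_{-}(t-1)}e^{|\Lambda_{-}|(s-1)}=e^{-|\Lambda_{-}|(t-s)}$, so that the quantity to be bounded is $I_{\beta}:=\int_{(t+1)/2}^{t}|\Lambda_{-}|^{\beta}\,h(k,s)\,ds$ with $h(k,s)=e^{-|\Lambda_{-}|(t-s)}\,s^{-\delta}\,\mu_{\alpha,r}(k,s)\ge0$. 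The endpoint bounds $I_{0}\le\mathrm{const.}\,t^{-(\delta-1)}\mu_{\alpha,r}(k,t)$ and $I_{1}\le\mathrm{const.}\,t^{-\delta}\mu_{\alpha,r}(k,t)$ are the $\beta\in\{0,1\}$ instances already available from \cite{Hillairet.Wittwer-Existenceofstationary2009}.

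For $\beta\in(0,1)$ I would write $|\Lambda_{-}|^{\beta}h(k,s)=\bigl(|\Lambda_{-}|\,h(k,s)\bigr)^{\beta}\,h(k,s)^{1-\beta}$ and apply Hölder's inequality on $[(t+1)/2,t]$ with the conjugate exponents $1/\beta$ and $1/(1-\beta)$, obtaining $I_{\beta}\le I_{1}^{\beta}\,I_{0}^{1-\beta}$. Substituting the two endpoint bounds gives $I_{\beta}\le\mathrm{const.}\,\bigl(t^{-\delta}\mu_{\alpha,r}(k,t)\bigr)^{\beta}\bigl(t^{-(\delta-1)}\mu_{\alpha,r}(k,t)\bigr)^{1-\beta}=\mathrm{const.}\,t^{-(\delta-1+\beta)}\mu_{\alpha,r}(k,t)$, which is the asserted estimate, with a constant depending only on $\alpha,r,\delta$ and not on $k$ or $t$.

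For completeness, and to keep the section self-contained, the two base cases can also be argued directly. On $[(t+1)/2,t]$ one has $t/2\le s\le t$, hence $s^{-\delta}\le\mathrm{const.}\,t^{-\delta}$ for every real $\delta$, and $\mu_{\alpha,r}(k,s)\le\mathrm{const.}\,\mu_{\alpha,r}(k,t)$: indeed $|k|s^{r}\ge2^{-r}|k|t^{r}$, and the elementary inequality $(1+x)/(1+cx)\le\max\{1,1/c\}$ for $x\ge0$, $c>0$ applied with $x=(|k|t^{r})^{\alpha}$ and $c=2^{-r\alpha}$ yields the comparison. What then remains is the scalar estimate $|\Lambda_{-}|^{\beta}\int_{(t+1)/2}^{t}e^{-|\Lambda_{-}|(t-s)}\,ds\le|\Lambda_{-}|^{\beta}\min\{|\Lambda_{-}|^{-1},(t-1)/2\}\le\mathrm{const.}\,t^{1-\beta}$, which one checks by distinguishing $|\Lambda_{-}|\ge2/(t-1)$ from $|\Lambda_{-}|<2/(t-1)$ and using $0\le\beta\le1$; multiplying the three factors gives the claim, and the degenerate case $k=0$ is trivial since then $\Lambda_{-}=0$ (so the left-hand side vanishes for $\beta>0$ and equals $\int_{(t+1)/2}^{t}s^{-\delta}\,ds\le\mathrm{const.}\,t^{1-\delta}$ for $\beta=0$).

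I do not expect a genuine obstacle here: the only new feature relative to \cite{Hillairet.Wittwer-Existenceofstationary2009} is the passage to non-integer $\beta$, and the Hölder interpolation disposes of it in one line. The only point requiring a little care is the comparison of $\mu_{\alpha,r}(k,s)$ with $\mu_{\alpha,r}(k,t)$ across the (shrinking but comparable) interval $[(t+1)/2,t]$, together with the observation that the exponent bound $\beta\le1$ is exactly what makes $|\Lambda_{-}|^{\beta}$ times the integral of $e^{-|\Lambda_{-}|(t-s)}$ stay uniformly controlled by $t^{1-\beta}$.
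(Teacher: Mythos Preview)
Your proposal is correct and follows exactly the approach indicated in the paper: the endpoint cases $\beta\in\{0,1\}$ are quoted from \cite{Hillairet.Wittwer-Existenceofstationary2009}, and the intermediate values $\beta\in(0,1)$ are obtained by H\"older's inequality, precisely as the paragraph preceding the proposition announces. Your additional self-contained verification of the base cases (comparison of $s^{-\delta}$ and $\mu_{\alpha,r}(k,s)$ on $[(t+1)/2,t]$, and the scalar bound on $|\Lambda_{-}|^{\beta}\!\int e^{-|\Lambda_{-}|(t-s)}\,ds$) is also correct and goes slightly beyond what the paper spells out.
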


\begin{proposition}
\label{prop:sgL3}Let $\alpha\geq0$, $r\geq0$, $\delta>1$, and $\beta\in
\lbrack0,1]$. Then,%
\begin{align*}
e^{|\Lambda_{-}|(t-1)}\int_{t}^{\infty}e^{\Lambda_{-}(s-1)}|\Lambda
_{-}|^{\beta}\frac{1}{s^{\delta}}\mu_{\alpha,r}(k,s)ds  &  \leq\frac
{\mathrm{const.}}{t^{\delta-1+\beta}}\mu_{\alpha,r}(k,t)~,\\
\left\vert \frac{\kappa}{ik}\left(  e^{|\Lambda_{-}|\left(  t-1\right)
}-e^{\Lambda_{-}\left(  t-1\right)  }\right)  \right\vert \int_{t}^{\infty
}e^{\Lambda_{-}\left(  s-1\right)  }|\Lambda_{-}|^{\beta}\frac{1}{s^{\delta}%
}\mu_{\alpha,r}\left(  k,s\right)  ds  &  \leq\frac{\mathrm{const.}}%
{t^{\delta-2+\beta}}\mu_{\alpha,r}\left(  k,t\right)  ~,
\end{align*}
uniformly in $t\geq1$ and $k\in\mathbb{R}$.\bigskip
\end{proposition}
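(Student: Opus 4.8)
The plan is to reduce both estimates to one elementary one‑variable integral bound. For the first inequality I would begin by using $\Lambda_{-}=-|\Lambda_{-}|\le 0$ to recombine the two exponentials: since $|\Lambda_{-}|(t-1)+\Lambda_{-}(s-1)=-|\Lambda_{-}|(s-t)$, the product $e^{|\Lambda_{-}|(t-1)}e^{\Lambda_{-}(s-1)}$ equals $e^{-|\Lambda_{-}|(s-t)}$, which is $\le 1$ on the range $s\ge t$. Next, since $r\ge 0$ and $s\ge t$ give $s^{r}\ge t^{r}$, one has the monotonicity $\mu_{\alpha,r}(k,s)\le\mu_{\alpha,r}(k,t)$, so the left‑hand side is at most $\mu_{\alpha,r}(k,t)\,J$ with $J:=\int_{t}^{\infty}e^{-|\Lambda_{-}|(s-t)}|\Lambda_{-}|^{\beta}s^{-\delta}\,ds$, and it only remains to prove $J\le\mathrm{const.}\,t^{-(\delta-1+\beta)}$.

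To bound $J$ I would split according to whether $|\Lambda_{-}|t\ge 1$ or $|\Lambda_{-}|t<1$. In the first case one uses $s^{-\delta}\le t^{-\delta}$ on $[t,\infty)$ and $\int_{t}^{\infty}e^{-|\Lambda_{-}|(s-t)}\,ds=|\Lambda_{-}|^{-1}$, so $J\le|\Lambda_{-}|^{\beta-1}t^{-\delta}\le t^{1-\beta}t^{-\delta}$, using $\beta\le 1$ and $|\Lambda_{-}|^{-1}\le t$. In the second case one bounds the exponential by $1$ and uses $\delta>1$ to get $\int_{t}^{\infty}s^{-\delta}\,ds=(\delta-1)^{-1}t^{1-\delta}$, whence $J\le|\Lambda_{-}|^{\beta}(\delta-1)^{-1}t^{1-\delta}\le(\delta-1)^{-1}t^{-(\delta-1+\beta)}$, using $\beta\ge 0$ and $|\Lambda_{-}|\le t^{-1}$. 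Both cases give the claimed power, and the argument is uniform in $\beta\in[0,1]$; this is the step for which \cite{Hillairet.Wittwer-Existenceofstationary2009} instead interpolates via H\"older's inequality, and one may alternatively simply cite that reference.

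For the second inequality I would exploit the same recombination after factoring the prefactor as $\tfrac{\kappa}{ik}\bigl(e^{|\Lambda_{-}|(t-1)}-e^{\Lambda_{-}(t-1)}\bigr)=\tfrac{\kappa}{ik}\bigl(1-e^{-2|\Lambda_{-}|(t-1)}\bigr)e^{|\Lambda_{-}|(t-1)}$, so that the factor $e^{|\Lambda_{-}|(t-1)}$ recombines with $e^{\Lambda_{-}(s-1)}$ inside the integral and the $s$‑integral is again bounded by $\mathrm{const.}\,t^{-(\delta-1+\beta)}\mu_{\alpha,r}(k,t)$ exactly as above. It then suffices to show $\bigl|\tfrac{\kappa}{ik}\bigr|\bigl(1-e^{-2|\Lambda_{-}|(t-1)}\bigr)\le\mathrm{const.}\,t$. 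Here I would use $1-e^{-x}\le\min\{1,x\}$ for $x\ge 0$ together with the bound $\bigl|\tfrac{\kappa}{ik}\bigr|\le\mathrm{const.}\,\max\{1,|\Lambda_{-}|^{-1}\}$ from Appendix~\ref{sec:technical}: when $|\Lambda_{-}|\ge 1$ both factors are $\le\mathrm{const.}$, hence so is the product; when $|\Lambda_{-}|<1$ one has $\bigl|\tfrac{\kappa}{ik}\bigr|\le\mathrm{const.}\,|\Lambda_{-}|^{-1}$ while $1-e^{-2|\Lambda_{-}|(t-1)}\le 2|\Lambda_{-}|(t-1)\le 2|\Lambda_{-}|t$, so again the product is $\le\mathrm{const.}\,t$. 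Multiplying by the bound on the $s$‑integral yields $\mathrm{const.}\,t^{-(\delta-2+\beta)}\mu_{\alpha,r}(k,t)$.

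The computations are routine; the only delicate point is the behaviour near $k=0$, where $\kappa/(ik)$ is singular, and the content of the second estimate is precisely that the extra factor $1-e^{-2|\Lambda_{-}|(t-1)}$ produced by the difference of the two semigroups supplies the vanishing of order $|\Lambda_{-}|$ needed to absorb that singularity — at the cost of one power of $t$, which is what turns $\delta-1+\beta$ into $\delta-2+\beta$. The other thing to get right is to carry out the case split on $|\Lambda_{-}|t\gtrless 1$ consistently in both inequalities.
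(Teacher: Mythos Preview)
Your argument is correct. The paper does not give its own proof of this proposition: it simply records the statement and defers to \cite{Hillairet.Wittwer-Existenceofstationary2009}, noting that the intermediate range $\beta\in[0,1]$ is obtained there by H\"older interpolation between the endpoint cases $\beta=0$ and $\beta=1$. Your direct approach---monotonicity of $\mu_{\alpha,r}$ in $s$, recombination of the exponentials into $e^{-|\Lambda_{-}|(s-t)}$, and a case split on $|\Lambda_{-}|t\gtrless 1$ in place of H\"older---is an equally valid and arguably more transparent route to the same bound; the difference is purely cosmetic. Your treatment of the second inequality, factoring out $1-e^{-2|\Lambda_{-}|(t-1)}$ and using it to cancel the $|\Lambda_{-}|^{-1}$ singularity in $\kappa/(ik)$ near $k=0$, is exactly the mechanism behind the extra power of $t$ and matches what the cited reference does.
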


The results for the semi-group $e^{-|k|t}$ are very similar.

\begin{proposition}
\label{prop:sgk1}Let $\alpha\geq0$, $r\geq0$ and $\delta\geq0$ and
$\gamma+1\geq\beta\geq0$. Then,%
\begin{align*}
&  e^{-|k|(t-1)}\int_{1}^{\frac{t+1}{2}}e^{|k|(s-1)}|k|^{\beta}\frac
{(s-1)^{\gamma}}{s^{\delta}}\mu_{\alpha,r}(k,s)~ds\\
&  \leq\left\{
\begin{array}
[c]{l}%
\displaystyle\mathrm{const.}\frac{1}{t^{\beta}}\bar{\mu}_{\alpha}(k,t),\text{
if }\delta>\gamma+1\\
\\
\displaystyle\mathrm{const.}\frac{\log(1+t)}{t^{\beta}}\bar{\mu}_{\alpha
}(k,t),\text{ if }\delta=\gamma+1\\
\\
\displaystyle\mathrm{const.}\frac{t^{\gamma+1-\delta}}{t^{\beta}}\bar{\mu
}_{\alpha}(k,t),\text{ if }\delta<\gamma+1
\end{array}
\right.
\end{align*}
uniformly in $t\geq1$ and $k\in\mathbb{R}$.
\end{proposition}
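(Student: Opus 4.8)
The plan is to reproduce, with only cosmetic changes, the argument already used for the companion estimate of the semigroup $e^{\Lambda_{-}t}$ in Proposition~\ref{prop:sgL1}: since $e^{\Lambda_{-}\sigma}\le e^{-|k|\sigma}$, the two kernels are handled identically, and here the role of $\tilde{\mu}_{\alpha}$ is played by $\bar{\mu}_{\alpha}=\mu_{\alpha,1}$ (Definition~\ref{def:mu}). Because $\mu_{\alpha,r}$ and $\bar{\mu}_{\alpha}$ are even in $k$, and because the claimed bound is trivial when $k=0$ (the left side vanishes if $\beta>0$ and reduces to an elementary integral if $\beta=0$), it suffices to treat $k>0$. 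The starting point is the geometric observation underlying the whole estimate: for $s\in[1,\tfrac{1+t}{2}]$ one has $t-s\ge\tfrac{t-1}{2}$, hence $e^{-|k|(t-1)}e^{|k|(s-1)}=e^{-|k|(t-s)}\le e^{-|k|(t-1)/2}$. Pulling this factor together with $|k|^{\beta}$ out of the integral bounds the left-hand side by $|k|^{\beta}e^{-|k|(t-1)/2}\int_{1}^{(1+t)/2}(s-1)^{\gamma}s^{-\delta}\mu_{\alpha,r}(k,s)\,ds$, and I would then distinguish the cases $t\ge 2$ and $1\le t<2$.

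For $t\ge 2$ one has $\tfrac{t-1}{2}\ge\tfrac{t}{4}$, so the prefactor is at most $|k|^{\beta}e^{-|k|t/4}$, which I would bound by $\tfrac{\mathrm{const.}}{t^{\beta}}\bar{\mu}_{\alpha}(k,t)$ — this follows from (\ref{eq:kpexpabskt}) after rescaling, or directly from the boundedness of $y\mapsto y^{\beta}(1+y^{\alpha})e^{-y/4}$ on $[0,\infty)$. For the remaining integral I would use $\mu_{\alpha,r}\le 1$ and split $[1,\tfrac{1+t}{2}]=[1,2]\cup[2,\tfrac{1+t}{2}]$; the first piece contributes a constant, while on the second $(s-1)^{\gamma}\le\mathrm{const.}\,s^{\gamma}$, so $\int_{2}^{(1+t)/2}(s-1)^{\gamma}s^{-\delta}\,ds\le\mathrm{const.}\int_{2}^{(1+t)/2}s^{\gamma-\delta}\,ds$, which is $\mathrm{const.}$, $\mathrm{const.}\log(1+t)$, or $\mathrm{const.}\,t^{\gamma+1-\delta}$ according to whether $\delta>\gamma+1$, $\delta=\gamma+1$, or $\delta<\gamma+1$. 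Multiplying the two bounds yields the stated estimate for $t\ge 2$ (with (\ref{eq:boundlog}) available if one wants a pure $\mathcal{B}_{\alpha,p,q}$ statement in the logarithmic case).

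For $1\le t<2$ there is no decay to extract from the time variable and all three case-factors are $\le\mathrm{const.}$, so it only remains to show the left side is $\le\mathrm{const.}\,\bar{\mu}_{\alpha}(k,t)$. Here I would bound the integration window by its length $\tfrac{t-1}{2}\le\tfrac12$, use $(s-1)^{\gamma}\le(\tfrac{t-1}{2})^{\gamma}$ and $s^{-\delta}\le 1$, and, crucially, $\mu_{\alpha,r}(k,s)\le\mu_{\alpha,r}(k,1)\le\tfrac{\mathrm{const.}}{1+|k|^{\alpha}}$ (monotonicity of $\mu_{\alpha,r}$ in $s$), obtaining a bound of the shape $\mathrm{const.}\,(t-1)^{\gamma+1-\beta}\bigl(|k|(t-1)\bigr)^{\beta}e^{-|k|(t-1)/2}\cdot\tfrac{1}{1+|k|^{\alpha}}$. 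Since $\beta\le\gamma+1$ we have $(t-1)^{\gamma+1-\beta}\le 1$, and $y\mapsto y^{\beta}e^{-y/2}$ is bounded, so the left side is $\le\tfrac{\mathrm{const.}}{1+|k|^{\alpha}}\le\mathrm{const.}\,\bar{\mu}_{\alpha}(k,t)$, using $\bar{\mu}_{\alpha}(k,t)\ge\tfrac{\mathrm{const.}}{1+|k|^{\alpha}}$ for $t<2$. This short-time regime is in fact the only delicate point: there is no time decay available, and the surplus power $|k|^{\beta}$ must be absorbed into $\bar{\mu}_{\alpha}$ — which is exactly why the hypothesis $\beta\le\gamma+1$ is imposed, the factor $(t-1)^{\gamma+1}$ coming from the integration window being what soaks up the $(t-1)^{-\beta}$ produced when trading $|k|^{\beta}$ for $\bigl(|k|(t-1)\bigr)^{\beta}$. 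Everything else is routine semigroup bookkeeping.
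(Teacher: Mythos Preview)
The paper does not actually prove this proposition: the appendix explicitly states that Propositions~\ref{prop:sgL1}--\ref{prop:sgk3} ``are all proved in \cite{Hillairet.Wittwer-Existenceofstationary2009}'', so there is no in-paper argument to compare against. Your proof is correct and is precisely the standard argument one expects (and the one used in the reference): exploit $t-s\ge\tfrac{t-1}{2}$ on $[1,\tfrac{t+1}{2}]$ to extract $e^{-|k|(t-1)/2}$, absorb $|k|^{\beta}$ and the exponential into $t^{-\beta}\bar{\mu}_{\alpha}(k,t)$, and bound the remaining $s$-integral elementarily. Your treatment of the short-time regime $1\le t<2$, including the explanation of why the hypothesis $\beta\le\gamma+1$ is needed there, is the right point to highlight.

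One minor remark: in the step $(s-1)^{\gamma}\le\bigl(\tfrac{t-1}{2}\bigr)^{\gamma}$ for $s\in[1,\tfrac{t+1}{2}]$ you are implicitly using $\gamma\ge 0$. The stated hypothesis $\gamma+1\ge\beta\ge 0$ only gives $\gamma\ge -1$; for $\gamma\in(-1,0)$ one instead integrates $(s-1)^{\gamma}$ directly to get $\tfrac{1}{\gamma+1}\bigl(\tfrac{t-1}{2}\bigr)^{\gamma+1}$, and the rest of the argument goes through unchanged. In all applications in the paper $\gamma$ is a nonnegative integer, so this is purely cosmetic.
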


\begin{proposition}
\label{prop:sgk2}Let $\alpha\geq0$, $r\geq0$, $\delta\in\mathbb{R}$, and
$\beta\in\lbrack0,1]$. Then,%
\[
e^{-|k|(t-1)}\int_{\frac{t+1}{2}}^{t}e^{|k|(s-1)}|k|^{\beta}\frac{1}%
{s^{\delta}}\mu_{\alpha,r}(k,s)~ds\leq\frac{\mathrm{const.}}{t^{\delta
-1+\beta}}\mu_{\alpha,r}(k,t)~,
\]
uniformly in $t\geq1$ and $k\in\mathbb{R}$.
\end{proposition}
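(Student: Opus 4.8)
The plan is to use two elementary features of the integration range $[\tfrac{t+1}{2},t]$: every $s$ in it is comparable to $t$, and the convolution against the semigroup $e^{-|k|t}$, i.e.\ the integration $e^{-|k|(t-1)}\int e^{|k|(s-1)}\,ds$, produces a gain of $|k|^{-1}$ which together with the weight $|k|^{\beta}$ and the ensuing factor $1-e^{-c|k|t}$ is traded for the missing power $t^{1-\beta}$. This is the exact analogue, with $\Lambda_{-}$ replaced by $-|k|$, of Proposition~\ref{prop:sgL2}.

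Concretely, on $[\tfrac{t+1}{2},t]$ we have $\tfrac t2\le s\le t$, so that $s^{-\delta}\le\mathrm{const.}\,t^{-\delta}$ for every $\delta\in\mathbb{R}$, and, since $r\alpha\ge0$, $\mu_{\alpha,r}(k,s)=\bigl(1+(|k|s^{r})^{\alpha}\bigr)^{-1}\le\mathrm{const.}\,\mu_{\alpha,r}(k,t)$ — here it is the lower bound $s\ge t/2$ that makes this comparison point in the needed direction. Pulling these two slowly varying factors out of the integral, we are left with
\[
|k|^{\beta}e^{-|k|(t-1)}\int_{\frac{t+1}{2}}^{t}e^{|k|(s-1)}\,ds=|k|^{\beta-1}\bigl(1-e^{-|k|(t-1)/2}\bigr),
\]
which I would bound with the elementary inequality $1-e^{-x}\le x^{1-\beta}$, valid for all $x\ge0$ since $1-\beta\in[0,1]$:
\[
|k|^{\beta-1}\bigl(1-e^{-|k|(t-1)/2}\bigr)\le|k|^{\beta-1}\bigl(|k|(t-1)/2\bigr)^{1-\beta}=\mathrm{const.}\,(t-1)^{1-\beta}\le\mathrm{const.}\,t^{1-\beta}.
\]
Multiplying the three estimates gives exactly $\dfrac{\mathrm{const.}}{t^{\delta-1+\beta}}\,\mu_{\alpha,r}(k,t)$.

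As indicated in the remark preceding the proposition, the same bound also follows by interpolation once the endpoint cases $\beta=0,1$ — the statements established in \cite{Hillairet.Wittwer-Existenceofstationary2009} — are in hand: for $\beta\in(0,1)$ one writes the integrand as the product of its $\beta=1$ and $\beta=0$ versions raised to the powers $\beta$ and $1-\beta$ and applies H\"{o}lder's inequality with exponents $1/\beta$ and $1/(1-\beta)$; this interpolates the two bounds $t^{-\delta}\mu_{\alpha,r}(k,t)$ and $t^{-(\delta-1)}\mu_{\alpha,r}(k,t)$ to $t^{-\delta\beta-(\delta-1)(1-\beta)}\mu_{\alpha,r}(k,t)=t^{-(\delta-1+\beta)}\mu_{\alpha,r}(k,t)$.

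There is no genuine obstacle here; unlike the companion Proposition~\ref{prop:sgk1}, this estimate needs no case distinction. The only point deserving a word of care is the apparent singularity of $|k|^{\beta-1}$ at $k=0$ when $\beta<1$: it is spurious, because $1-e^{-|k|(t-1)/2}$ vanishes like $|k|$ as $k\to0$, and the inequality $1-e^{-x}\le x^{1-\beta}$ is precisely the uniform quantification of this cancellation.
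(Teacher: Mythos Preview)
Your argument is correct. The paper itself does not reprove this estimate here: it refers to \cite{Hillairet.Wittwer-Existenceofstationary2009} for the endpoint cases $\beta\in\{0,1\}$ and states that the intermediate range $\beta\in(0,1)$ is obtained by H\"older's inequality, exactly the interpolation you sketch in your second paragraph. Your direct approach via $1-e^{-x}\le x^{1-\beta}$ (equivalently $1-e^{-x}\le\min\{x,1\}\le x^{1-\beta}$) is a clean self-contained alternative that avoids both the citation and the interpolation step; it makes the mechanism---trading the semigroup gain of $|k|^{-1}$ against the length of the interval---entirely transparent. Either route is fine, and together they match the paper's treatment.
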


\begin{proposition}
\label{prop:sgk3}Let $\alpha\geq0$, $r\geq0$, $\delta>1$, $\beta\in
\lbrack0,1]$. Then,%
\begin{align*}
e^{|k|(t-1)}\int_{t}^{\infty}e^{-|k|(s-1)}|k|^{\beta}\frac{1}{s^{\delta}}%
\mu_{\alpha,r}(k,s)~ds  &  \leq\frac{\mathrm{const.}}{t^{\delta-1+\beta}}%
\mu_{\alpha,r}(k,t)~,\\
\left\vert \frac{|k|}{ik}\left(  e^{|k|(t-1)}-e^{-|k|(t-1)}\right)
\right\vert \int_{t}^{\infty}e^{-|k|(s-1)}|k|^{\beta}\frac{1}{s^{\delta}}%
\mu_{\alpha,r}(k,s)~ds  &  \leq\frac{\mathrm{const.}}{t^{\delta-1+\beta}}%
\mu_{\alpha,r}(k,t)~,
\end{align*}
uniformly in $t\geq1$ and $k\in\mathbb{R}$.
\end{proposition}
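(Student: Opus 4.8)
The plan is to reduce both displayed inequalities to an elementary one-variable estimate. First, the second inequality is an immediate consequence of the first: for real $k\neq0$ one has $\left|\frac{|k|}{ik}\right|=1$, and for $t\geq1$ the quantity $e^{|k|(t-1)}-e^{-|k|(t-1)}$ is nonnegative and bounded above by $e^{|k|(t-1)}$, so the left-hand side of the second inequality is dominated by that of the first. It therefore suffices to prove the first inequality.

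For the first inequality the two structural facts are: (i) for $s\geq t$ one has $e^{|k|(t-1)}e^{-|k|(s-1)}=e^{-|k|(s-t)}\leq1$; and (ii) for fixed $k$ the function $s\mapsto\mu_{\alpha,r}(k,s)=\frac{1}{1+(|k|s^{r})^{\alpha}}$ is nonincreasing on $[1,\infty)$, hence $\mu_{\alpha,r}(k,s)\leq\mu_{\alpha,r}(k,t)$ whenever $s\geq t\geq1$. Pulling $\mu_{\alpha,r}(k,t)$ out of the integral, it remains to show that
\[
|k|^{\beta}\int_{t}^{\infty}e^{-|k|(s-t)}\frac{1}{s^{\delta}}\,ds\leq\frac{\mathrm{const.}}{t^{\delta-1+\beta}}
\]
uniformly in $t\geq1$ and $k\in\mathbb{R}$; note that the parameter $r$ has disappeared from the problem.

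To prove this I distinguish the two cases $|k|t\leq1$ and $|k|t>1$. When $|k|t\leq1$, we have $|k|^{\beta}\leq t^{-\beta}$ and, since $\delta>1$, $\int_{t}^{\infty}e^{-|k|(s-t)}s^{-\delta}\,ds\leq\int_{t}^{\infty}s^{-\delta}\,ds=\frac{t^{1-\delta}}{\delta-1}$, so the left-hand side is at most $\frac{1}{\delta-1}t^{-(\delta-1+\beta)}$. When $|k|t>1$, we use $s^{-\delta}\leq t^{-\delta}$ for $s\geq t$ together with $\int_{t}^{\infty}e^{-|k|(s-t)}\,ds=\frac{1}{|k|}$ to get the bound $|k|^{\beta}t^{-\delta}|k|^{-1}=(|k|t)^{\beta-1}t^{-(\delta-1+\beta)}\leq t^{-(\delta-1+\beta)}$, where the last step uses $\beta\leq1$ and $|k|t>1$. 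This proves the first inequality, and with it the second.

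I do not anticipate any genuine difficulty: the argument is wholly elementary, and the hypotheses $\delta>1$ (convergence of $\int_{t}^{\infty}s^{-\delta}\,ds$) and $\beta\in[0,1]$ (so that $(|k|t)^{\beta-1}\leq1$ when $|k|t>1$) are each used exactly once. The only slightly delicate point, the case of non-integer $\beta\in(0,1)$, is already covered by the computation above, but one may also reach it, as indicated in the discussion at the start of this subsection, by Hölder's inequality: write $|k|^{\beta}=(|k|)^{\beta}\,1^{1-\beta}$ and interpolate between the endpoint estimates at $\beta=1$ and $\beta=0$. I note finally that, unlike the $\frac{\kappa}{ik}$ factor in the companion Proposition~\ref{prop:sgL3}, the factor $\frac{|k|}{ik}$ here has modulus $1$, which is why no power of $t$ is lost in the second inequality.
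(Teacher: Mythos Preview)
Your proof is correct. The paper itself does not give a proof of this proposition but refers to \cite{Hillairet.Wittwer-Existenceofstationary2009} for the case $\beta\in\{0,1\}$ and states that the intermediate range $\beta\in(0,1)$ is reached via H\"older's inequality. Your argument follows the same basic mechanism (monotonicity of $\mu_{\alpha,r}$ in $s$, then the one-variable bound), but you handle all $\beta\in[0,1]$ in one stroke by the dichotomy $|k|t\leq1$ versus $|k|t>1$, which makes the separate H\"older interpolation step unnecessary; the two approaches are essentially equivalent, yours being slightly more self-contained.
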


\bibliographystyle{amsplain}
\bibliography{CompleteDatabase2011}

\end{document}